\documentclass{article}
\usepackage[a4paper, left=1in, right=1in, top=1in, bottom=1in]{geometry}
\usepackage{jd}
\usepackage{authblk}
\usepackage{hyperref}

\title{A Separation Between Types of Quantum Oracle Separations}
\date{}

\author[1]{Scott Aaronson}
\author[2]{Adam Bouland\footnote{abouland@stanford.edu}}
\author[2]{Jordan Docter\footnote{jdocter@stanford.edu}}
\author[3]{Barak Nehoran}
\affil[1]{Department of Computer Science, University of Texas Austin}
\affil[2]{Department of Computer Science, Stanford University}
\affil[3]{Department of Computer Science, Columbia University}

\begin{document}
\maketitle

\begin{abstract}

Recent works have demonstrated that quantum oracles have subtle behavior, as access to inverse, conjugate or controlled queries can exponentially change the query complexity of certain tasks. Inspired by these works, we introduce the notion of meta-complexity of quantum relativization. We ask: for any two quantum complexity classes, under which ``types'' of quantum oracles are they equal or separated? Different \textit{pantheons} of oracles (or quantum oracle types, e.g.  unitary vs state, poly- vs superpoly-dimensional, closed under inverse or not) form a partially ordered set based on their power in separating complexity classes. Moreover, two oracle pantheons A and B are separated if there exists a pair of complexity classes that are separated under an oracle from pantheon A but yet the complexity classes are equivalent under all oracles from pantheon B.

We show that this meta-complexity can be nontrivial by giving a complete
classification, within the family of oracle pantheons defined in this paper,
of which models can separate the complexity classes $\PostBQP$ and
$\PreciseBQP$, the exponentially precise variant of $\textsf{BQP}$.  Both
classes equal $\textsf{PP}$ in the unrelativized setting.  Within our taxonomy,
they remain equal relative to real or polynomial-dimensional unitary oracles
and whenever inverse or conjugate access is supplied.  In contrast, we give a
separation relative to forward-only complex diagonal unitaries of
superpolynomial dimension, as well as a separation relative to single-qubit
state-preparation oracles.  In particular, we separate the pantheon of
inverse-less diagonal unitaries from the pantheon of classical oracles.
Our separation is inspired by the recent construction of low-depth multiplicative unitary designs, though for the separation we use a forrelation-like problem with complex variables which may be of independent interest.
We view this as a test case for the meta-complexity of oracles which underscores the subtlety inherent to the relativization of quantum complexity classes.

\end{abstract}

\tableofcontents

\section{Introduction}

Oracles are central to complexity theory, shedding light on the power of different models of computing as well as establishing proof barriers for separating classes \cite{baker1975relativizations}.
Quantum oracles are a natural generalization of classical oracles, where instead of computing a function in a single step, one can instead apply a unitary transformation, or prepare a quantum state, in a black-box manner.
Unitary quantum oracles were first introduced by Aaronson and Kuperberg \cite{aaronsonkuperberg} to study the power of quantum proof systems, and are extremely natural for studying the complexity of quantum tasks like learning properties of quantum transformations \cite{aaronsonbarbados}.
Several recent applications of unitary oracles in complexity theory have included limits on the ability either to amplify quantum proof systems  \cite{aaronson2008perfect,aaronson2025limits} or to reduce $\textsf{QMA}$ to have a unique witness \cite{anshu2026complexity} in a quantum relativizing manner. 
Until recently, it has been an open question whether even the separation that served as the original application of unitary oracles in~\cite{aaronsonkuperberg} could survive relative to a classical oracle, with a long line of works showing that it held relative to various types of oracles, such as under an in-place permutation oracle~\cite{fefferman2015quantum}, a classically-accessible oracle~\cite{li2023classical}, a distributional oracle~\cite{natarajan2024distribution}, and most recently, under a classical oracle~\cite{bostanci2025qcma}.
It is unclear if many of the other results that are known relative to unitary oracles also hold relative to such other oracle types.
Another oracle model that has been of significant recent use in quantum cryptography is that of quantum state oracles~\cite{morimae2024unconditionally,qian2024unconditionally}, which prepare a fixed quantum state in a single timestep. Such oracles have served as an important tool for separating cryptographic primitives in microcrypt~\cite{ananth2024cryptography,chen2025power,bostanci2025oracle,behera2025newworld}, and it has been shown that in certain cases, such results can be upgraded to work under a unitary oracle~\cite{behera2025newworld,goldin2025translating}. 

Recently, there has been an intense interest in the behavior of quantum oracles, exploring how changes to the definitions of quantum oracles -- such as allowing vs disallowing the ability to invert the oracle -- can change the difficulty of quantum tasks for polynomial-time quantum algorithms~\cite{zhandry2025model}. Earlier evidence for this sensitivity was given by \cite{BassirianFeffermanMarwaha}, who showed that the complexity of quantum property-testing problems can depend on whether the underlying function is presented through a standard, in-place, or phase oracle, including randomized variants.
More recently, several works have shown exponential advantages from giving inverse, conjugate \cite{tang2025conjugate}, or controlled \cite{tang2025controlled} unitary access, and shown certain algorithms like amplitude amplification lose their speedups without inverse access \cite{tang2025amplitude}.

\subsection{Our results}

In this work we explore how quantum oracle behavior affects quantum complexity theory.
We initiate the study of the meta-complexity of quantum relativization.
Rather than studying how quantum oracles affect $\textsf{BQP}$, we instead ask how different types of oracles affect the relationships \emph{across} different quantum complexity classes.
For any two quantum complexity classes, we ask: which types of quantum or classical oracles can be used to separate the classes?
The different types of quantum oracles (unitary vs state, inverse-closed or not, real vs. complex, etc.), which we call \emph{oracle pantheons},%
\footnote{
    In Greek, a pantheon is a collection of all the gods common to a religion. Here, we take a pantheon to be a collection of oracles that share some chosen structural properties.
}
form a partially ordered set based on the ease of proving oracle separations.\footnote{In particular, if a pantheon A of quantum oracles contains a pantheon B of quantum oracles, then a separation with respect to pantheon B lifts to pantheon A.}
Thus for any two quantum complexity classes, it is natural to ask if they are equal vs unequal with respect to different pantheons of quantum oracles. We say that two pantheons A and B are separated if there exist a pair of complexity classes such that the complexity classes are separated relative to some oracle from pantheon A, but yet the two complexity classes are equal relative to all oracles of pantheon B. Likewise, two pantheons are equivalent if separations relative to one always imply separations relative to the other, and vice versa.%
\footnote{
    Pantheons that are defined in different ways may nevertheless be equivalent in their ability to distinguish complexity classes. As an example of this, let the first pantheon be the set of in-place permutation oracles which come with access to the (in-place) inverse, and let the second pantheon be the set of out-of-place permutations, also with access to the inverse. Queries to one can be used to simulate queries to the other with only a factor of 2 overhead. Thus, any separation that holds relative to one also holds relative to the other. 
}
This provides a fine-grained view of quantum relativization which does not have a full classical analogue.

To provide an example of this framework, we study the quantum relativization of two closely related quantum complexity classes -- $\PostBQP$ and $\PreciseBQP$.
These classes capture the power of quantum computing with postselection, or with an inverse exponential acceptance probability gap, respectively.
It has been previously shown that both of these classes are equal to $\PP$ with respect to all classical oracles. 
In particular, Aaronson showed that $\PostBQP$ is equal to the classical class $\PP$ \cite{aaronson2005quantum}, and this proof classically relativizes. Likewise Watrous \cite{watrous2008quantum} observed that $\PreciseBQP$ is equal to $\textsf{PP}$ in a classically relativizing manner as well.
Our first observation is that the story becomes more complicated in the quantum oracle setting. In particular, Aaronson's proof that $\PP\subseteq \PostBQP$ does not quantumly relativize. At a high level this is because Aaronson's proof uses the fact that classical oracles allow one to explicitly read out the entries of their corresponding unitary. In contrast with quantum oracles, giving query access to $U$ does not give access to the entries $U_{ij}$, indicating that the story becomes more interesting in quantum setting. 

We show that the meta-complexity of relativization is quite subtle for these
classes by giving a complete classification, within the taxonomy defined
below, of which of these oracle pantheons can separate them.
First, we show that $\PostBQP$ and $\PreciseBQP$ remain identical with respect to unitary oracles of many different types - in particular, if the unitary is either of polynomial dimension, or real, or if one is given complex conjugate or inverse query access. 
On the other hand, we then show that these classes can be separated by a unitary oracle that is complex, superpolynomial dimensional, and without inverse or conjugate access. In fact, we show that even a diagonal unitary of this form can separate them.

\begin{thm}[Informal] 
\label{thm:informal-equality}
$\PostBQP$ and $\PreciseBQP$ are equal with respect to unitary oracles which are real, or of polynomial dimension, or whenever conjugate or inverse queries are allowed.
\end{thm}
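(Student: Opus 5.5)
We prove the two inclusions separately. The inclusion $\PostBQP^U \subseteq \PreciseBQP^U$ should hold for \emph{every} unitary oracle $U$, with no structural assumptions. Given a postselected circuit $C$ with postselection qubit $p$ and output qubit $o$, let $a=\Pr[p=1,o=1]$ and $b=\Pr[p=1,o=0]$. These are exponentially small, but their ratio is bounded away from $1$. A precise circuit can sample from $C$ once and behave as follows: if $p=1$ and $o=1$, accept; if $p=1$ and $o=0$, reject; if $p=0$, accept with probability exactly $1/2$. The acceptance probability is then $1/2+(a-b)/2$, which differs from $1/2$ by at least $2^{-\mathrm{poly}}$, because $a+b\ge 2^{-\mathrm{poly}}$ and $a\ge 2b$ or $b\ge 2a$. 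This uses only forward queries, so it relativizes to all unitary oracles. The work is therefore in showing $\PreciseBQP^U\subseteq\PostBQP^U$ under each hypothesis.

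The precise circuit has acceptance probability $p=\|\Pi Q\ket{0}\|^2$, where $Q$ contains queries to $U$. Our plan is to build a postselected circuit that estimates $p-1/2$ in sign. Classically, Aaronson's trick needs the amplitudes as numbers, and here we do not have them. The natural substitute is to construct states whose amplitudes encode $p$ itself, and then use Aaronson's postselected sign-determination gadget. That gadget distinguishes $\alpha>\beta$ from $\alpha<\beta$ given copies of $\alpha\ket0+\beta\ket1$ with real, possibly exponentially small, $\alpha$ and $\beta$.

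The key step is to produce a state with amplitude proportional to $p$. Here $p$ is a quadratic, sesquilinear quantity in the entries of $U$, involving $U$ and $\bar U$. We handle each hypothesis as follows.
\begin{itemize}
\item \emph{Inverse or conjugate access.} The Hadamard-test / swap-test style construction runs $Q$ on one register and $\bar Q$ on another. Applying $\Pi$ and then projecting onto the maximally entangled state yields amplitude $\propto \sum_{ij}(\Pi Q)_{ij}\overline{(\Pi Q)_{ij}}\ldots$, which we arrange to equal $p$. With conjugate access $\bar Q$ is directly implementable. With inverse access we instead use $Q^\dagger$ and transposition via the entangled-state trick: $(\bar Q)\otimes I\ket{\Phi}=I\otimes Q^\dagger\ket{\Phi}$. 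Postselecting onto $\ket{\Phi}$ incurs only exponentially small success probability, which is free in $\PostBQP$.
\item \emph{Real $U$.} Here $\bar U=U$, so the same circuit works with forward queries only.
\item \emph{Polynomial-dimensional $U$.} We use postselected process tomography instead. On a Choi state we postselect to apply $U$ and then perform exponentially precise tomography. This can be done in $\PostBQP$: postselecting over polynomially many copies computes each entry of $U$ to $\exp(-\mathrm{poly})$ precision in a sign-testable way. Alternatively, we postselect to teleport and thereby realize $\bar U$, using the maximally entangled state on polynomially many qubits. Teleportation gives $U^T$ or $\bar U$ up to Pauli corrections, and postselecting on the trivial outcome removes them. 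Then $\bar U$ is available, and we reduce to the conjugate case.
\end{itemize}

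Once a state $\propto p\ket0+\tfrac12\cdot c\ket1$ is available, with the normalization $c$ matched carefully, Aaronson's gadget decides the sign of $p-1/2$. The main obstacle is this normalization. The amplitudes produced by the entangled-projection trick carry dimension factors such as $1/\dim$, and the gap is only $2^{-\mathrm{poly}}$. The comparison constant must therefore be implemented exactly, with an identical dimensional factor, rather than approximately. We will address this by generating the reference amplitude with the same projection circuit applied to a trivial, query-free instance. This makes the two branches carry identical normalization, so that only the exact difference $p-1/2$ is tested.
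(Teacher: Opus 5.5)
\textbf{Main gap: the polynomial-dimensional case.} Your easy inclusion and your conjugate, inverse and real cases are fine in outline. They mirror the paper's route: you reduce everything to conjugate access and read off $p$ as the overlap of $Q\ket{0}$ and $\bar Q\ket{0}$ through $\ket{\Phi}$, whereas the paper reduces conjugate access to inverse access by transposing $\bar U$ through $\ket{\Phi}$ and reads off $p$ by running $Q^\dagger$ after $Q$. For polynomial dimension, however, neither of your two alternatives works.
\begin{itemize}
\item \emph{Teleportation.} Gate teleportation through forward queries only ever yields $U$ or $U^T$. For example, $(\bra{\Phi}_{SA}\otimes I_B)(\ket{\psi}_S\otimes(U_A\otimes I_B)\ket{\Phi}_{AB})=\frac1d U^T\ket{\psi}_B$, and no Pauli correction turns a transpose into a conjugate. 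More fundamentally, with forward queries only, every amplitude of a postselected circuit is a holomorphic polynomial in the entries of $U$ (no conjugates) of degree at most the number of queries. So no fixed gadget outputs $\bar U$ with a $U$-independent prefactor. Note also that your teleportation argument never uses $d=\poly(n)$. If it were valid it would work for $d=2^n$ too, and combined with the conjugate-access equality it would contradict the paper's separation by diagonal unitaries (for which $U^T=U$ anyway).
\item \emph{Tomography.} This alternative is circular. Deciding an entry such as $|U_{ij}|^2$ to $2^{-\poly(n)}$ precision is itself a $\PreciseBQP^U$ threshold question, i.e.\ a special case of the containment you are proving. For generic states, the paper's single-qubit lower bound shows postselection cannot do this with polynomially many copies.
\end{itemize}
\textbf{The missing idea} is that $\bar U$ sits inside $U^{\otimes(d-1)}$. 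Let $\ket{\bar x}$ be the antisymmetrization of the $d-1$ basis values other than $x$, and set $V=\sum_x\ketbra{\bar x}{x}$. The matrix elements $\bra{\bar y}U^{\otimes(d-1)}\ket{\bar x}$ are cofactors of $U$, and Cramer's rule gives $V^\dagger U^{\otimes(d-1)}V=\det(U)\,\bar U$. This is holomorphic of degree $d-1$, consistent with the obstruction above. It costs $d-1$ queries per simulated conjugate query, and $V$ compiles to accuracy $\delta$ with $\poly(d,\log(1/\delta))$ gates, so it is efficient exactly when $d=\poly(n)$.

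\textbf{Secondary issue: normalization.} The obstacle you flag is not the real one, and your proposed fix creates problems of its own. As described, superposing a branch that runs $Q\otimes\bar Q$ with a query-free reference branch needs controlled queries, which the statement does not grant. Moreover, a simulated $\bar U$ that is correct only up to the $U$-dependent phase $\det(U)^{-1}$ would have to carry that phase identically in both branches. Both issues disappear if you copy the output qubit of $Q$ into a fresh qubit before projecting onto $\ket{\Phi}$. The single postselected branch then leaves that qubit in $\frac{1}{\sqrt D}\bigl((1-p)\ket{0}+p\ket{1}\bigr)$ up to a common phase, where $D$ is the dimension of $Q$'s register, so no normalization has to be matched. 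Next apply a fixed query-free contraction $\frac12\begin{pmatrix}-\eta & 1-\eta\\ \lambda & \lambda\end{pmatrix}$ by dilation and postselection. Here $\eta$ is a dyadic approximation of $(c+s)/2$ and $\lambda$ ranges over $O(r)$ dyadic scales. This converts the $2^{-r}$ gap into a bounded-error sign test; it is the paper's postselected thresholding lemma, and compilation errors are absorbed by its finite-gate lemma.
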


\begin{thm}[Informal]
There is a diagonal unitary oracle $U$ such that $\PostBQP^U \subsetneq \PreciseBQP^U$, where $U$ is complex, superpolynomial dimensional, and inverse and conjugate queries are disallowed.
\end{thm}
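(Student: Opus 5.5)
The plan has two parts. First, the inclusion $\PostBQP^U \subseteq \PreciseBQP^U$ should hold for every unitary oracle $U$. Second, we construct a specific diagonal phase oracle, together with a language that lies in $\PreciseBQP^U$ but not in $\PostBQP^U$.

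\emph{The inclusion.} Given a postselected circuit $C$ with postselection register $p$ and output $o$, let $P_1=\Pr[o=1,p=1]$ and $P_0=\Pr[o=0,p=1]$. Correctness of $C$ says $P_1\ge 2P_0$ on yes instances and $P_0 \ge 2P_1$ on no instances. We build a single circuit $C'$ that runs $C$ once, with forward queries only. It outputs a fresh coin if $p=0$. If $p=1$, it outputs $o$ with suitable biased weights, chosen so that its acceptance probability is $\tfrac12+\alpha(P_1-2P_0)$ in one regime and $\tfrac12-\alpha(P_0-2P_1)$ in the other. A standard rebalancing gives threshold $\tfrac12$ with gap $\Omega(P_1+P_0)\ge 2^{-\mathrm{poly}}$, because the postselection probability is at least $2^{-\mathrm{poly}}$ (after the usual gate-set normalization). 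Nothing here inspects matrix entries of $U$, so the argument relativizes to arbitrary unitary oracles.

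\emph{The oracle and the language.} Take $U_f=\sum_x f(x)\ket{x}\bra{x}$ on $n$ qubits, with $f(x)$ a complex root of unity of superpolynomial order. The language is a complex forrelation-type promise problem in the style of \cite{aaronson2005quantum}'s Fourier checks. Take for instance the quantity
\[\Phi(f)=\Bigl|2^{-n}\textstyle\sum_x f(x)\,\chi(x)\Bigr|^2 ,\]
where $\chi$ is a fixed public phase pattern. A $\PreciseBQP$ machine computes $\Phi(f)$ exactly as the acceptance amplitude-squared of $H^{\otimes n}\,\mathrm{diag}(\chi)\,U_f\,H^{\otimes n}$ applied to $\ket{0^n}$. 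The promise is $\Phi(f)\ge \tau$ versus $\Phi(f)\le \tau-2^{-\mathrm{poly}(n)}$, with $\tau$ a fixed constant-sized threshold. This uses only one forward query, so the language is in $\PreciseBQP^{U}$ for every $f$.

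\emph{The lower bound.} This is the main step. Fix a $T$-query $\PostBQP$ machine with forward-only access to a diagonal oracle. Every amplitude is then a holomorphic polynomial of degree $\le T$ in the variables $f(x)$. Since $|f(x)|=1$, we have $\bar f(x)=f(x)^{-1}$, so $P_1-2P_0$ is a Laurent polynomial $Q(f)=v(f)^\dagger A\, v(f)$. Here $v(f)$ is the vector of monomials of degree $\le T$, so every monomial of $Q$ has positive and negative parts each of total degree $\le T$. Correctness means that $Q$ sign-represents the promise problem, with no lower bound on $|Q|$ beyond $2^{-\mathrm{poly}}$ times the postselection mass.

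I would exploit the forward-only structure as follows. Rotating $f$ by a random $x$-dependent ``multiplicative design'' phase $f(x)\mapsto \omega^{h(x)}f(x)$ with $h$ drawn from a $2T$-wise independent family preserves the algorithm's view in distribution on balanced monomials. However, it moves $\Phi$ across the threshold, because $\Phi$ is a full-support (degree $2^n$-correlated) quantity. Concretely, I would construct two distributions $D_{\rm yes}$ and $D_{\rm no}$ over promise instances whose low-degree Laurent moments agree. Then I would argue via a complex analogue of the sign-degree/rational-approximation bound that no such $Q$ can be positive on all of $\mathrm{supp}(D_{\rm yes})$ and negative on all of $\mathrm{supp}(D_{\rm no})$. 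The statement is pointwise, so moment matching alone does not suffice. I would try to show that the normalized form $Q/(P_0+P_1)$, averaged over each distribution, must coincide up to an error of $T\cdot$(design error), contradicting the opposite signs at every point. A standard diagonalization over all polynomial-time $\PostBQP$ machines, fixing $f$ on fresh input lengths, then yields a single $U$ with the language outside $\PostBQP^U$.

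I expect the main obstacle to be this pointwise sign-representation lower bound. Postselection makes the algorithm insensitive to the overall scale of $Q$, so I must rule out Laurent polynomials of bidegree $(T,T)$ whose sign tracks $\Phi$. That requires the construction to use genuinely complex, superpolynomially fine phases. With real or $\pm1$ phases, or with inverse or conjugate queries, the holomorphic restriction disappears, and Theorem~\ref{thm:informal-equality} says the separation must fail there.
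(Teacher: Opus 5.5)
\textbf{There is a genuine gap: the lower bound, which is the whole content of the theorem, is not established, and the route you sketch for it does not work.}

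Your inclusion $\PostBQP^U\subseteq\PreciseBQP^U$, the one-query \PreciseBQP{} test, and the diagonalization are fine. (The inclusion needs no reweighting: output the postselected answer, or a fair coin when postselection fails.)

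The first problem is that exact agreement of low-degree Laurent moments between distributions on your yes and no sets is impossible. The quantity $\Phi(f)=4^{-n}\sum_{x,y}f(x)\overline{f(y)}\chi(x)\overline{\chi(y)}$ is itself a bidegree-$(1,1)$ Laurent polynomial. It must average to at least $\tau$ on one side and at most $\tau-2^{-\mathrm{poly}(n)}$ on the other. Your remark that pointwise correctness defeats moment matching is backwards. The condition $P_1\geq 2P_0$ is linear and survives averaging, so exact matching would give an immediate contradiction; the trouble is that exact matching cannot hold.

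Approximate matching with additive error $T\cdot(\text{design error})$ does not rescue this. That error is at least the \PreciseBQP{} gap. An algorithm may postselect with probability $2^{-p(n)}$ for a polynomial $p$ larger than the one defining that gap, so additive bounds say nothing about $\mathbb{E}P_1/\mathbb{E}P_0$. Averaging the ratio $Q/(P_0+P_1)$, as you propose, is not controlled by moments at all.

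What is needed, and what the paper proves, is a \emph{multiplicative} comparison
$e^{-\delta}\mathbb{E}_{D_{-a}}P\leq\mathbb{E}_{D_a}P\leq e^{\delta}\mathbb{E}_{D_{-a}}P$ with $\delta=o(1)$. It must hold for every sum $P$ of squared moduli of holomorphic polynomials of degree at most $T$. It is applied separately to the joint accept and joint reject probabilities. Averaging $P_1\geq 2P_0$ and $P_0\geq 2P_1$ then forces $e^{2\delta}\geq 4$, a contradiction.

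Obtaining that comparison uses two ingredients your plan lacks, and your choice of problem blocks the first.

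\emph{Localization.} The paper's oracle has two blocks, $U\ket{0,i}=w_i\ket{0,i}$ and $U\ket{1,i}=v_i\ket{1,i}$. The promise $\operatorname{Re}\sum_i\overline{w_i}v_i=\pm a$ depends only on $z_i=\overline{w_i}v_i$. It is therefore invariant under an independent common phase on each pair $(w_i,v_i)$. Averaging over these $d$ phases, using orthogonality of the characters $w^\gamma$, splits every $|p_j|^2$ into nonnegative pieces $|F_{j,\gamma}(z)|^2$, each depending on at most $T$ coordinates. Your $\Phi$ is invariant only under a global phase and $\chi$-twisted permutations. Phase averaging therefore yields only homogeneous components that still couple exponentially many coordinates, and no reduction to nonnegative local terms is available. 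Your constant threshold $\tau$ also places the promise sets in a large-deviation regime, rather than at the flat center of the distribution where the paper puts $\pm a$ with $a=O(1)$ and $d=2^n$.

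\emph{Relative-error marginals.} For such local nonnegative terms, the paper shows that the conditional marginals are pointwise multiplicatively close. The density ratio satisfies
$f_{d-k}(a-s)/f_{d-k}(a+s)=e^{4|a|T/(d-T)+O(T/(d-T)^{3/2})}$, which follows from a log-derivative estimate on the density of $\sum_i\cos\theta_i$ near its center.

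Without a symmetry-driven localization and a relative-error marginal bound of this kind, the ``complex analogue of sign-degree'' you invoke is exactly the statement that remains unproved.
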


We also give a related separation in the state-preparation oracle setting, where even a simple complex one-qubit state oracle can separate the classes:

\begin{thm}[Informal] There is a state-preparation oracle that prepares a complex single-qubit state $\ket{\psi}$ such that $\PostBQP^{\ket{\psi}} \subsetneq \PreciseBQP^{\ket{\psi}}$.%
\footnote{
    We use the notation $\textsf{A}^{\ket{\psi}}$ to indicate oracle access to the state-preparation oracle that takes no input and prepares the state $\ket{\psi}$ at each query.
}
\end{thm}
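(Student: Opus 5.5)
The plan is to encode a classical language into the phase of a single complex qubit, and then to exploit the fact that, without conjugate access, the acceptance behaviour of any algorithm is a restricted kind of polynomial in the amplitudes. Write $\ket{\psi}=\cos\alpha\ket{0}+e^{i\phi}\sin\alpha\ket{1}$, where the real parameters $\alpha,\phi$ are fixed later by diagonalization. A machine making $k$ queries holds $\ket{\psi}^{\otimes k}$ together with ancillas. Every measurement probability it produces is therefore a Hermitian form in $\ket{\psi}^{\otimes k}$: a polynomial of bidegree $(k,k)$ in $(\psi,\bar\psi)$, or equivalently a trigonometric polynomial of degree at most $k$ in $\phi$ whose coefficients are polynomials in $\cos\alpha,\sin\alpha$.

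\textbf{Upper bound.} I will define a unary language $L_\psi$ such that membership of $1^n$ is a fixed test on a quantity of the form $\mathrm{Re}(c_n\,\bar\psi_0^{\,n}\psi_1^{\,n})$ with known $c_n$. Concretely, $1^n\in L_\psi$ iff this quantity is at least $2^{-n}$, and we promise it is at most $-2^{-n}$ otherwise. The $\PreciseBQP$ algorithm is a Hadamard/swap test on $n$ copies of $\ket{\psi}$ against a classically specified state. Its acceptance probability is $\tfrac12$ plus this quantity, and this exponentially small bias is exactly what $\PreciseBQP$ can detect.

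\textbf{Lower bound.} I would diagonalize over all $\PostBQP$ machines, shrinking a nested sequence of closed sets of admissible $(\alpha,\phi)$. Each stage fixes the membership of a fresh input $1^{n}$, with $n$ much larger than the running time of the current machine $M$, in whichever way makes $M$ err. Conditioned on postselection, $M$ accepts with probability $A(\psi)/B(\psi)$, where $A$ and $B$ are nonnegative forms of bidegree $(k,k)$ with $k=\mathrm{poly}(n)$. For $M$ to be correct on the promise, $A/B$ must be at least $2/3$ on the yes region and at most $1/3$ on the no region. Two facts are needed:
\begin{itemize}
\item the yes and no regions are interleaved at scale $2^{-n}$ in $\phi$, inside every neighborhood we are still allowed to use;
\item a ratio of nonnegative degree-$k$ trigonometric polynomials cannot cross the $1/3$--$2/3$ band more than $O(k)$ times on such a neighborhood.
\end{itemize}
Together these let us pick a subneighborhood on which $M$ answers incorrectly for input $1^{n}$. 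We then shrink to that subneighborhood and continue with the next machine.

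The interleaving step is essentially an equidistribution statement for the phases $n\phi$. The second fact is a root-counting statement. The equation $3A-B=0$ (and likewise $3A-2B=0$) is a nonzero trigonometric polynomial of degree at most $k$ in $\phi$ on a fixed slice of $\alpha$, so it has $O(k)$ zeros. That is far fewer than the roughly $2^{n}$ sign alternations of $\cos(n\phi+\arg c_n)$ in any fixed window, once $n\gg k$ and $\alpha$ is restricted to a small range. I also need to rule out degenerate cases where $B$ vanishes identically or $A/B$ is constant. In both cases $M$ trivially fails on one of the two regions.

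\textbf{Main obstacle.} The hard part is reconciling the two halves. Postselection can manufacture very high-degree behaviour. This is exactly how Aaronson's proof of $\PP\subseteq\PostBQP$ works, and the real case shows it succeeds whenever conjugates are available. So the counting argument must genuinely use that $A$ and $B$ depend on $\bar\psi$ only through the $k$ explicit copies. The tension is that a degree-$k$ form already reaches the same $\phi$-frequency $n$ that $\PreciseBQP$ uses, once $k\ge n$. The proof therefore cannot simply compare degrees. It must show that a bounded-error postselected ratio cannot track the sign of $\mathrm{Re}(c_n\bar\psi_0^{\,n}\psi_1^{\,n})$ when its magnitude is as small as $2^{-n}$.

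My first attempt would be to choose $\alpha$ so that $\sin\alpha$ is tiny. Then every frequency-$n$ term in $A$ and $B$ is suppressed by $(\sin\alpha)^{n}$ relative to the lower-frequency terms, while the $\PreciseBQP$ gap remains $2^{-\mathrm{poly}(n)}$. I would then argue that the ratio $A/B$ cannot amplify such a suppressed term to constant bias: postselecting onto the high-frequency part also requires the conjugate phase in order to cancel the lower-order contributions. If this fails, the fallback is to replace the single frequency $n$ with a forrelation-style sum of phases over many hidden bits of $\phi$.
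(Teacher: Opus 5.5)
\textbf{There is a genuine gap in your lower bound, and neither route you sketch closes it.}

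\emph{The root-counting step fails.} You cannot take $n\gg k$: the machines you must defeat run in time $\mathrm{poly}(n)$ on $1^n$ and may make $k\geq n$ queries. Meanwhile $\cos(n\phi+\arg c_n)$ changes sign only $2n$ times per period. The yes and no regions alternate at scale $\pi/n$, and only the excluded strips around the zeros are exponentially thin. So an $O(k)$ crossing bound gives nothing.

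\emph{The repair via small $\sin\alpha$ also fails.} Postselection undoes the $(\sin\alpha)^n$ suppression without any conjugate access. Take $n$ copies, apply a known filter $\mathrm{diag}(x,1)$ to each by a dilation plus postselection, and postselect onto $\mathrm{span}\{\ket{0^n},\ket{1^n}\}$ via parity checks. This is a computational-basis projection, so discarding the lower-frequency components needs no conjugate phase. The surviving unnormalized state is
\[
  u\ket{0^n}+e^{in\phi}v\ket{1^n},
  \qquad u=x^n\cos^n\alpha,\quad v=\sin^n\alpha .
\]
It is reached with probability $u^2+v^2\geq\sin^{2n}\alpha$. A GHZ-basis measurement then has conditional bias $uv\cos(n\phi)/(u^2+v^2)$, and a known phase gate handles $\arg c_n$. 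Let $x^n$ range over $1,\frac12,\ldots,2^{-\mathrm{poly}(n)}$ and combine the runs exactly as in \autoref{lem:postselected_thresholding}. This yields bias $\Omega(|\cos(n\phi+\arg c_n)|/\mathrm{poly}(n))$ with inverse-exponential postselection probability. The probability bound holds precisely because your own $\PreciseBQP$ gap forces $\sin^n\alpha\geq 2^{-\mathrm{poly}(n)}$.

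So your language is in $\PostBQP$ wherever $|\cos(n\phi+\arg c_n)|\geq 1/\mathrm{poly}(n)$. Any hardness would have to come from $\phi$ lying exponentially close to a sign boundary. That is a phase-thresholding problem, and none of root counting, the suppression heuristic, or the unspecified forrelation fallback gives an argument for it.

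\textbf{The missing idea is to hide the bit in the modulus and make the promise invariant under the phases.} Then the phases can be averaged away instead of carrying the signal. The paper uses $\sqrt{1-\tau}\,e^{i\theta}\ket{0}+\sqrt{\tau}\,e^{i\phi}\ket{1}$ with $\tau=\frac12$ versus $\tau=\frac12+2^{-n}$, for all $\theta,\phi$. It takes a separate such state at each input length, so the diagonalization is trivial and no single limiting state is needed. Measuring in the standard basis is the $\PreciseBQP$ algorithm.

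Because correctness holds for every phase, the inequalities $P<\frac13 Q$ and $P>\frac23 Q$ survive averaging the numerator and denominator separately over uniform $\theta,\phi$. Orthogonality of the characters $e^{i(s\theta+r\phi)}$ kills all cross terms. What remains is $\sum|c_{i,s,r}|^2(1-\tau)^s\tau^r$ with nonnegative coefficients. The logarithmic derivative of such a sum is a convex combination of the values $r/\tau-s/(1-\tau)$, hence $O(t)$ near $\tau=\frac12$. So the averaged ratio moves by at most $4t\epsilon/(1-2\epsilon)$ across the promise gap, forcing $t=\Omega(2^n)$.

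This positivity after phase averaging is the concrete form of ``no access to $\bar\psi$''. In your encoding the phase is the signal, so it cannot be averaged out and that structure is unavailable.
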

Interestingly, the equality results of \autoref{thm:informal-equality} show this simple single-qubit state separation cannot be lifted to an analogous single-qubit unitary separation.

One interesting aspect of our work is that we give an ``artificial'' quantum oracle separation -- i.e. a quantum oracle separation between classes that cannot be made with respect to a classical oracle (nor in the unrelativized setting). 
Among the closest prior examples of ``artificial'' quantum oracle separations in the literature are the following:

\begin{itemize}
    \item 
    In 2008 Aaronson \cite{aaronson2008perfect} showed a quantum oracle relative to which $\textsf{BQP}$ is not contained in $\textsf{EQEXP}$ (exact quantum exponential time), even though $\textsf{BQP}$ is contained in $\textsf{EQEXP}=\textsf{EXP}$ in a classically relativizing manner.
    \item 
    One can show a quantum oracle separation between $\textsf{QCMA}$ and $\textsf{QCMA}_1$ (i.e. with perfect completeness), even though $\textsf{QCMA}=\textsf{QCMA}_1$ \cite{jordan2011achieving}, by piecing together the results of \cite{jordan2011achieving} and \cite{aaronson2008perfect}.
    \item 
    Agarwal and Kundu established another bounded-error example of this phenomenon: although $\QMA\subseteq \textsf{polyQCPH}$ relative to every classical oracle, they construct a quantum unitary oracle $U$ for which $\QMA^U\nsubseteq \textsf{polyQCPH}^U$ \cite{agarwal2026nonstandardoraclesboundederrorcomplexity}. Their separation persists even when conjugate, transpose, and inverse queries are allowed; in contrast, our results show that the relationship between \(\PostBQP\) and \(\PreciseBQP\) depends sharply on which of these oracle-access modes is available.
\end{itemize}

Our work shows that quantum relativization can depend further on the type of quantum oracle model (state, conjugate, inverse, real, poly vs superpoly dimensional). Moreover, this gives the first formal separation between the pantheon of diagonal unitaries (without complex conjugates) and the pantheons appearing in~\autoref{thm:informal-equality}, as well as a separation between the pantheon of single-qubit state-preparation oracles and those in~\autoref{thm:informal-equality}.

Our results demonstrate that the questions recently raised about modeling quantum oracles \cite{zhandry2025model,tang2025controlled,tang2025amplitude} can have stark implications for quantum complexity theory, as the existence vs nonexistence of quantum relativization barriers can be model-dependent, depending subtly on what type of quantum oracle one is considering.
One cannot say a proof does vs. does not quantumly relativize, without specifying which quantum access model one is referring to.

At a high level, our choice of complexity classes is inspired by the recent construction of low-depth multiplicative unitary t-design construction of Schuster, Haferkamp and Huang \cite{schuster2025random}. 
Intuitively multiplicative designs can fool \PostBQP~algorithms, as they preserve the ratios of probabilities output by quantum algorithms.\footnote{Formalizing this ends up being more complicated than first appears, however!} On the other hand, low-depth designs are easy to distinguish from Haar random unitaries by $\PreciseBQP$ algorithms, because they are relatively low-entanglement, and a quantum algorithm can estimate $\Tr(\rho^2)$ for a subset of the matrix to gain an inverse exponential signal in the amount of entanglement in the construction.
This indicates that $\PostBQP$ might be weaker than $\PreciseBQP$ in unitary oracle settings, which inspired the study of these classes.
While our actual quantum oracle separation does not use \cite{schuster2025random}'s construction -- as it is insufficient for our purposes\footnote{This is because we need to show a bound vs polynomially many queries, and their construction loses its low entanglement properties when $t=\text{poly}(n)$. However, it is possible to prove a much weaker low-bounded query separation using their results, e.g. there are ensembles which $\PostBQP$ needs $\omega(\log n)$ queries to distinguish but $\PreciseBQP$ can distinguish with two queries.} -- our results show the separation is only possible in the precise settings (complex, inverse-free) where the SHH construction works.
We believe this is not a coincidence and might point to a deeper connection between these works.

\subsection{Proof sketch}

For ease of exposition, we first describe our one-qubit state-oracle separation.  Newman's theorem provides useful intuition for why the real and complex settings can behave differently, while our rigorous lower bound applies to the structured ratios of positive polynomials obtained by phase-averaging a postselected algorithm. We then explain why this one-qubit result does not immediately generalize to a one-qubit unitary separation, before describing our no-go results for such generalizations.
Finally we will describe the unitary separation for the complex, inverse-free case, which uses a complex form of Forrelation which might be of independent interest.

\subsubsection{A one-qubit state separation}\label{subsub:on_qubit_sep}
For the one-qubit state separation, it is helpful to recall the connection between $\PostBQP$, $\PreciseBQP$, and low-degree rational and threshold functions. 
Since the work of Beals et al. \cite{beals2001quantum}, quantum algorithms are intricately tied to low degree polynomials  -- i.e. the acceptance probability of a $t$-query quantum algorithm is a degree-$\leq 2t$ polynomial in the oracle entries.
The same fact extends to quantum oracles \cite{she2022unitary}, where the polynomial is now in the amplitudes of the state oracle, or the entries of the unitary oracle.
Adding power to the complexity class corresponding adds power to the corresponding polynomial measure of complexity.
For $\PostBQP$ the corresponding polynomials turn into rational functions.
In fact, for classical oracles, $\PostBQP$ query complexity and approximate rational degree are polynomially related by a result of Mahadev and de Wolf \cite{mahadev2014rationalapproximationsquantumalgorithms}.
For $\PreciseBQP$ the polynomials are upgraded to \emph{threshold polynomials}, ie polynomials which are $<1/2$ on no instances and $\geq 1/2 + 2^{-\text(poly)(n)}$.
Thus if we hope to separate 

In the case of classical oracles, however, no such separation is possible. This is related to a fundamental result in approximation theory known as Newman's theorem, which gives bounded low-degree rational functions that ``jump'' sharply across a small interval of the real line.
This stands in sharp contrast to polynomials which require degree $\Omega(1/\epsilon)$ to jump over such an interval.
Newman's theorem allows rational functions in a real variable to be highly sensitive to small changes in the input value.
This allows one to convert low-degree threshold polynomials -- which are slightly lower/higher on no/yes instances, by an inverse-exponential amount -- into low-degree rational functions by composing with a rational approximant.
As threshold polynomials are related to \PreciseBQP~query complexity, this allows one to show the equivalence of $\PostBQP$ and $\PreciseBQP$ query complexity for classical oracles, up to polynomial factors \cite{mahadev2014rationalapproximationsquantumalgorithms}. 
This gives a way of showing these classes are equivalent in the query model, in addition to being equivalent as complexity classes. 

This suggests a useful heuristic for a quantum-oracle separation.  A real
rational approximant can concentrate its rapid variation near the relevant
transition on the real line.  By contrast, suppose a rational expression in a
complex variable must distinguish two nearby radii uniformly over every
phase.  One might expect rapid variation in every angular direction to be
harder to achieve with only a few poles; \autoref{fig:complex-pole-obstruction}
illustrates this intuition.  We emphasize that we neither need nor prove a
general theorem of this form for arbitrary bounded rational functions on the
disk.  Instead, the state-oracle proof establishes the analogous obstruction
for a structured class of rational functions arising from postselected quantum
algorithms.

\begin{figure}[H]
\centering
\begin{tikzpicture}[x=1.18cm,y=1.18cm]
  \def\rin{2.00}
  \def\rout{2.34}

  \fill[blue!8,even odd rule]
    (0,0) circle (\rout) (0,0) circle (\rin);
  \foreach \ang in {0,15,...,345}
    \draw[gray!45,line width=0.25pt]
      (\ang:\rin) -- (\ang:\rout);
  \draw[->,gray!70] (-2.75,0) -- (2.95,0)
    node[below left,font=\small] {$\operatorname{Re}z$};
  \draw[->,gray!70] (0,-2.72) -- (0,2.82)
    node[below right,font=\small] {$\operatorname{Im}z$};
  \draw[blue!70!black,thick] (0,0) circle (\rin);
  \draw[blue!70!black,thick] (0,0) circle (\rout);
  \node[fill=white,inner sep=1pt,font=\small]
    at (110:1.8) {$|z|=\frac12$};
  \node[fill=white,inner sep=1pt,font=\small]
    at (105:2.5) {$|z|=\frac12+\epsilon$};

  \foreach \ang/\rad in
    {18/2.48,70/2.53,133/2.47,178/2.52,225/2.49,278/2.54}{
      \begin{scope}[shift={(\ang:\rad)}]
        \fill[red!18,opacity=0.65] (0,0) circle (0.34);
        \draw[red!60!black,dashed,line width=0.45pt]
          (0,0) circle (0.34);
        \node[text=red!75!black,font=\large] at (0,0) {$\times$};
      \end{scope}
    }
  \node[anchor=west,align=left,font=\small,text=red!70!black]
    at (2.63,1.48) {$p_1,\ldots,p_r$ and their\\local neighborhoods};
  \draw[->,red!60!black,line width=0.55pt]
    (2.67,1.40) -- (18:2.49);

  \draw[green!45!black,densely dashed,line width=0.55pt]
    (0,0) -- (325:2.65);
  \draw[<->,green!45!black,line width=1.0pt]
    (325:\rin) -- node[below right=-1pt,font=\small] {$\epsilon$}
    (325:\rout);
  \fill[green!45!black] (325:\rin) circle (1.7pt);
  \fill[green!45!black] (325:\rout) circle (1.7pt);
  \draw[->,gray!75,line width=0.45pt]
    (0.65,0) arc[start angle=0,end angle=-35,radius=0.65];
  \node[font=\small] at (342:0.90) {$\theta^*$};
  \node[anchor=west,align=left,font=\small,text=green!35!black]
    at (2.55,-1.55) {a pair far from every pole};
  \draw[->,green!45!black,line width=0.55pt]
    (2.57,-1.43) -- (325:2.20);
\end{tikzpicture}
\caption{A schematic pole-counting heuristic in the complex plane.  For every
angle $\theta$, the two promise circles contain a radial pair separated by
$\epsilon$.  The picture suggests that finitely many localized regions of
rapid variation (red) may leave a radial pair (green) uncovered. }
\label{fig:complex-pole-obstruction}
\end{figure}

We realize this intuition with a one-qubit state oracle.  Write the state as
\[
    \sqrt{1-\tau}\,e^{i\theta}\ket{0}
    +\sqrt{\tau}\,e^{i\phi}\ket{1}.
\]
For a $t$-query postselected algorithm, average the joint
accept-and-postselect probability and the total postselection probability
separately over $\theta$ and $\phi$.  The pointwise completeness and soundness
inequalities survive this averaging, while the phase average removes cross
terms between distinct monomials.  The two averaged probabilities therefore
have the form
\[
    p(\tau)=\sum_{s,r}A_{s,r}(1-\tau)^s\tau^r,
    \qquad
    q(\tau)=\sum_{s,r}B_{s,r}(1-\tau)^s\tau^r,
\]
where all coefficients are nonnegative and the exponents are $O(t)$.

On an interval bounded away from $0$ and $1$, the logarithmic derivative of
each monomial has magnitude $O(t)$.  Positivity makes the logarithmic
derivative of each sum a convex combination of these monomial derivatives, so
$|(\log p)'|,|(\log q)'|=O(t)$.  Consequently the conditional acceptance ratio
$p/q$ has derivative $O(t)$ in the relevant central interval.  A constant
change across an interval of width $\epsilon$ therefore requires
$t=\Omega(1/\epsilon)$.  This is the rigorous, structured statement behind
the Newman heuristic; it does not assert a lower bound for arbitrary complex
rational functions.

Taking $\epsilon=2^{-n}$ gives a state-oracle problem in $\PreciseBQP$ that
requires exponentially many queries for $\PostBQP$.
See \autoref{sec:state_sep} for the complete proof.

\subsection{No-go for unitary separations: polynomial dimensional}

Now that we have a one-qubit state separation in hand, it is natural to ask if it is possible to directly ``lift'' this to a one-qubit unitary separation. 
The most obvious way is simply to give the unitary preparing the state. 
Concretely, our state oracle separation is achieved with states of the form
\[ \alpha \ket{0} + \beta \ket{1}\]
where $\alpha$ is real, $\beta$  and the hard problem is to determine the norm of $\beta$ is above or below some exponentially close thresholds while its complex phase is uniform.
Why not then simply give access to the unitary preparing this state?
Up to global phase, this unitary is equal to $\left(\begin{matrix} \alpha && -\beta^* \\ \beta && \alpha^*  \end{matrix}\right)$.
This gives an immediate problem: giving access to the unitary preparing the state immediately gives access to the complex conjugates of the state's entries.
Our state-oracle lower bound exploits the restricted polynomial structure that
remains after averaging the unknown phases.  A unitary completion exposes the
conjugate amplitudes as well, allowing mixed expressions such as
$\beta\beta^*=|\beta|^2$.  The positive-polynomial argument above therefore
does not transfer directly.  In terms of the Newman heuristic, conjugate
variables allow the radial information to be represented by a real variable.

At first glance this obstruction appears superficial -- the fact the complex variables appear there ``in plain view'' is a special fact about $SU(2)$.
What if instead we go to 3, 4 or 5 dimensions? 
In these settings, simply writing out the matrix entries does not directly force any particular variable to be the complex conjugate of any other.
Thus, while our 1-qubit unitary approach fails, one might hope that an $O(1)$-qubit separation is possible.

It turns out, however, this is not possible, due to a more fundamental obstruction. 
Namely, surprisingly, we use a complex conjugation algorithm from \cite{miyazaki2019complex} to show that low-dimensional unitary oracles \emph{implicitly contain} information about their conjugate entries with small numbers of queries. This comes from a representation-theoretic identity, but we provide a simple and self-contained proof from first principles, which follows more along the lines of \cite{gavorova2024topological}.
The exact identity underlying the
construction uses $d-1$ queries to $U$ to implement its complex conjugate
$U^*$ up to a global phase.  In the fixed finite-gate model, the surrounding
isometries can be synthesized to error $\delta$ using
$\poly(d,\log(1/\delta))$ gates.  Thus, for any particular entry $U_{ij}$,
$d-1$ queries suffice to obtain $U^*_{ij}$ as an amplitude to error $\delta$;
when $d=\poly(n)$ and $\delta=2^{-\poly(n)}$, the total cost is polynomial in
$n$.
Thus low-dimensional unitaries cannot ``hide'' their conjugate variables from poly-time quantum algorithms.
The reduction from exact implementation of $U^*$ to black-box queries to $U$ takes advantage of the representation-theoretic properties of the unitary group, but we give the algorithm and proof from first principles (without invoking representation theory) in~\autoref{sec:poly_dim_equiv}. The main idea starts from the observation that when $\ket{\phi}$ is the fully anti-symmetric state on $d$ qudits, the corresponding matrix element of the tensor power $\bra{\phi} U^{\otimes d} \ket{\phi}$ is equal to the determinant of $U$. Moreover, we can similarly get the co-factors of $U$ (the determinants of the submatrices where one row and one column are removed) from letting the two states (the bra and the ket in the matrix element) be fully anti-symmetric on all but the relevant row or column. From there, we can see that implementing the complex conjugate of $U$ (up to a global phase) is a variant of the classic Cramer's Rule method for matrix inversion~\cite{cramer1750introduction}.

Using this fact, we then show that $\PostBQP$ and $\PreciseBQP$ are equivalent with respect to polynomial dimensional unitary oracles.
See \autoref{sec:poly_dim_equiv} for the detailed proof.

\subsection{No-go for unitary separations: real, conjugate, or inverse access}

The equivalence with conjugate or inverse access is consistent with the Newman
heuristic: these models make conjugate entries, and hence real combinations of
complex variables, accessible.  Our formal proof does not invoke a complex
analogue of Newman's theorem.  Instead, it uses coherent reversal for inverse
access and a postselected gate-teleportation simulation for conjugate access.
Containment of \PostBQP in \PreciseBQP relative to unitary oracles with
conjugate and inverse access is immediate.

For the other direction, assuming inverse access, the key observation is that the acceptance probability, $a$, of the \PreciseBQP~circuit can be encoded as an amplitude by running the circuit in reverse. Postselection then allows us to determine whether $a$ is above the completeness threshold or lies below the soundness threshold.

For conjugate access, it is enough to obtain inverse access on a postselected
branch. A transposition map \cite{quintino2019probabilistic} using a gate-teleportation procedure takes one query to $\bar U$  and applies $U^\dagger$, conditioned on an outcome of
inverse-exponential probability.

\subsection{A unitary separation with complex, superpoly dimensional oracles}

We now describe our unitary separation between $\PostBQP$ and $\PreciseBQP$. By our prior results we must leverage complex numbers, the lack of conjugate variables or inverse access, and the large dimensionality of the unitaries.
We will achieve this by introducing a complex generalization of Forrelation which is essentially a real inner product between two functions over $C^d$ for a large value of $d$. 

In particular, we construct two distributions over diagonal unitary oracles which \PreciseBQP~can distinguish when viewed as a promise problem, but \PostBQP~cannot on average. This reduces to a unitary oracle separation using standard diagonalization arguments. Let $(w_1,\dots,w_d, v_1,\dots, v_d ) \in \T^{2d}$ be the diagonal entries of a $2d$-dimensional unitary. Let $D_a$ denote the normalized Haar measure over diagonal unitaries conditioned on the level-set equation
\[
  \operatorname{Re}\langle w,v\rangle = \operatorname{Re} \sum_{i=1}^d \overline{w_i}v_i =a.
\]

We consider a Forrelation-like problem over complex variables: the problem of distinguishing between the distributions $D_a$ and $D_{-a}$ for $a = O(\log d)$. Note that this relation highly depends on the imaginary component of complex variables, which we know to be a necessary ingredient for a separation. A \PreciseBQP~algorithm can trivially distinguish these distributions as a promise problem, since there is a quantum algorithm (see \autoref{fig:unitary-preciseBQP-circuit}) with acceptance probability $\dfrac12+\dfrac{a}{2d}$ for $ U\in\operatorname{supp}(D_a)$ and $\dfrac12-\dfrac{a}{2d}$ for $ U\in\operatorname{supp}(D_{-a})$.
\begin{figure}[H]
\centering
\begin{quantikz}[row sep=0.45cm, column sep=0.55cm]
  \lstick{$\ket{0}$}
    & \gate{H}
    & \gate[2]{U}
    & \gate{H}
    & \meter{}
    & \rstick{$0:\ \mathrm{accept}$} \\
  \lstick{$\ket{0^n}$}
    & \gate{H^{\otimes n}}
    &
    & \qw
    & \qw
    & \qw
\end{quantikz}
\caption{The one-query \PreciseBQP~test.}
\label{fig:unitary-preciseBQP-circuit}
\end{figure}
The main work is to show that no \PostBQP~algorithm can distinguish these distributions as a promise problem. We give a lower bound on the query complexity via the polynomial method. The acceptance probability of a $t$-query quantum algorithm is represented by a bounded degree-$2t$ sum-of-squares polynomial. With the power of postselection, the acceptance probability is represented by a ratio of degree-$2t$ polynomials $\frac{p}{p+q}$. Although the polynomials $p$ and $q$ may depend on all $2d$ entries $(w_1,\dots,w_d, v_1,\dots, v_d )$ of the diagonal, each term/monomial only depends on $2t$ coordinates. The proof begins with a symmetrization step which reduces the polynomial to a more structured polynomial in fewer variables. Then, using the structure, we prove a relative error bound which is strong enough to bound the derivative of the relevant rational function. The lower bound then follows from standard polynomial method arguments generalized to rational functions.
\paragraph{Symmetrization.} 
By the same observations in \autoref{subsub:on_qubit_sep}, it suffices to bound the ratio of the expectations of the polynomials  $\frac{\BE p}{\BE p+\BE q}$(rather than the expectation of ratios). We can therefore use straightforward symmetrization techniques to simplify the polynomials in the numerator and denominator separately. 

The distributions are invariant to the action of multiplying each pair $v_i,w_i$ by the same phase. 
By this symmetry, we show that $\BE p$ (and $\BE q$) only depend on the variables $z_i \coloneqq \overline{w_i}v_i$. 
Additionally, averaging over the independent $w_i$-variables eliminates the cross terms in each square. The result of the symmetrization and averaging is that 

$\BE p$ (and $\BE q$) reduces to a sum of squares of $t$-variate, degree-$2t$ polynomials $p_j$ in the variables $z_1, \dots , z_d$. It is integral to our proof that each sub-polynomial only depends on at most $t$ of the variables $z_i$, and that its square $|p_j|^2$ is \textit{non-negative}. After changing variables accordingly, the condition defining the distributions becomes
\[
\sum_{i=1}^d \operatorname{Re}(z_i)= \pm a.
\]
\paragraph{Relative error bound.}
Since each term is non-negative, it suffices to upper bound the relative error of each term individually. This relative error can be applied to every nonnegative term in the acceptance probability which in turn yields a relative error for $\BE p$ and $\BE q$ under \(D_a\) and \(D_{-a}\). This multiplicative error is necessary to bound error with respect to the \PostBQP~accepting rational function $\frac{p}{p+q}$. 

We show multiplicative closeness by proving that the underlying conditional distribution are also multiplicatively close. Again this suffices by the fact the polynomials are nonnegative.
More generally, the expectations of bounded nonnegative functions must be multiplicatively close if their underlying distributions are multiplicatively close, and our polynomials are bounded because they are acceptance probabilities of quantum algorithms, so this suffices.

To show the distributions are close, we compare the conditional measures on a function \(H\geq 0\) that depends on only \(k\leq t\) coordinates. Since the contribution to the real value $\pm a$ is distributed across all $d$ terms $\overline{w_i}v_i$, each $H$ cannot be ``too sensitive'' to changing distributions $D_a$ and $D_{-a}$ on average. 

For each nonnegative term, we construct the marginal distributions over the $d-t$ (or more) remaining $z_i$ for each of the distributions and show that they are multiplicatively close. Let $f_m$ denote the density of
\[
  S_m=\sum_{i=1}^m\cos\theta_i, \qquad \theta_1,\ldots \theta_m ~\text{i.i.d. on} ~[0,2\pi)
\]
We  consider the marginal density conditioned on the value $s=\sum_{i=1}^k\operatorname{Re}z_i$ taken by the $k$ coordinates which $H$ depends on:
\begin{align*}
  f_{d-t}(a-s) \\
  f_{d-t}(-a-s) = f_{d-t}(a+s) \tag{$f_m$ is even }.
\end{align*}
We show that their ratio $\frac{f_{d-t}(a-s)}{f_{d-t}(a+s)}$ is bounded as $ C_{d,t,a} \leq O(\exp (\frac{4|a|t}{d-t}))$.
The intuitive reason this ratio is bounded is that $f_{d-t}$ is extremely flat near the origin when $d \gg t$. The central limit theorem says that this random variable is well approximated by a very broad Gaussian with variance $(d-t)/2$, so its width is of order $\sqrt{d-t}$. Thus, in the central region, moving from $a-s$ to $a+s$ amounts to a very small shift in density. See \autoref{fig:relative-error-intuition}. This relative error can be applied to every nonnegative term in the acceptance probability which in turn yields a relative error for the average values of $p$ and $q$ under \(D_a\) and \(D_{-a}\). The multiplicative error is necessary to bound error with respect to the \PostBQP~accepting rational function $\frac{p}{p+q}$. 

\begin{figure}[H]
\centering
\begin{tikzpicture}
\begin{axis}[
  name=fullgauss,
  width=0.45\textwidth,
  height=4.7cm,
  scale only axis,
  axis lines=left,
  xmin=-2.3, xmax=2.3,
  ymin=0, ymax=1.12,
  xtick={-1,0,1},
  xticklabels={$-\sqrt{d-t}$,$0$,$\sqrt{d-t}$},
  ytick={0,1},
  yticklabels={$0$,$1$},
  xlabel={$x$},
  ylabel={$g_{d-t}(x)/g_{d-t}(0)$},
  tick label style={font=\scriptsize},
  label style={font=\scriptsize},
  title={Full Gaussian: $\sigma=\sqrt{(d-t)/2}$},
  title style={font=\small},
  clip=false
]
  \path[fill=orange!22] (axis cs:-0.09,0) rectangle (axis cs:0.09,1.03);
  \addplot[blue!70!black,very thick,domain=-2.3:2.3,samples=120]
    {exp(-x^2)};
  \node[font=\scriptsize,align=center,fill=white,inner sep=1pt]
    at (axis cs:0,0.42) {$|x|\leq L$};
  \node[font=\scriptsize,align=center]
    at (axis description cs:0.5,-0.29)
    {$L=\operatorname{poly}(n)\ll\sigma$};
\end{axis}

\begin{axis}[
  at={([xshift=0.8cm]fullgauss.east)},
  anchor=west,
  width=0.45\textwidth,
  height=4.7cm,
  scale only axis,
  axis lines=left,
  xmin=-1.2, xmax=1.2,
  ymin=0, ymax=1.08,
  xtick={-1,-0.42,0,0.52,1},
  xticklabels={$-L$,$a-s$,$0$,$a+s$,$L$},
  ytick={0,1},
  yticklabels={$0$,$1$},
  xlabel={$x$},
  tick label style={font=\scriptsize},
  label style={font=\scriptsize},
  title={Central window (magnified)},
  title style={font=\small},
  clip=false
]
  \addplot[blue!70!black,very thick,domain=-1.2:1.2,samples=100]
    {exp(-0.025*x^2)};
  \draw[densely dashed,red!75!black]
    (axis cs:-0.42,0) -- (axis cs:-0.42,{exp(-0.025*(-0.42)^2)});
  \draw[densely dashed,red!75!black]
    (axis cs:0.52,0) -- (axis cs:0.52,{exp(-0.025*(0.52)^2)});
  \fill[red!75!black]
    (axis cs:-0.42,{exp(-0.025*(-0.42)^2)}) circle[radius=1.5pt];
  \fill[red!75!black]
    (axis cs:0.52,{exp(-0.025*(0.52)^2)}) circle[radius=1.5pt];
  \node[font=\scriptsize,align=center,fill=white,inner sep=2pt]
    at (axis cs:0,0.68)
    {$\displaystyle
      \frac{g_{d-t}(a-s)}{g_{d-t}(a+s)}
      =e^{4as/(d-t)}=1+o(1)$};
\end{axis}
\end{tikzpicture}
\caption{If $a$ and $t$ are polynomial in the input length $n$, while $d$ is exponential in $n$, then
$a-s$ and $a+s$ lie in a window of width $L=\operatorname{poly}(n)$, which
is much smaller than $\sigma$.  The density is therefore almost constant between the
two points.}
\label{fig:relative-error-intuition}
\end{figure}

\paragraph{Query lower bound.} Correctness gives $p\geq2q$ on
oracles from $D_a$ and $p\leq q/2$ on oracles from $D_{-a}$.  Averaging gives
\[
  \BE_{D_a}p \geq 2 \BE_{D_a}q,
  \qquad
  \BE_{D_{-a}}p\leq\frac{1}{2}\BE_{D_{-a}}q.
\]
Thus the accepting-to-rejecting ratio must change by a factor of at least four.
This ratio changes by at most $C_{d,t,a}^2$, so correctness requires $O(\exp (\frac{4|a|t}{d-t})) \geq 2$, which gives a lower bound on the number of queries $t$ needed by any
\PostBQP~algorithm to distinguish the distributions.

\subsection{Discussion}

In summary, our work has shown that different models of quantum oracles can change the relativizing properties of quantum complexity classes. A natural future direction is to extend this classification to other quantum complexity classes.
For example, as mentioned earlier, there is a simple one-qubit unitary oracle separation between $\textsf{QCMA}$ and $\textsf{QCMA}_1$ (i.e. with perfect completeness), even though $\textsf{QCMA}=\textsf{QCMA}_1$ \cite{jordan2011achieving}, by piecing together the results of \cite{jordan2011achieving} and \cite{aaronson2008perfect}.
This completely resolves the relativization of $\QCMA$ vs $\QCMA_1$ under our choice of unitary oracle pantheons, though the state question remains open.
Moreover, we pose a challenge to separate each pair of pantheons among the list in \autoref{thm:informal-pantheon-sep}. Each such pantheon separation would indicate that relativization results under one pantheon will not transfer to the other. Any equivalence among these pantheons would be surprising, and indicate that one can in principle work with either pantheon.
We hope that our work serves as inspiration for further research into the subtleties of quantum relativization, and the limitations it places on complexity theory.

\section{Preliminaries}

Throughout, $n$ denotes the input length, $t$ denotes the number of oracle
queries, and $d$ denotes the dimension parameter specified in the statement at
hand.  In particular, when an oracle acts on an additional block qubit and a
$d$-dimensional index register, the full oracle dimension is $2d$. All circuit
families are polynomial-time uniform and use a fixed finite universal gate set
with algebraic entries, which we take without loss of generality to be closed
under adjoints.

\begin{defn}[\PostBQP]
    A promise problem $A=(A_{\mathrm{yes}},A_{\mathrm{no}})$ is in
    \PostBQP~if there exist a polynomial $p$ and a polynomial-time uniform
    family of polynomial-size quantum circuits $\{C_n\}_{n\geq 1}$, each with
    a designated postselection qubit $P$ and output qubit $O$, such that for
    every $x\in (A_{\mathrm{yes}}\cup A_{\mathrm{no}})\cap\{0,1\}^n$, when
    $C_n$ is applied to $\ket{x}\ket{0^{q(n)}}$ for some polynomial $q$:\footnote{This is the inverse-exponential formulation used, for example, in
    \cite{hiromasa_et_al:LIPIcs.TQC.2023.9}.  In the standard finite-gate
    circuit model it is equivalent to Aaronson's original requirement that the
    postselection event have nonzero
    probability~\cite{Aaronson2005QuantumCP}; stating the bound explicitly is
    important here because that equivalence need not hold relative to an
    arbitrary unitary oracle.}
    \begin{enumerate}
        \item[(i)] $\Pr[P=1]\geq 2^{-p(n)}$;
        \item[(ii)] if $x\in A_{\mathrm{yes}}$, then
        $\Pr[O=1\mid P=1]\geq 2/3$;
        \item[(iii)] if $x\in A_{\mathrm{no}}$, then
        $\Pr[O=1\mid P=1]\leq 1/3$.
    \end{enumerate}
\end{defn}
\begin{defn}[\PreciseBQP($c, s$)]
    Let $c,s\colon\mathbb{N}\to[0,1]$ be polynomial-time computable.  A
    promise problem $A=(A_{\mathrm{yes}},A_{\mathrm{no}})$ is in
    $\PreciseBQP(c,s)$ if there is a polynomial-time uniform family of
    polynomial-size quantum circuits $\{V_n\}_{n\geq 1}$, each with a
    designated output qubit, such that on input
    $\ket{x}\ket{0^{q(n)}}$, for some polynomial $q$,
    \begin{description}
        \item[$\bullet$ Completeness:] If $x\in A_{\mathrm{yes}}\cap
        \{0,1\}^n$, then $V_n$ accepts with probability at least $c(n)$.
        \item[$\bullet$ Soundness:] If $x\in A_{\mathrm{no}}\cap
        \{0,1\}^n$, then $V_n$ accepts with probability at most $s(n)$.
    \end{description}
    We define for polytime computable $c,s$, and polynomial $p$,
    \[
        \PreciseBQP
        :=\bigcup_{\substack{c,s\colon
                    c(n)-s(n)\geq 2^{-p(n)}}}
          \PreciseBQP(c,s).
    \]
    This is the exponentially small-gap formulation of
    \PreciseBQP~\cite{gharibian_et_al:LIPIcs.MFCS.2018.58}.
\end{defn}

\begin{lem}[Finite-gate approximation under postselection]
    \label{lem:finite_gate_postselection}
    Let $C$ and $\widetilde C$ be circuits whose output states $\rho$ and
    $\widetilde\rho$ satisfy
    \[
        \frac12\lVert \rho-\widetilde\rho\rVert_1\leq\delta.
    \]
    Let $P$ be a postselection event with $\Pr_C[P]\geq\mu>\delta$, and let
    $A$ be an acceptance event.  Then
    \[
        \Pr_{\widetilde C}[P]\geq\mu-\delta
        \qquad\text{and}\qquad
        \left|
        \Pr_C[A\mid P]-\Pr_{\widetilde C}[A\mid P]
        \right|
        \leq \frac{2\delta}{\mu}.
    \]
    Moreover, suppose that an ideal polynomial-size circuit uses uniformly
    specified constant-qubit gates whose entries are polynomial-time
    computable.  It can be compiled over the fixed gate set to output trace
    distance at most $\delta$ using
    $\poly(n,\log(1/\delta))$ gates.  In particular, inverse-exponential
    accuracy has only polynomial overhead.
\end{lem}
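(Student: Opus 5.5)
The first part of the lemma is a statement about events and trace distance, and I would prove it by working with measurement probabilities directly. For any event $E$, defined by a two-outcome projective measurement on the output, the operational characterization of trace distance gives $|\Pr_C[E]-\Pr_{\widetilde C}[E]|\leq\delta$. Applying this to $P$ immediately gives $\Pr_{\widetilde C}[P]\geq\mu-\delta$. Applying it to $A\wedge P$, which is again a projective event since $A$ and $P$ are measurements on distinct qubits, gives $|\Pr_C[A\wedge P]-\Pr_{\widetilde C}[A\wedge P]|\leq\delta$.

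For the conditional bound, set $a=\Pr_C[A\wedge P]$, $p=\Pr_C[P]$, $\tilde a=\Pr_{\widetilde C}[A\wedge P]$ and $\tilde p=\Pr_{\widetilde C}[P]$. I would use the decomposition
\[
\frac{a}{p}-\frac{\tilde a}{\tilde p}=\frac{a-\tilde a}{p}+\tilde a\Bigl(\frac1p-\frac1{\tilde p}\Bigr).
\]
The first term is at most $\delta/\mu$ in absolute value. For the second term, note that $\tilde a/\tilde p\leq 1$, so
\[
\Bigl|\tilde a\Bigl(\frac1p-\frac1{\tilde p}\Bigr)\Bigr|=\frac{\tilde a}{\tilde p}\cdot\frac{|\tilde p-p|}{p}\leq\frac{\delta}{\mu}.
\]
Summing the two gives the bound $2\delta/\mu$. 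This decomposition is the step that needs care: the naive alternative divides by $\tilde p$ instead of $p$, and that would only give a bound with $\mu-\delta$ in the denominator.

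For the compilation claim, I would invoke Solovay--Kitaev. The gate set is fixed, universal, and closed under adjoints, so each constant-qubit gate can be approximated in operator norm to accuracy $\varepsilon$ with $\mathrm{polylog}(1/\varepsilon)$ gates. The inputs to Solovay--Kitaev are the gate entries, which are polynomial-time computable to $\poly(\log 1/\varepsilon)$ bits, so uniformity is preserved.

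The approximations combine as follows.
\begin{itemize}
\item By the telescoping (hybrid) argument, errors add across the $m=\poly(n)$ gates, so the final states are within $m\varepsilon$ in Euclidean norm.
\item Pure-state trace distance is at most the Euclidean distance, so the output trace distance is at most $m\varepsilon$.
\item Choosing $\varepsilon=\delta/m$ makes this at most $\delta$, at a total cost of $m\cdot\mathrm{polylog}(m/\delta)=\poly(n,\log(1/\delta))$ gates.
\end{itemize}
Taking $\delta=2^{-\poly(n)}$ gives polynomial overhead. Combined with the first part, this means a postselection probability $\mu\geq 2^{-p(n)}$ is preserved: choosing $\delta=\mu\cdot 2^{-n}$ keeps conditional probabilities within $2^{1-n}$.

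The main obstacle is bookkeeping rather than any deep idea. The Solovay--Kitaev step requires only that gate entries be efficiently computable to the needed precision, which is exactly what the hypothesis provides.
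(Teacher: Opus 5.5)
Your proposal is correct and follows the paper's proof essentially line for line. You use the same operational trace-distance bounds on $\Pr[P]$ and $\Pr[A\wedge P]$, the same decomposition that keeps $p$ (rather than $\tilde p$) in the denominator and uses $\tilde a/\tilde p\leq 1$ to reach $2\delta/\mu$, and the same Solovay--Kitaev compilation at per-gate accuracy $\delta/m$ combined through the hybrid argument.
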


\begin{proof}
    Write $p=\Pr_C[P]$, $q=\Pr_C[A\cap P]$, and use tildes for the
    corresponding probabilities under $\widetilde C$.  Trace distance bounds
    the change in the probability of every measurement outcome, so
    $|p-\widetilde p|\leq\delta$ and
    $|q-\widetilde q|\leq\delta$.  Since
    $0\leq\widetilde q/\widetilde p\leq1$,
    \begin{align*}
        \left|\frac{q}{p}-\frac{\widetilde q}{\widetilde p}\right|
        &\leq \frac{|q-\widetilde q|}{p}
        +\frac{\widetilde q}{\widetilde p}
          \frac{|p-\widetilde p|}{p}
        \leq \frac{2\delta}{\mu}.
    \end{align*}
    For the final statement, synthesize each of the polynomially many ideal
    gates to accuracy $\delta/L$, where $L$ is the circuit size.  The standard
    hybrid argument bounds the total error by $O(\delta)$, while the
    Solovay--Kitaev theorem gives overhead polynomial in
    $\log(L/\delta)$ for each gate~\cite{dawson2006solovaykitaev}.
    Rescaling $\delta$ by a constant absorbs the $O(\delta)$ factor.
\end{proof}

\begin{defn}[Unitary oracle access models]
    A unitary oracle is an infinite sequence $U=\{U_n\}_{n\geq 1}$, where
    $U_n$ acts on $p(n)$ qubits for a fixed polynomial $p$.  On inputs of
    length $n$, only $U_n$ may be queried.

    Formally, assume a quantum algorithm state has the form 
    \[
        |\Phi\rangle = \sum_{z} \alpha_z |z\rangle |\varphi_z\rangle,
    \]
    where $|z\rangle$ is a workspace register and $|\varphi_z\rangle$ is a
    $p(n)$-qubit query register.  A query applies $I\otimes U_n$, mapping this
    state to
    \[
        |\Phi'\rangle
        = \sum_{z} \alpha_z |z\rangle U_n |\varphi_z\rangle.
    \]
    This is the only access supplied by default.  The other access models grant
    additional query gates.  Controlled access grants
    \[
        \operatorname{ctrl}(U_n)
        = (\ket{0}\!\bra{0})_C\otimes I
          +(\ket{1}\!\bra{1})_C\otimes U_n,
    \]
    where $C$ is a control qubit.  Inverse access grants $U_n^\dagger$, and
    conjugate access grants $\bar U_n$, the entrywise complex conjugate of
    $U_n$ in the computational basis.  None of these additional gates is
    available unless explicitly stated.  This is the standard unitary-oracle
    model of \cite{aaronsonkuperberg}; the distinctions among ordinary,
    controlled, and inverse access are discussed systematically in
    \cite{zhandry2025model}, and conjugate access is studied in
    \cite{tang2025conjugate}.
\end{defn}
We will often refer to $U_n$ simply as $U$ when it is clear from context.
Let $\mathcal C$ be a quantum complexity class.  We write $\mathcal C^U$ for
the class of promise problems solvable by a $\mathcal C$ machine with unit-cost
query access to $U_n$ on inputs of length $n$.  The number of queries is
subject to the resource bound defining $\mathcal C$ and hence is polynomial
for the classes considered here.  Similarly, $\mathcal C^{U,U^\dagger}$
(respectively, $\mathcal C^{U,\bar U}$) additionally grants unit-cost query
access to $U_n^\dagger$ (respectively, $\bar U_n$).

\begin{defn}[Quantum state oracle access model]
    Let $\Psi=\{\ket{\psi_n}\}_{n\geq 1}$ be a sequence in which
    $\ket{\psi_n}$ is a $p(n)$-qubit state for some fixed polynomial $p$.  The
    corresponding state oracle
    $\CO_\Psi=\{\CO_{\psi_n}\}_{n\geq 1}$ supplies a fresh copy of
    $\ket{\psi_n}$ on each query.  Formally, for every state $\rho_A$ already
    held by the algorithm,
    \[
        \CO_{\psi_n}(\rho_A)
        =\rho_A\otimes\ket{\psi_n}\!\bra{\psi_n}_R,
    \]
    where $R$ is a fresh $p(n)$-qubit output register.  Equivalently, this is
    the state-generation isometry
    $\ket{\phi}_A\mapsto\ket{\phi}_A\ket{\psi_n}_R$.  The oracle provides no
    adjoint or other purification access.  This is the common reference
    quantum state oracle model; see, for example,
    \cite{bostanci2025oracle}.
\end{defn}

We also introduce notation to define the oracle pantheons as follows:
\begin{itemize}
    \item $\mathsf{D} = $ diagonal unitaries without complex conjugate or inverse access (transpose comes for free for diagonal unitaries)
    \item $\mathsf{D}^* = $ diagonal unitaries with complex conjugate / inverse access
    \item $\mathsf{U} = $ unitaries without complex conjugate, transpose, or inverse access
    \item $\mathsf{U}^* = $ unitaries with complex conjugate access
    \item $\mathsf{U}^T = $ unitaries with transpose access
    \item $\mathsf{U}^\dagger = $ unitaries with inverse access
    \item $\boxed{\mathsf{U}} = $ unitaries with complex conjugate, transpose, and inverse access
    \item $\mathsf{U}_{\mathsf{poly}} = $ unitaries of polynomial dimension without complex conjugate, transpose, or inverse access
    \item $\mathsf{R} = $ orthogonal (real-valued) unitaries without transpose / inverse access (complex conjugate access comes for free)
    \item $\mathsf{F} = $ classical functions, or $\pm 1$ diagonal unitaries (complex conjugate, transpose, and inverse access all come for free)
    \item $\mathsf{S} = $ quantum state preparation oracles (equivalently, isometries with trivial input dimension).
    \item $\mathsf{S}_{2} = $ single-qubit (2-dimensional) quantum state preparation oracles.
\end{itemize}

In this notation, we show the following results:
\begin{thm}[Informal]
    \label{thm:informal-pantheon-sep}
    There is a separation between the oracle pantheons $A$ and $B$ for every $A \in \{\mathsf{D}, \mathsf{U}, \mathsf{U}^T, \mathsf{S}, \mathsf{S}_2\}$ and every $B \in \{\mathsf{U}_{\mathsf{poly}}, \mathsf{R}, \mathsf{D}^*, \mathsf{U}^*, \mathsf{U}^\dagger, \boxed{\mathsf{U}}, \mathsf{F}\}$. 

\end{thm}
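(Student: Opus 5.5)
The witness pair of classes for every separation is $\PostBQP$ versus $\PreciseBQP$. The plan splits into two halves: separations for the pantheons in the first list, and equalities for every pantheon in the second list. Throughout I use the monotonicity principle from the introduction: if pantheon $A$ contains (or can simulate) pantheon $A'$, then a separation relative to $A'$ lifts to $A$.

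\textbf{Separations.} For the first list it suffices to produce separations relative to $\mathsf{D}$ and $\mathsf{S}_2$.
\begin{itemize}
\item $\mathsf{U}$ and $\mathsf{U}^T$ contain $\mathsf{D}$, since diagonal unitaries are their own transposes.
\item $\mathsf{S}$ contains $\mathsf{S}_2$.
\end{itemize}
For $\mathsf{S}_2$, I follow the sketch of Section~\ref{subsub:on_qubit_sep}. The oracle is the state $\sqrt{1-\tau}\,e^{i\theta}\ket0+\sqrt\tau\, e^{i\phi}\ket1$, with $\tau$ promised to lie at one of two values $2^{-n}$ apart near $1/2$. $\PreciseBQP$ measures one copy and obtains an acceptance probability that differs by $2^{-n}$ between the two cases. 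For $\PostBQP$, I average the joint accept-and-postselect probability and the postselect probability separately over $\theta,\phi$. This leaves nonnegative combinations of $(1-\tau)^s\tau^r$ with $s,r=O(t)$. Each such sum has logarithmic derivative $O(t)$ on the central interval, so the ratio $p/q$ cannot change by a factor of $4$ across width $2^{-n}$ unless $t=\Omega(2^n)$.

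For $\mathsf{D}$, I use the complex-forrelation problem distinguishing $D_a$ from $D_{-a}$. The one-query circuit of Figure~\ref{fig:unitary-preciseBQP-circuit} puts it in $\PreciseBQP$ with gap $a/d$. For the $\PostBQP$ lower bound I proceed as follows:
\begin{itemize}
\item Symmetrize, reducing to sums of squares $|p_j|^2$ with each $p_j$ depending on at most $t$ of the variables $z_i=\overline{w_i}v_i$.
\item Bound the ratio of conditional marginal densities $f_{d-t}(a-s)/f_{d-t}(a+s)$ by $\exp(O(at/(d-t)))$.
\item Conclude $\BE p/\BE q$ moves by at most $C^2$, contradicting the required factor $4$ when $d$ is exponential.
\end{itemize}
In both cases, standard diagonalization over all $\PostBQP$ machines turns the query lower bound into an oracle satisfying $\PostBQP^O\neq\PreciseBQP^O$.

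\textbf{Equalities.} For the second list I show $\PostBQP^O=\PreciseBQP^O$ for every oracle $O$ in each $B$.

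$\mathsf{F}$ is the classically relativizing case, handled by Aaronson and Watrous.

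For $\mathsf{U}^\dagger$, $\boxed{\mathsf U}$ and $\mathsf D^*$, containment of $\PostBQP$ in $\PreciseBQP$ should follow from the standard argument. The reverse direction runs the $\PreciseBQP$ circuit $V$ and then $V^\dagger$, which inverse access makes available. This places the acceptance probability $a$ as an amplitude, e.g.\ $\bra0 V^\dagger\Pi V\ket0=a$. A postselected test then compares $a$ against the threshold $(c+s)/2$, in the style of Aaronson's $\PP\subseteq\PostBQP$ proof and with Lemma~\ref{lem:finite_gate_postselection} controlling finite-gate error.

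For $\mathsf U^*$, the transposition/teleportation gadget of \cite{quintino2019probabilistic} turns one $\bar U$ query into $U^\dagger$ on a postselected branch. That branch has probability $2^{-\poly}$, which is absorbed into the postselection, reducing to the inverse case. For $\mathsf R$, $\bar U=U$, so conjugate access is free and the same reduction applies.

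For $\mathsf U_{\mathsf{poly}}$, I use the Cramer's-rule / antisymmetric-subspace construction. It implements $\bar U$ up to a phase with $d-1$ queries and $\poly(d,\log 1/\delta)$ gates, which again reduces to $\mathsf U^*$.

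\textbf{Main obstacle.} I expect the hardest step to be the $\mathsf D$ lower bound, specifically establishing the multiplicative closeness of the conditional densities. This requires uniform control of $f_{d-t}$ near the origin; a local CLT with relative error, or log-concavity-type flatness, should give it. A secondary difficulty is verifying that the postselection probability in each equality simulation stays at least $2^{-\poly}$, as the definition of $\PostBQP$ demands.
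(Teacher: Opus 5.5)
Your proposal is correct and follows the paper's argument essentially step for step. The separations are relative to $\mathsf{S}_2$ (phase-averaged positive polynomials with $O(t)$ log-derivative) and $\mathsf{D}$ (complex Forrelation, symmetrization to $z_i=\overline{w_i}v_i$, and multiplicative comparison of the conditional marginals), lifted to $\mathsf{S}$, $\mathsf{U}$, $\mathsf{U}^T$ by containment. The equalities go through postselected thresholding under inverse access, teleportation-based transposition for conjugate (hence real) access, and the Cramer's-rule conjugation construction for polynomial dimension.

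On the one step you leave open, the central density comparison: the paper proves it via an explicit Fourier--Bessel bound on the score $f_m'/f_m$, and that score bound is what gives the $\exp(O(|a|t/(d-t)))$ form you quote. A plain local CLT would only give a $1+o(1)$ ratio, which is still enough for the separation. Log-concavity cannot be inherited from the summands, since the arcsine density of $\cos\theta$ is log-convex.
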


\section{Equivalence with inverse or conjugate access}

\subsection{Inverse access}
Inverse access lets us run a circuit backwards.  This allows us to write its
acceptance probability into an amplitude, which postselection can compare with
an exponentially close threshold.

\begin{restatable}{thm}{thmEqualWithInverse}\label{thm:equal_with_inverse}
    For any quantum unitary oracle $U$, $\PostBQP^{U,U^\dag}  = \PreciseBQP^{U,U^\dag}$.
\end{restatable}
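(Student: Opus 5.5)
We prove the two containments separately. For $\PostBQP^{U,U^\dag}\subseteq\PreciseBQP^{U,U^\dag}$ we reuse the standard unrelativized argument, which only manipulates the circuit as a black box. Let $C$ be a postselected circuit with $p=\Pr[P=1]\geq 2^{-p(n)}$ and $q=\Pr[O=1\wedge P=1]$. Build a circuit that runs $C$ once and measures $P$ and $O$. It accepts if $P=1$ and $O=1$. If $P=0$, it accepts with a fixed probability $1/2$, implemented by a single Hadamard coin. Its acceptance probability is $q+(1-p)/2$, which exceeds $1/2$ by $q-p/2\geq p/6\geq 2^{-p(n)}/6$ on yes instances. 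On no instances it falls below $1/2$ by at least $p/6$. Taking $c=1/2+2^{-p(n)}/6$ and $s=1/2$ gives an inverse-exponential gap. The construction makes the same queries as $C$, so any $U$ or $U^\dagger$ queries are simply copied. No inverse access is actually needed for this direction.

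The main direction is $\PreciseBQP^{U,U^\dag}\subseteq\PostBQP^{U,U^\dag}$. Let $V$ be the $\PreciseBQP$ circuit, with acceptance probability $a=\lVert\Pi_1 V\ket{x,0}\rVert^2$, and suppose $a\geq c$ or $a\leq s$ with $c-s\geq 2^{-p(n)}$. Inverse access lets us implement $V^\dagger$: reverse the gate order, replace each gate by its adjoint (the gate set is closed under adjoints), and swap $U\leftrightarrow U^\dagger$. We then form the Hadamard-test style circuit $V^\dagger Z_O V$ on $\ket{x,0}$, where $Z_O$ acts on the output qubit. The amplitude of $\ket{x,0}$ is $\bra{x,0}V^\dagger Z_OV\ket{x,0}=(1-a)-a=1-2a$, so $a$ has been written into a real amplitude. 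Alternatively, we can apply $V^\dagger\Pi_1V$ using an ancilla and postselect the ancilla, which yields amplitude $a$ on $\ket{x,0}$.

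Next we compare this amplitude with a threshold $\lambda=(c+s)/2$ using postselection, following Aaronson's $\PP\subseteq\PostBQP$ technique, but now with amplitudes we have produced ourselves rather than read from the oracle. Prepare a control qubit $\alpha\ket0+\beta\ket1$. Controlled on $\ket0$, place the amplitude $1-2a$, obtained by applying the above circuit and projecting the register back onto $\ket{x,0}$. Controlled on $\ket1$, place a classically computed amplitude $1-2\lambda$, synthesized to inverse-exponential precision via \autoref{lem:finite_gate_postselection}. Controlling the query gates is a concern, so we avoid controlled-$U$ as follows. Run the register through the same unitary $W=V^\dagger Z_OV$ in both branches, and let the control decide whether to apply $Z_O$ or the identity in the middle. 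The identity branch gives amplitude $1$ on $\ket{x,0}$, and an extra controlled classical rotation on a fresh qubit rescales it to $1-2\lambda$. Only the classical gates are controlled; the queries are applied unconditionally.

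We then postselect on the register being $\ket{x,0}$ and on the fresh qubit, leaving a control qubit state proportional to $\alpha(1-2a)\ket0+\beta(1-2\lambda)\ket1$. Measuring it in a rotated basis $\{\ket{+},\ket{-}\}$, over a polynomial family of ratios $\beta/\alpha=2^{j}$, distinguishes the sign of $(1-2a)-(1-2\lambda)\propto \lambda-a$. This reproduces Aaronson's sign-detection argument from the proof that $\PP\subseteq\PostBQP$ with the roles of $h(0),h(1)$ played by these amplitudes. A simpler alternative is to create the superposition $\frac{1}{\sqrt2}(\ket{0}\otimes(1-2a)\text{-branch}-\ket1\otimes(1-2\lambda)\text{-branch})$ directly and postselect appropriately.

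The main obstacle is the lower bound on the postselection probability, condition (i). The postselected amplitudes are of size $|\lambda-a|\geq 2^{-p(n)-1}$, so the success probability is at least $2^{-O(p(n))}$, which is fine. The error analysis must also be checked, since gate synthesis error $\delta$ has to be much smaller than this probability; \autoref{lem:finite_gate_postselection} with $\delta=2^{-\poly(n)}$ handles this. Care is also needed so that the case $a\approx1/2$, where the amplitude $1-2a$ vanishes, does not kill the postselection. We handle it by shifting the comparison, for example using amplitude $a$ itself via the block-encoding of $\Pi_1$ rather than $1-2a$.
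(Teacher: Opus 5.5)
\textbf{Verdict: genuine gap in the main direction.} Your first containment is correct and is exactly the paper's argument. The opening of the second direction also matches the paper: reversing the gates and swapping $U\leftrightarrow U^\dagger$ gives $V^\dagger$, and sandwiching an operator on the output qubit between $V$ and $V^\dagger$ writes the acceptance probability into the amplitude of $\ket{x,0}$, with all queries applied unconditionally.

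The gap is in the thresholding step. After your postselection the control qubit is proportional to $\alpha u\ket{0}+\beta v\ket{1}$, with $u=1-2a$ and $v=1-2\lambda$. So for real $\rho=\beta/\alpha$,
\[
  \Pr[+]=\frac12+\frac{\rho\,uv}{u^2+\rho^2v^2}.
\]
This depends only on the sign of $uv$ and on $|u/v|$. Aaronson's test therefore detects whether the two amplitudes have the \emph{same sign}, i.e.\ whether $a$ and $\lambda$ lie on the same side of $1/2$. It does not detect the sign of $u-v\propto\lambda-a$.

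Concrete failures:
\begin{itemize}
\item Take thresholds $c=\frac14+2^{-p}$, $s=\frac14-2^{-p}$ and instances with $a=c$ versus $a=s$. Both amplitudes are positive in both cases, and $\Pr[+]$ moves by only $O(2^{-p})$ for every $\rho$.
\item With $\lambda=\frac12$ the reference branch vanishes altogether.
\item Using the amplitude $a$ instead of $1-2a$ does not help, since then both amplitudes are nonnegative.
\item Your ``simpler alternative'' of postselecting onto the antisymmetric combination of the two branches yields $\lambda-a$ only as an overall scalar, whose sign is unobservable.
\end{itemize}
Likewise, your postselection-probability estimate in terms of $|\lambda-a|$ refers to an amplitude your circuit never actually produces.

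The missing idea is to subtract the threshold \emph{inside a single amplitude}, while keeping a \emph{separate} branch that carries a reference amplitude of known sign, and only then run the sign test. The paper does this with $M_{\eta,\lambda}$ acting on the copied qubit $(1-a)\ket{0}+a\ket{1}$. The first row $(-\eta,1-\eta)$ produces $z=a-\eta$, and the second row produces the positive reference $\lambda$. The Hadamard-basis bias $z\lambda/(z^2+\lambda^2)$ then has the sign of $z$, with magnitude at least $2/5$ for some $\lambda$ in the geometric family.

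In your framework the analogous repair is as follows. In the control-$\ket{0}$ branch, replace $Z_O$ by a postselected block-encoding of $\lambda\Pi_0-(1-\lambda)\Pi_1$. This puts amplitude $\lambda(1-a)-(1-\lambda)a=\lambda-a$ on $\ket{x,0}$. Leave the identity branch unrescaled, with amplitude $1$, as the reference. With that change, the sign test over $\beta/\alpha\in\{2^{-j}\}$ works, and your postselection-probability and finite-gate estimates go through.
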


\begin{lem}[Postselected thresholding]\label{lem:postselected_thresholding}
Let $\{Q_x\}$ be a uniform family of polynomial-size unitary circuits with
acceptance probabilities $a_x$, and let $c,s:\mathbb N\to[0,1]$ be
polynomial-time computable.  Suppose there is a polynomial
$r:\mathbb N\to\mathbb N$ such that $r(n)\geq1$ and
$c(n)-s(n)\geq2^{-r(n)}$.
Suppose $Q_x^\dagger$ can be run either directly or on a postselected branch
of inverse-exponential probability.  Then a polynomial-size postselected
circuit can distinguish $a_x\geq c(n)$ from $a_x\leq s(n)$.
\end{lem}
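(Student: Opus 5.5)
Let $\ket{\Psi_x}=Q_x\ket{0^m}$ and $\Pi$ be the accepting projector, so $a_x=\lVert\Pi\ket{\Psi_x}\rVert^2$. The plan is to turn $a_x$ into an amplitude by running $Q_x$, applying a reflection, and undoing $Q_x$. Concretely, consider
\[
  W_x \;=\; Q_x^\dagger\,(I-2\Pi)\,Q_x ,
\]
which has
\[
  \bra{0^m}W_x\ket{0^m}=1-2a_x .
\]
The reflection $I-2\Pi$ is a single phase gate on the output qubit, so $W_x$ has polynomial size given one run of $Q_x$ and one of $Q_x^\dagger$. If $Q_x^\dagger$ is only available on a postselected branch of probability $\geq 2^{-\poly(n)}$, we include that branch's flag in the final postselection. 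This multiplies the success probability by an inverse-exponential factor and leaves the conditional state on the branch unchanged.

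Next, compare $1-2a_x$ against the threshold $\theta=1-(c+s)$, i.e.\ the midpoint $a=(c+s)/2$. Using the Hadamard-test-style amplitude trick: prepare an ancilla $\alpha\ket{0}+\beta\ket{1}$, with real $\alpha,\beta$ computable to $\poly(n)$ bits. Conditioned on $\ket{0}$ apply $W_x$, conditioned on $\ket{1}$ apply identity. Then postselect the main register on $\ket{0^m}$, leaving the ancilla proportional to $\alpha(1-2a_x)\ket{0}+\beta\ket{1}$. Controlled $W_x$ needs controlled $Q_x$, which we do not have, but only the controlled reflection is needed: control the phase gate on the ancilla, giving $Q_x^\dagger\,(I-2\Pi\otimes\ket{0}\!\bra{0})\,Q_x$. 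Its $(0^m,0^m)$ amplitude is $1-2a_x$ on ancilla $0$ and $1$ on ancilla $1$. So the ancilla ends in $\alpha(1-2a_x)\ket{0}+\beta\ket{1}$ after postselection, with no controlled oracle required.

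Then measure the ancilla in the $\ket{\pm}$ basis. Set $\beta/\alpha=-\theta$, i.e.\ $\beta=-\theta$, $\alpha=1$, normalized. The unnormalized state is $(1-2a_x)\ket{0}-\theta\ket{1}$, and the relative weight of $\ket{-}$ versus $\ket{+}$ is
\[
  \frac{(1-2a_x+\theta)^2}{(1-2a_x-\theta)^2}.
\]
Here $|1-2a_x-\theta|\geq c-s\geq 2^{-r}$ on promise instances, with sign determined by the case. The main obstacle is amplifying this signed gap to a $2/3$ vs.\ $1/3$ conditional bias, since the raw amplitudes differ by only $2^{-r}$.

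For that step I would use Aaronson's trick from the $\PP\subseteq\PostBQP$ proof: sweep $\beta/\alpha$ over $2^{i}$-scaled values, i.e.\ use the postselected state to test the sign of $u=1-2a_x-\theta$ with ancilla weights $(u,\,2^{i})$ for $i=-r-1,\dots,1$. For some $i$ the normalized state $(u\ket{0}+2^i\ket{1})$ lies close to $\ket{+}$ or $\ket{-}$ according to the sign of $u$. Postselection makes the resulting one-sided comparisons boostable.

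A cleaner alternative I would try first is to postselect on $u$ directly. Prepare $\ket{0^m}$-projected amplitude $u$ via the linear combination $\tfrac12(\bra{0^m}W_x\ket{0^m}-\theta)$, implemented with an ancilla in $\ket{+}$ selecting $W_x$ versus the scalar $-\theta$. This gives a branch with amplitude $u/2$ exactly, times a sign flag. Then take $k=\poly(n)$ parallel copies, each yielding a state $\propto u\ket{0}+\gamma\ket{1}$ with $\gamma=2^{-r-1}$. In each copy, the probability of outcome $\ket{+}$ versus $\ket{-}$ is biased by a constant depending on $\operatorname{sign}(u)$ because $|u|\geq 2\gamma$. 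A majority vote over copies, all postselected jointly, gives error $\leq 1/3$.

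The total postselection probability is a product of polynomially many inverse-exponential factors, hence still $\geq 2^{-\poly(n)}$, as the definition requires. Finite-gate compilation of $\theta$ and the rotations follows from \autoref{lem:finite_gate_postselection} with $\delta$ much smaller than the postselection probability.
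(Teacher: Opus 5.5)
Your encoding step is sound, and it is a legitimate variant of the paper's. Sandwiching an ancilla-controlled phase between $Q_x$ and $Q_x^\dagger$ and postselecting on $\ket{0^m}$ applies $\mathrm{diag}(1-2a_x,1)$ to the ancilla without any controlled oracle. The paper instead copies the output qubit and uncomputes, which produces $(1-a_x)\ket{0}+a_x\ket{1}$. In both versions, a postselected implementation of $Q_x^\dagger$ contributes only a common factor.

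The gap is in the amplification, specifically in the ``cleaner alternative'' you say you would try first. For a state proportional to $u\ket{0}+\gamma\ket{1}$ with $\gamma>0$, the Hadamard-basis bias is
\[
  \Pr[+]-\frac12=\frac{u\gamma}{u^2+\gamma^2}.
\]
This is a constant only when $|u|/\gamma$ is bounded both above \emph{and} below. The promise gives only $2^{-r}\le|u|\le 2$. For example, $c,s\approx\frac12$ and $a_x=1$ give $|u|\approx 1$, and then $\gamma=2^{-r-1}$ yields a bias of about $2^{-r-1}$. The condition $|u|\ge 2\gamma$ is the wrong direction for your claim. A majority vote over polynomially many copies cannot amplify an exponentially small bias, so a single scale fails and the sweep over scales is not optional.

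Your first option points at the right fix but omits the two facts that make it work.

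\emph{Producing the state.} You need an explicit postselected linear map that takes $\alpha(1-2a_x)\ket0+\beta\ket1$ to something proportional to $u\ket0+\lambda\ket1$. As written, the sweep paragraph never produces the amplitude $u$. The paper's map $M_{\eta,\lambda}$ does exactly this starting from $(1-a_x)\ket0+a_x\ket1$, and your LCU construction could be adapted to do the same.

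\emph{Combining the scales.} You do not know which scale is the good one. For every $\lambda>0$ the bias $u\lambda/(u^2+\lambda^2)$ has the sign of $u$. The dyadic scale with $|u|\le\lambda\le 2|u|$ has bias at least $2/5$. So choosing one of the $O(r)$ scale outputs uniformly at random gives bias $\Omega(1/r)$ with the correct sign. Polynomially many jointly postselected repetitions then amplify this to bounded error. An Aaronson-style one-sided test at each scale, followed by a union bound, would also work.

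With these two ingredients your argument becomes essentially the paper's proof.
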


\begin{proof}
Fix an input $x$ of length $n$, and write
$Q=Q_x$, $a=a_x$, $c=c(n)$, $s=s(n)$, and $r=r(n)$.
Run $Q$, copy its output qubit, run $Q^\dagger$, and postselect the registers of
$Q$ back to their initial state.  The copied qubit is then in the
state (up to normalization)
\[
    (1-a)\ket{0}+a\ket{1}.
\]
If running $Q^\dagger$ itself uses postselection, include those outcomes in the
final postselection event.  They only multiply this state by a common factor.
Compute a dyadic rational $\eta$ with $O(r)$ bits such that
$|\eta-(c+s)/2|\leq 2^{-r}/8$, and set $z=a-\eta$.
Thus $z$ is positive in the first case, negative in the second, and
$|z|\geq 3\cdot 2^{-r}/8$.

For each $\lambda\in\{1,1/2,\ldots,2^{-(r+2)}\}$, consider the linear map
\[
    M_{\eta,\lambda}
    =\frac12
      \begin{pmatrix}
        -\eta & 1-\eta\\
        \lambda & \lambda
      \end{pmatrix}.
\]
Because $M_{\eta,\lambda}$ has norm at most one, it has a constant-dimensional
unitary dilation whose entries are polynomial-time computable from $\eta$ and
$\lambda$.  Ideally applying this dilation to an ancilla in state $\ket{0}$
and postselecting the ancilla on outcome $0$ applies
$M_{\eta,\lambda}$.  Conditioned on this outcome, the original qubit is
proportional to $z\ket{0}+\lambda\ket{1}$.
Measuring in the Hadamard basis gives bias
\[
    \frac{z\lambda}{z^2+\lambda^2}.
\]
All these biases have the sign of $z$, and one has magnitude at least $2/5$
because some $\lambda$ lies between $|z|$ and $2|z|$.  Run the construction
on a separate copy for each of the $O(r)$ values of $\lambda$, postselect on
all of their ancillas yielding $0$, and then choose one of the resulting
output qubits uniformly.  This gives
inverse-polynomial bias, which polynomially many repetitions amplify to
bounded error with, say, completeness at least $5/6$ and soundness at most
$1/6$.  For the dilation associated with a particular $\lambda$, its
postselection probability is at least $\lambda^2/4\geq 2^{-2r-6}$.
Together with the assumed inverse-exponential success probability for any
implementation of $Q^\dagger$, all postselections in the amplified ideal
circuit therefore have joint probability at least
$\mu\coloneqq 2^{-q(n)}$ for some polynomial $q$.

It remains to implement the dilations over the fixed gate set.  Apply
\autoref{lem:finite_gate_postselection} with total trace-distance error
$\delta\leq\mu/100$.  The compiled circuit still postselects with probability
at least $99\mu/100=2^{-\poly(n)}$, and its conditional acceptance
probability changes by at most $1/50$.  Hence the bounded-error inequalities
remain valid.  Since $\log(1/\delta)=\poly(n)$, this compilation has only
polynomial overhead.
\end{proof}

\begin{proof}[Proof of \autoref{thm:equal_with_inverse}]
Let a postselected circuit have postselection probability $a$ and conditional
acceptance probability $b$.  An ordinary circuit can output its answer when
postselection succeeds and a fair bit otherwise.  Its acceptance probability is
\[
    a b+\frac{1-a}{2}=\frac12+a\left(b-\frac12\right).
\]
Since $a$ is at least inverse exponential, this gives an inverse-exponential
gap around $1/2$.  Hence
$\PostBQP^{U,U^\dagger}\subseteq\PreciseBQP^{U,U^\dagger}$.

For the other direction, let $Q$ be a \PreciseBQP{} circuit.  Its adjoint is
available: reverse the known gates and replace each query to $U$ by a query to
$U^\dagger$, and conversely.  The lemma turns its inverse-exponential gap into
bounded error after postselection.  Thus
$\PreciseBQP^{U,U^\dagger}\subseteq\PostBQP^{U,U^\dagger}$.
\end{proof}

\subsection{Conjugate access}
Conjugate access lets us run the inverse of an oracle query by teleportation and
postselection.  We can then use the thresholding lemma from the previous
subsection.

\begin{restatable}{thm}{thmEqualWithConjugate}\label{thm:equal_with_conjugate_access}
    For any quantum unitary oracle $U$, $\PostBQP^{U,\bar{U}}  = \PreciseBQP^{U,\bar{U}}$.
\end{restatable}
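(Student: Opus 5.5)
The plan mirrors the proof of \autoref{thm:equal_with_inverse}, with the inverse replaced by a postselected transposition (gate teleportation) using $\bar U$.

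\textbf{Easy direction.} For $\PostBQP^{U,\bar U}\subseteq\PreciseBQP^{U,\bar U}$, I reuse the first half of that proof verbatim. Output the postselected answer when postselection succeeds, and output a fair coin otherwise. This gives acceptance probability $\tfrac12+a(b-\tfrac12)$, which is an inverse-exponential gap, and it holds for any oracle access model.

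\textbf{Hard direction.} For $\PreciseBQP^{U,\bar U}\subseteq\PostBQP^{U,\bar U}$, take the \PreciseBQP\ circuit $Q$ and invoke \autoref{lem:postselected_thresholding}. Its hypotheses only require that $Q^\dagger$ be runnable on a postselected branch of inverse-exponential probability. The adjoint of each known gate is available, since the gate set is closed under adjoints. So it suffices to implement $U^\dagger$ on a postselected branch using one query to $\bar U$.

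\textbf{Key construction.} Note that $U^\dagger=\bar U^{T}$. Let $U$ act on $m=p(n)$ qubits, and let $\ket{\Phi}=2^{-m/2}\sum_i\ket{i}\ket{i}$ be the maximally entangled state on registers $A,B$. Then $(\bar U_B)\ket{\Phi}_{AB}=(\bar U^{T})_A\ket{\Phi}_{AB}=(U^\dagger)_A\ket{\Phi}_{AB}$ by the transpose trick. The procedure is as follows. Prepare $\ket{\Phi}_{AB}$. Apply the $\bar U$ query to $B$. Then perform a Bell-basis measurement between the input register $S$ (the register on which $U^\dagger$ should act) and $A$, and postselect on the $\ket{\Phi}_{SA}$ outcome. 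A standard teleportation calculation shows the state of $S$ is transferred to $B$ with $U^\dagger$ applied: the projection $\bra{\Phi}_{SA}$ composed with $(U^\dagger)_A\ket{\Phi}_{AB}$ yields $2^{-m}U^\dagger$ acting from $S$ to $B$. The postselection success probability is therefore exactly $4^{-m}$, independent of the input state, which is inverse exponential.

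\textbf{Main obstacle: composing the postselections.} Because the success probability is input-independent, each use applies exactly $U^\dagger$, scaled by a fixed factor, with no branch-dependent distortion. Simulating $Q^\dagger$ uses $T=\poly(n)$ such gadgets, so the joint probability is $4^{-mT}=2^{-\poly(n)}$. This is exactly the "common factor" situation the lemma's proof already accommodates, so \autoref{lem:postselected_thresholding} applies.

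\textbf{Gate-set compilation.} The Bell-state preparation and Bell measurement use only $H$ and CNOT, which I will assume are in (or exactly synthesizable from) the fixed gate set. Otherwise, \autoref{lem:finite_gate_postselection} handles the approximation with polynomial overhead, since the postselection probability is $2^{-\poly(n)}$ and the target accuracy $\delta\ll\mu$ costs only $\poly(n)$ gates.

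Combining these steps yields $\PreciseBQP^{U,\bar U}\subseteq\PostBQP^{U,\bar U}$.
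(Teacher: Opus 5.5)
Your overall route matches the paper's: fair-coin padding for $\PostBQP^{U,\bar U}\subseteq\PreciseBQP^{U,\bar U}$, and for the converse, \autoref{lem:postselected_thresholding} combined with a postselected transposition gadget of input-independent success probability $d^{-2}$ per query. However, \textbf{the gadget as you wrote it implements $\bar U$, not $U^\dagger$}. You apply $\bar U$ to $B$, which the projection onto $\ket{\Phi}_{SA}$ never touches. That query therefore commutes with the postselection, so your procedure is literally ``teleport $S$ to $B$, then apply $\bar U$''. It leaves $2^{-m}\bar U\ket{\psi}_B$ rather than $2^{-m}U^\dagger\ket{\psi}_B$.

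Your identity $\bar U_B\ket{\Phi}_{AB}=(U^\dagger)_A\ket{\Phi}_{AB}$ is correct. The error is the next step, because teleporting through $M_A\ket{\Phi}_{AB}$ applies the transpose of $M$:
\[
  (\bra{\Phi}_{SA}\otimes I_B)\bigl(\ket{\psi}_S\otimes M_A\ket{\Phi}_{AB}\bigr)=2^{-m}M^T\ket{\psi}_B .
\]
With $M=U^\dagger$ this gives $(U^\dagger)^T=\bar U$, so your two uses of the transpose trick cancel. The fix, which is exactly the paper's construction, is to put the $\bar U$ query on $A$, the half of the Bell pair that is projected together with $S$. The output is then $2^{-m}(\bar U)^T\ket{\psi}_B=2^{-m}U^\dagger\ket{\psi}_B$, still with success probability $4^{-m}$.

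There is also a smaller omission. A $\PreciseBQP^{U,\bar U}$ circuit $Q$ may query $\bar U$ as well as $U$, so $Q^\dagger$ needs $(\bar U)^\dagger=U^T$ in addition to $U^\dagger$. You only reduce to implementing $U^\dagger$. The corrected gadget covers this too: placing a $U$ query on $A$ yields $U^T$, as the paper notes explicitly.

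With these two repairs, the rest of your argument goes through as you describe. That includes composing $\operatorname{poly}(n)$ gadgets with joint success probability $2^{-\operatorname{poly}(n)}$, which the thresholding lemma absorbs as a common factor, and the finite-gate compilation.
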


\begin{proof}
We first show
$\PreciseBQP^{U,\bar U}\subseteq\PostBQP^{U,\bar U}$.  Let $Q$ be a
\PreciseBQP{} circuit.  To use \autoref{lem:postselected_thresholding}, we only
need to run $Q^\dagger$ on a postselected branch.

This can be achieved by a standard probabilistic unitary-transposition protocol based on \cite{quintino2019probabilistic}. Applying it to \(\bar U\) implements \((\bar U)^T=U^\dagger\) with success probability \(d^{-2}\).

Let $d$ be the dimension of $U$, and let
\[
    \ket{\Phi}=\frac{1}{\sqrt d}\sum_{j=1}^d\ket{j,j}.
\]
Preparing $\ket\Phi$ and projecting onto it use only $O(\log d)$ ordinary
gates.
To apply $U^\dagger$ to a state $\ket\psi_S$, prepare $\ket\Phi_{AB}$,
apply $\bar U$ to $A$, and postselect $S,A$ onto $\ket\Phi$.  The state left
on $B$ is
\[
 (\bra\Phi_{SA}\otimes I_B)
 \bigl(\ket\psi_S\otimes(\bar U_A\otimes I_B)\ket\Phi_{AB}\bigr)
 =\frac1d U^\dagger\ket\psi_B.
\]
This postselection succeeds with probability $d^{-2}$.  The same construction
with $U$ in place of $\bar U$ applies $U^T=(\bar U)^\dagger$.  Thus every
oracle gate in $Q^\dagger$ can be applied using the allowed queries.  If $Q$
makes $t$ queries, all the teleportations succeed with probability $d^{-2t}$,
which is inverse exponential because $\log d$ and $t$ are polynomial.  The
thresholding lemma now gives a $\PostBQP^{U,\bar U}$ circuit.

For the other containment, suppose a postselected circuit has postselection
probability $a$ and conditional acceptance probability $b$.  Output its
answer when postselection succeeds and a fair bit otherwise.  The acceptance
probability becomes
\[
    a b+\frac{1-a}{2}=\frac12+a\left(b-\frac12\right).
\]
Since $a$ is at least inverse exponential, this is a \PreciseBQP{} circuit.
Therefore $\PostBQP^{U,\bar U}=\PreciseBQP^{U,\bar U}$.
\end{proof}

\section{Equivalence with poly-dimension oracles } \label{sec:poly_dim_equiv}
We now turn to showing that these two classes are equivalent under any unitary oracle of polynomial dimension $d$. This follows from the observation that, up to a global phase, the complex conjugate representation $U^*$ of a unitary $U$ appears inside its $d-1$-fold tensor product $U^{\otimes (d -1)}$. In other words, $d-1$ queries to the unitary $U$ suffice to perfectly simulate queries to $U^*$. This is a fundamental fact about the representation theory of the unitary group, and was first viewed as an algorithm in~\cite{miyazaki2019complex}. For completeness we present here the algorithm as well as a short proof from first principles.

For each $x \in [d]$, let 
\[
    \ket{\bar{x}} 
    := 
    \frac{1}{\sqrt{(d-1)!}} 
    \sum_{
        \substack{
            \pi \in S_{d}
            \\
            \pi(d) = x
        }
    }
    \mathsf{sign}(\pi)
    \,
    \ket{\pi(1), \dots, \pi(d-1)}
    \,.
\]
That is, it is an antisymmetrization of all the values in the standard basis \emph{besides} $\ket{x}$.

Let $V$ be a partial isometry defined as \(V := \sum_{x \in [d]} \ketbra{\bar{x}}{x}\).
For every $\delta>0$, $V$ and $V^\dagger$ can be implemented to operator-norm
error at most $\delta$ using
$\poly(d,\log(1/\delta))$ gates.  For example, this follows from the efficient
approximate Schur transform~\cite{BCH05}, followed by compilation of its
uniformly specified local gates over our fixed gate set.  We use $V$ below for
the exact isometry and denote its circuit approximation by $\widetilde V$.

We claim that the complex conjugate (in the standard basis) of the unitary $U$ is given exactly by
\begin{align}
    U^* = \frac{1}{\det(U)} V^{\dagger} \; U^{\otimes (d-1)} \; V
    \label{eq:u-complex-conjugate}
\end{align}

\begin{proof}
    We left-multiply both sides of \autoref{eq:u-complex-conjugate} by $\det(U) \cdot U^T$ and show that 
    \begin{align*}
        U^T V^{\dagger} \; U^{\otimes (d-1)} \; V = \det(U) \cdot I
    \end{align*}

    First, consider the matrix elements \(\bra{\bar{y}}U^{\otimes (d-1)} \ket{\bar{x}}\) of the tensor product:

    \begin{align*}
        &
        \bra{\bar{y}}
        U^{\otimes (d-1)} 
        \ket{\bar{x}}
        \\
        &
        =
        \frac{1}{{(d-1)!}} 
        \sum_{
            \substack{
                \pi, \sigma \in S_{d}
                \\
                \pi(d) = y,
                \; 
                \sigma(d) = x
            }
        } 
        \mathsf{sign}(\pi \sigma^{-1}) 
        \bra{\pi(1), \dots, \pi(d-1)}
        \;
        U^{\otimes (d-1)} 
        \;
        \ket{\sigma(1), \dots, \sigma(d-1)}
        \allowdisplaybreaks
        \\
        &
        =
        \frac{1}{{(d-1)!}} 
        \sum_{
            \substack{
                \pi, \sigma \in S_{d}
                \\
                \pi(d) = y,
                \; 
                \sigma(d) = x
            }
        } 
        \mathsf{sign}(\pi \sigma^{-1}) 
        \prod_{
            i \in [d-1]
        }
        \bra{\pi(i)}
        U 
        \ket{\sigma(i)}
        \allowdisplaybreaks
        \\
        &
        =
        \frac{1}{{(d-1)!}} 
        \sum_{
            \substack{
                \pi, \sigma \in S_{d}
                \\
                \pi(d) = y,
                \; 
                \sigma(d) = x
            }
        } 
        \mathsf{sign}(\pi \sigma^{-1}) 
        \prod_{
            i \in [d] \setminus {x}
        }
        \bra{\pi \circ \sigma^{-1}(i)}
        U 
        \ket{i}
        \allowdisplaybreaks
        \\
        &
        =
        \sum_{
            \substack{
                \pi \in S_{d}
                \\
                \pi(x) = y,
            }
        } 
        \mathsf{sign}(\pi) 
        \prod_{
            i \in [d] \setminus {x}
        }
        \bra{\pi(i)}
        U 
        \ket{i}
    \end{align*}

    We now compute:
    \begin{align*}
        &
        \bra{z} U^T V^{\dagger} \; U^{\otimes (d-1)} \; V \ket{x}
        \allowdisplaybreaks
        \\
        &
        =
        \sum_{y \in [d]}
        \bra{z} U^T \ketbra{y}{y} V^{\dagger} \; U^{\otimes (d-1)} \; V \ket{x}
        \allowdisplaybreaks
        \\
        &
        =
        \sum_{y \in [d]}
        \bra{y} U \ketbra{z}{\bar y} U^{\otimes (d-1)}\ket{\bar x}
        \allowdisplaybreaks
        \\
        &
        =
        \sum_{y \in [d]}
        \bra{y} U \ket{z}
        \sum_{
            \substack{
                \pi \in S_{d}
                \\
                \pi(x) = y,
            }
        } 
        \mathsf{sign}(\pi) 
        \prod_{
            i \in [d] \setminus {x}
        }
        \bra{\pi(i)}
        U 
        \ket{i}
    \end{align*}

    Whenever $z = x$, then this is
    \begin{align*}
        \sum_{y \in [d]}
        \bra{y} U \ket{x}
        \sum_{
            \substack{
                \pi \in S_{d}
                \\
                \pi(x) = y,
            }
        } 
        \mathsf{sign}(\pi) 
        \prod_{
            i \in [d] \setminus {x}
        }
        \bra{\pi(i)}
        U 
        \ket{i}
        &
        =
        \sum_{
            \pi \in S_{d}
        } 
        \mathsf{sign}(\pi) 
        \prod_{
            i \in [d]
        }
        \bra{\pi(i)}
        U 
        \ket{i}
        \allowdisplaybreaks
        \\
        &
        =
        \det(U)
        \,,
    \end{align*}
    and when $z \ne x$, then this instead becomes $\det(U')$ where $U'$ is the (non-unitary) matrix that is the same matrix as $U$ but with the $x$\textsuperscript{th} column replaced with a copy of the $z$\textsuperscript{th} column, and thus having determinant 0.
    This shows that $\bra{z} U^T V^{\dagger} \; U^{\otimes (d-1)} \; V \ket{x} = \det(U) \; \delta_{zx}$ as desired.
\end{proof}

Thus \autoref{eq:u-complex-conjugate} is an exact identity of linear maps and,
up to the irrelevant global phase $\det(U)^{-1}$, it simulates the complex
conjugate of a $d$-dimensional unitary using $d-1$ queries.  Its implementation
over the fixed gate set is approximate, which we now account for.

Consider the nontrivial containment obtained by applying the conjugate-access
simulation from the previous section, and first amplify its ideal
postselected circuit to completeness at least $5/6$ and soundness at most
$1/6$.  Let $t=\poly(n)$ be the number of simulated conjugate queries, and
choose a polynomial $q$ such that $\mu\coloneqq 2^{-q(n)}$ lower-bounds its
postselection probability.  Replacing
each occurrence of $V$ and $V^\dagger$ by approximations of operator-norm
error at most $\delta$ changes each simulated conjugate query by at most
$2\delta$ and hence changes the final state by at most $2t\delta$, by a hybrid
argument.  Choose $\delta\leq\mu/(200t)$.  Then
\autoref{lem:finite_gate_postselection} shows that the compiled circuit still
has inverse-exponential postselection probability and that its conditional
acceptance probability changes by at most $1/50$.  Because
$d=\poly(n)$ and $\log(1/\delta)=\poly(n)$, implementing all copies of
$\widetilde V$ and $\widetilde V^\dagger$ still takes polynomial time.  The
previous section therefore rules out a separation under unitary oracles of
polynomial dimension in the fixed finite-gate model.

\section{Quantum state oracle separation}\label{sec:state_sep}
In this section we prove a quantum state oracle separation.
\begin{restatable}{thm}{thmStateOracleSeparation}\label{thm:q_state_separation}
    There exists a quantum state oracle $\CO_\psi$, such that $\PostBQP^{\CO_\psi}  \subsetneq \PreciseBQP^{\CO_\psi} $. 
\end{restatable}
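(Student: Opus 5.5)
We will prove the theorem by the standard route: establish a query lower bound for a promise problem on one-qubit state oracles, then diagonalize. For each $n$, consider single-qubit states
\[
\ket{\psi_{\tau,\theta,\phi}}=\sqrt{1-\tau}\,e^{i\theta}\ket{0}+\sqrt{\tau}\,e^{i\phi}\ket{1},
\]
with $\tau\in\{\tau_0,\tau_1\}$, where $\tau_0=\tfrac12-2^{-n}$ and $\tau_1=\tfrac12+2^{-n}$.

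The first step is the upper bound. Query the oracle once and measure in the computational basis. This accepts with probability exactly $\tau$, so $\tau_1$ versus $\tau_0$ is decided with gap $2^{-n+1}$. Any language $L$ encoded by the choice of $\tau$ at each length is therefore in $\PreciseBQP^{\CO_\psi}$ for every choice of phases.

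The second step is the lower bound for a $t$-query postselected algorithm. Without loss of generality all queries are made at the start, so the final state is a fixed circuit applied to $\ket{\psi}^{\otimes t}$. The joint accept-and-postselect probability $P$ and the postselection probability $Q$ are then Hermitian forms in $\ket{\psi}^{\otimes t}$. Their monomials have the form
\[
(1-\tau)^{(s+s')/2}\,\tau^{(r+r')/2}\,e^{i\theta(s-s')+i\phi(r-r')}, \qquad s+r=s'+r'=t.
\]
Averaging over uniform $\theta,\phi$ kills every term with $s\neq s'$, so
\[
p(\tau)=\mathbb{E}_{\theta,\phi}P=\sum_s A_s(1-\tau)^s\tau^{t-s},
\]
where $A_s\ge0$ because each $A_s$ is the norm squared of a projected component. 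The average $q=\mathbb{E}Q$ has the same form. Correctness holds for every phase: $P\ge\tfrac23 Q$ in yes cases and $P\le\tfrac13 Q$ in no cases. These inequalities survive averaging, so $p/q$ must move from $\le 1/3$ to $\ge 2/3$ as $\tau$ goes from $\tau_0$ to $\tau_1$. Hence $\log(p/q)$ changes by at least $\log 2$.

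On $\tau\in[1/4,3/4]$ each monomial has logarithmic derivative
\[
-\frac{s}{1-\tau}+\frac{t-s}{\tau},
\]
whose magnitude is at most $4t$. Because the coefficients are nonnegative, $(\log p)'$ is a convex combination of these values, and the same holds for $q$. So $|(\log(p/q))'|\le 8t$, and over an interval of width $2^{-n+1}$ we get $8t\cdot 2^{-n+1}\ge\log 2$, i.e. $t=\Omega(2^n)$. Two details need care here. The requirement $\Pr[P]\ge 2^{-\mathrm{poly}}$ guarantees $q>0$, so the logarithms are defined. Intermediate measurements and workspace are harmless because the circuit can be purified.

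The third step is diagonalization. Enumerate the $\PostBQP$ machines $M_1,M_2,\dots$ and choose a random language $L$. At each length $n$ set $\tau=\tau_{L(1^n)}$ and draw the phases uniformly. The averaged argument of the second step applies to the machine's behaviour: if $M_i$ correctly decided $1^n$ for both values of $\tau$ with probability $1$ over the phases, the averaged inequalities would hold, contradicting $t=\mathrm{poly}(n)$. This fails whenever $2^n\gg$ the machine's runtime. So for each sufficiently large $n$, there is a value of $\tau$ and a phase choice on which $M_i$ errs on $1^n$. Fix lengths $n_1<n_2<\cdots$ spaced far enough apart that $M_i$ never queries lengths $n_j$ with $j>i$. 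Then choose $(\tau,\theta,\phi)$ at $n_i$ to defeat $M_i$ while freezing earlier choices. At all other lengths set $L$ to $0$ with a fixed state. The resulting $L$ lies in $\PreciseBQP^{\CO_\psi}\setminus\PostBQP^{\CO_\psi}$.

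The containment $\PostBQP^{\CO_\psi}\subseteq\PreciseBQP^{\CO_\psi}$ follows from the output-a-fair-bit-on-failure trick used in the proof of Theorem~\ref{thm:equal_with_inverse}, which uses no inverse access.

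The main obstacle is the second step. One must ensure the phase averaging really leaves only nonnegative diagonal coefficients, including with interleaved queries and adaptive processing. One must also handle exactness issues, since $\tau_0$ and $\tau_1$ involve square roots. We will handle the first point by deferring queries and using that distinct $(s,r)$ sectors are orthogonal characters of the torus.
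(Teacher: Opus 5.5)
Your proposal is correct and follows essentially the same route as the paper: phase-average the joint and postselection probabilities into nonnegative combinations of $(1-\tau)^s\tau^{r}$, bound each logarithmic derivative by $O(t)$ on a central interval to get $t=\Omega(2^n)$, and then diagonalize. The symmetric window around $\frac12$, deferring queries to get homogeneity, and tracking $\log(p/q)$ rather than $r$ are only cosmetic differences. One small detail to fix: the postselection bound gives $q>0$, but you also need $p>0$ on the interval for $\log p$. As in the paper, this holds because $p$ is a nonnegative combination of monomials that are strictly positive on $(0,1)$, and $p(\tau_1)\geq\frac23 q(\tau_1)>0$.
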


Specifically, there is a language which is contained in \PreciseBQP~but not in \PostBQP~relative to a quantum state oracle. We achieve this by proving an exponential \PostBQP~query lower bound for a promise problem over single qubit quantum state oracles. 

We first define what it means for \PreciseBQP~and \PostBQP~to solve a promise problem over quantum state oracles.

\begin{defn}
     For each $n \in \BN$, let $\CS_{\yes}^n$ and $\CS_{\no}^n$ be finite dimensional (disjoint) sets of quantum state oracles. We say that a \PreciseBQP~query algorithm $\CA$ solves the promise problem $\{(\CS_{\yes}^n,\CS_{\no}^n)\}_n$ if there exist polynomial-time computable functions $c, s : \BN \to [0, 1]$ and a polynomially bounded function $p : \mathbb{N} \to \mathbb{N}$ such that $c(n) - s(n) \ge 2^{-p(n)}$, and
    \begin{align}
        [\Pr[\CA^{\CO_{\psi}}(1^n) ~\text{accepts}] \geq c(n)] \qquad &\forall~\CO_{\psi} \in \CS_{\yes}^n \\
       [\Pr[\CA^{\CO_{\psi}}(1^n) ~\text{accepts}] \leq s(n)] \qquad &\forall~\CO_{\psi} \in \CS_{\no}^n
    \end{align}
\end{defn}

\begin{defn}
    For each $n \in \BN$, let $\CS_{\yes}^n$ and $\CS_{\no}^n$ be finite dimensional (disjoint) sets of quantum state oracles. We say that a \PostBQP~query algorithm $\CA$ solves the promise problem $\{(\CS_{\yes}^n,\CS_{\no}^n)\}_n$ if for all $n$,
    \begin{align}
        [\Pr[\CA^{\CO_{\psi}}(1^n) ~\text{accepts}] \geq 2/3] \qquad&\forall~\CO_{\psi} \in \CS_{\yes}^n \\
       [\Pr[\CA^{\CO_{\psi}}(1^n) ~\text{accepts}] \leq 1/3]\qquad &\forall~\CO_{\psi} \in \CS_{\no}^n
    \end{align}
\end{defn}

In this section, we show that there exists a promise problem over quantum state oracles which \PreciseBQP~can solve but \PostBQP~can't. By standard diagonalization arguments, \autoref{thm:q_state_separation} follows as a corollary.

\subsection{Quantum state oracle}\label{subsec:q_state_oracle}
Let us first describe the sets $\CS_{\yes}^n$ and $\CS_{\no}^n$. Let $\ket{\psi(\theta,\phi,\tau)} \coloneqq \sqrt{1-\tau} \e^{\ri \theta} \ket{0} + \sqrt{\tau}\e^{\ri \phi} \ket{1}$ be a parameterization of a single qubit state for $\tau\in [ 0,1]$ and $\theta ,\phi \in [ 0,2 \pi)$. We define $\CS_{\tau}$ as the set of quantum state oracles corresponding to the set of quantum states
\begin{align}
     \{\ket{\psi(\theta,\phi,\tau)}: \theta ,\phi \in [ 0,2 \pi)\}.
\end{align}
We consider the task of solving the promise problem $\L\{\L(\CS_{\frac{1}{2} + \epsilon_n}, \CS_{\frac{1}{2} }\R)\R\}_n$ for $\epsilon_n \coloneqq 2^{-n}$. In the following sections, we omit the dependence of $\epsilon$ on $n$ when it is clear from context.

\subsection{Proof outline}
This and the next sections are dedicated to proving the following lemma.
\begin{lem}\label{lem:promise_problem_separation}
    There exist (disjoint) sets of quantum state oracles $\CS_{\yes}^n$ and $\CS_{\no}^n$ for each $n \in \BN$, such that:
    \begin{enumerate}
        \item There exists a \PreciseBQP~query algorithm which solves the  problem $\L\{\L(\CS_{\yes}^n,\CS_{\no}^n\R)\R\}_n$. \label{eq:promise_prob_separation_eq1}
        \item No \PostBQP~algorithm solves the problem $\L\{\L(\CS_{\yes}^n,\CS_{\no}^n\R)\R\}_n$.\label{eq:promise_prob_separation_eq2}
    \end{enumerate}
\end{lem}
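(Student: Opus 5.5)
We take $\CS_{\yes}^n=\CS_{\frac12+\epsilon_n}$ and $\CS_{\no}^n=\CS_{\frac12}$ with $\epsilon_n=2^{-n}$, and prove the two items separately.

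\textbf{Item 1 (PreciseBQP upper bound).} Query the oracle once and measure the returned qubit in the computational basis, accepting on outcome $1$. The acceptance probability is exactly $\tau$: it equals $\frac12+\epsilon_n$ on yes instances and $\frac12$ on no instances, independent of $\theta,\phi$. Taking $c(n)=\frac12+2^{-n}$ and $s(n)=\frac12$, both polynomial-time computable, gives gap $2^{-n}$. Hence this is a \PreciseBQP{} query algorithm.

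\textbf{Item 2 (PostBQP lower bound).} Suppose a postselected algorithm makes $t=\poly(n)$ queries. Purify and defer measurements, so that the final state is $\sum_{x\in\{0,1\}^t} \alpha^{x_1}\cdots\ket{\Phi_x}$, where each query contributes a factor $\sqrt{1-\tau}e^{\ri\theta}$ (for $x_i=0$) or $\sqrt\tau e^{\ri\phi}$ (for $x_i=1$). Collect terms by Hamming weight $k=|x|$. The amplitude of each branch is then $(1-\tau)^{(t-k)/2}\tau^{k/2}e^{\ri((t-k)\theta+k\phi)}\ket{\chi_k}$, with $\ket{\chi_k}$ independent of $(\theta,\phi,\tau)$. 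Let $P(\theta,\phi,\tau)$ be the probability of accepting and postselecting, and $Q$ the probability of postselecting.

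Correctness gives $P\ge\frac23 Q$ pointwise on yes instances and $P\le\frac13 Q$ pointwise on no instances. Both inequalities are linear, so they survive averaging over uniform $\theta,\phi$. Averaging kills every cross term between different $k$, because the phases $e^{\ri(k-k')(\phi-\theta)}$ average to zero. This leaves
\[
p(\tau)=\sum_{k=0}^t A_k(1-\tau)^{t-k}\tau^k,\qquad q(\tau)=\sum_{k=0}^t B_k(1-\tau)^{t-k}\tau^k,
\]
with $A_k,B_k\ge0$. The condition $\Pr[\text{postselect}]\ge2^{-\poly}$ ensures $q>0$ on the relevant interval. We therefore need $p(\frac12+\epsilon)/q(\frac12+\epsilon)\ge\frac23$ and $p(\frac12)/q(\frac12)\le\frac13$, so $\log(p/q)$ must change by at least $\log 2$ across an interval of width $\epsilon$.

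\textbf{Main step (the derivative bound).} For a monomial $m_k=(1-\tau)^{t-k}\tau^k$ we have $(\log m_k)'=k/\tau-(t-k)/(1-\tau)$. On $\tau\in[\frac14,\frac34]$ this is at most $4t$ in absolute value. Because all coefficients are nonnegative, $(\log p)'=\sum_k w_k(\log m_k)'$, where the weights $w_k=A_km_k/p$ form a convex combination. Hence $|(\log p)'|\le4t$, and likewise $|(\log q)'|\le4t$ (if $p\equiv0$ on the interval, the yes condition already fails). It follows that $|\log(p/q)(\frac12+\epsilon)-\log(p/q)(\frac12)|\le8t\epsilon$. Combining with the required change of $\log 2$ forces $t\ge\log2/(8\epsilon)=\Omega(2^n)$, which contradicts $t=\poly(n)$.

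The point needing the most care is this derivative step. Both the averaging argument and the reduction to nonnegative coefficients are needed for it: the ratio of averages, not the average of ratios, is what we control, and the convex-combination bound on the logarithmic derivatives depends on positivity. A secondary technical issue is that with the finite-gate model the gates are fixed and independent of the oracle, so $\ket{\chi_k}$ really are oracle-independent and no approximation is needed.
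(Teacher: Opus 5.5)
Your proposal is correct and follows essentially the same route as the paper: the same sets $\CS_{\frac12+\epsilon_n}$ and $\CS_{\frac12}$, the same single-copy computational-basis measurement for item 1, and for item 2 the same phase averaging to positive-coefficient polynomials in $\tau$, followed by the convex-combination bound on the logarithmic derivatives. The differences are cosmetic. You use the homogeneous tensor-power structure (collecting terms by Hamming weight) where the paper allows general exponents $s,r\leq t$. You also integrate $(\log(p/q))'$ directly, whereas the paper bounds $r'$ via $r\leq 1$ and applies the mean value theorem.
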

Item \ref{eq:promise_prob_separation_eq1} can be proved by a simple algorithm.
\begin{claim}\label{claim:preciseBQP_algo}
    There exists a \PreciseBQP~query algorithm which solves the problem described in \autoref{subsec:q_state_oracle}.
\end{claim}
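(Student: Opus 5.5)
The algorithm makes a single query and measures the returned qubit in the computational basis. It accepts iff the outcome is $\ket{1}$. For any state $\ket{\psi(\theta,\phi,\tau)}$ this outcome has probability exactly $\tau$, and the global and relative phases $\theta,\phi$ play no role. On $\CS_{\frac12+\epsilon_n}$ the acceptance probability is therefore exactly $\frac12+2^{-n}$, and on $\CS_{\frac12}$ it is exactly $\frac12$.

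We set $c(n)=\frac12+2^{-n}$, $s(n)=\frac12$ and $p(n)=n$. Both functions are clearly polynomial-time computable, and $c(n)-s(n)=2^{-n}\geq 2^{-p(n)}$. This matches the \PreciseBQP{} query-algorithm definition above with one query and constant-size circuits (one query plus one measurement), so the algorithm solves the promise problem.

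The only step needing any care is the fixed-gate-set model. Here it causes no difficulty, because the algorithm uses no gates at all beyond the oracle query and a standard-basis measurement, so no approximation is introduced and the gap stays exactly $2^{-n}$. If one also wanted the gap centered, or wanted to amplify the gap, one could repeat the measurement on polynomially many copies. This is not needed, since the \PreciseBQP{} definition already allows an inverse-exponential gap with arbitrary polynomial-time computable thresholds. I expect no real obstacle in this claim; the substance of \autoref{lem:promise_problem_separation} lies in item~\ref{eq:promise_prob_separation_eq2}.
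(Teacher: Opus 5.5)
Your proposal is correct and is the same argument the paper gives: one query, a standard-basis measurement accepting on outcome $1$, acceptance probability exactly $\tau$ independent of the phases, hence thresholds $c(n)=\frac12+2^{-n}$ and $s(n)=\frac12$ with gap $2^{-n}$. Your added remarks on polynomial-time computability of the thresholds and on the absence of gate-approximation error are correct and only make explicit what the paper leaves implicit.
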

\begin{proof}
    The \PreciseBQP~algorithm that distinguishes $\CS_{\frac{1}{2} }$ and $\CS_{\frac{1}{2} + \epsilon}$ is simple. Given a single copy of the input state, $\ket{\psi(\theta,\phi,\tau)}$, the algorithm measures in the standard basis. For an input state $\ket{\psi(\theta,\phi,\tau)}$, the algorithm accepts with probability 
    \begin{align}
    \L|\bra{1}\ket{\psi(\theta,\phi,\tau)}\R|^2 = \L|\bra{1} \L(\sqrt{1-\tau}\e^{\ri \theta} \ket{0} + \sqrt{\tau}\e^{\ri \phi} \ket{1}\R)\R|^2 = \tau.
    \end{align}
    Thus for any state drawn from $\CS_{\frac{1}{2}}$, the algorithm accepts with probability $\frac{1}{2}$ and for any state drawn from $\CS_{\frac{1}{2}+ \epsilon}$, the algorithm accepts with probability $\frac{1}{2}+\epsilon$ where $\epsilon = \epsilon_n = 2^{-n}$.
\end{proof}
Item \ref{eq:promise_prob_separation_eq2} results from the following exponential query lower bound for any \PostBQP~algorithm. 
\begin{lem}\label{lem:query_lower_bound}
    Consider the promise problem described in \autoref{subsec:q_state_oracle}. Any \PostBQP~ query algorithm
    requires $\Omega(2^n)$ queries to solve the problem.
\end{lem}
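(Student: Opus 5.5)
The plan follows the proof sketch in the introduction. Fix a \PostBQP{} algorithm $\CA$ that makes $t$ queries to the state oracle. Its final state before measurement is a fixed unitary applied to $\ket{0\cdots0}\otimes\ket{\psi}^{\otimes t}$, where the copies can be prepared up front without loss of generality. Write
\[
P(\theta,\phi,\tau)=\Pr[O=1\wedge P=1], \qquad Q(\theta,\phi,\tau)=\Pr[P=1].
\]
Each amplitude of the final state is a linear combination of the monomials
\[
(\sqrt{1-\tau}\,e^{\ri\theta})^{t-k}(\sqrt{\tau}\,e^{\ri\phi})^{k}, \qquad k=0,\dots,t,
\]
with coefficients that do not depend on the oracle. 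The phase factor of the $k$-th monomial is $e^{\ri((t-k)\theta+k\phi)}$, and these phases are distinct for distinct $k$.

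\textbf{Step 1: phase averaging.} Average $P$ and $Q$ separately over uniform $\theta,\phi\in[0,2\pi)$. Cross terms between different $k$ carry nontrivial characters, so they vanish. This leaves
\[
p(\tau)=\sum_{k}A_k(1-\tau)^{t-k}\tau^k, \qquad q(\tau)=\sum_k B_k(1-\tau)^{t-k}\tau^k,
\]
with $0\le A_k\le B_k$. Both coefficients are squared norms of projected vectors.

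Correctness holds pointwise for every state in $\CS_\tau$. For yes instances, $P\ge \tfrac23 Q$ at every $(\theta,\phi)$, so averaging gives $p(\tfrac12+\epsilon)\ge\tfrac23\,q(\tfrac12+\epsilon)$. For no instances, the same argument gives $p(\tfrac12)\le\tfrac13\,q(\tfrac12)$. The constraint $\Pr[P=1]\ge 2^{-\poly}$ ensures $q>0$ at both points; in fact $q>0$ there whenever some $B_k>0$. So the ratio $r=p/q$ satisfies $r(\tfrac12)\le\tfrac13$ and $r(\tfrac12+\epsilon)\ge\tfrac23$. In particular $p(\tfrac12+\epsilon)>0$ and $p\not\equiv0$.

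\textbf{Step 2: log-derivative bound.} This is the key step. On $\tau\in[\tfrac14,\tfrac34]$, consider any monomial $m_k(\tau)=(1-\tau)^{t-k}\tau^k$. Its logarithmic derivative is
\[
\frac{k}{\tau}-\frac{t-k}{1-\tau},
\]
which has absolute value at most $4t$. Since the $A_k$ are nonnegative, $p'/p=\sum_k w_k\,(m_k'/m_k)$ with weights
\[
w_k=\frac{A_km_k}{p}\ge0, \qquad \sum_k w_k=1.
\]
Hence $|(\log p)'|\le 4t$, and likewise $|(\log q)'|\le 4t$, on the whole interval. Both $p$ and $q$ are strictly positive there, since a nonzero nonnegative combination of positive monomials is positive. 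Therefore $|(\log r)'|\le 8t$.

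\textbf{Step 3: conclude.} Integrate from $\tfrac12$ to $\tfrac12+\epsilon$:
\[
\log 2\le \log\frac{r(\tfrac12+\epsilon)}{r(\tfrac12)}\le 8t\epsilon.
\]
With $\epsilon=2^{-n}$, this gives $t\ge \frac{\log 2}{8}\,2^n=\Omega(2^n)$.

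\textbf{Main obstacle.} I expect the main obstacle to be justifying that $p$ is positive on the interval, so that $\log r$ is defined. This reduces to checking that some $A_k>0$, which follows from $p(\tfrac12+\epsilon)>0$.

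There is one further technicality: general algorithms may interleave queries adaptively with workspace operations. Because the state oracle takes no input, each query only appends a fresh $\ket\psi$, so all queries commute to the beginning. Intermediate measurements can be deferred. The amplitude structure above therefore holds for every algorithm.
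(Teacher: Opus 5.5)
Your proposal is correct and follows essentially the same route as the paper: phase-average $P$ and $Q$ separately to obtain nonnegative-coefficient polynomials in $\tau$, bound each logarithmic derivative by a convex combination of monomial log-derivatives, and conclude that the ratio cannot change by a constant factor across a window of width $2^{-n}$ unless $t=\Omega(2^n)$. The differences are cosmetic. You use homogeneity to index monomials by a single $k$, while the paper keeps a two-index sum and remarks that homogeneity is unnecessary. You also integrate $(\log r)'$ directly to get $\log 2\le 8t\epsilon$, whereas the paper uses $r\le 1$ to bound $|r'|$ and applies the mean value theorem to $r$ itself.
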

We defer the proof to \autoref{subsec:query_lower_bound}.
\begin{proof}[Proof of \autoref{lem:promise_problem_separation}]
    By \autoref{lem:query_lower_bound}, for any \PostBQP~query algorithm with runtime $p(n)$, there is an $n_0\in\BN$ such that the algorithm fails to solve the problem $(\CS_{\frac{1}{2} + \epsilon_n} , \CS_{\frac{1}{2} })$ with $\epsilon_n = 2^{-n}$ for all $n>n_0$. Combined with \autoref{claim:preciseBQP_algo}, the lemma follows. 
\end{proof}

\subsection{\texorpdfstring{\PostBQP}{PostBQP} query lower bound}\label{subsec:query_lower_bound}

\begin{proof}[Proof of \autoref{lem:query_lower_bound}]

First, it is necessary to understand the close relationship between \PostBQP~query algorithms and bounded, low-degree rational functions. 

Let $\CA$ be a \PostBQP~algorithm that makes $t$ queries to the quantum state oracle for the single qubit state $\ket{\psi(\theta,\phi,\tau)}$. In the following, it is convenient to define the amplitudes of the state  $\ket{\psi(\theta,\phi,\tau)}$ by $a_{\theta,\phi,\tau} = \sqrt{1-\tau}\e^{\ri \theta}$ and  $b_{\theta,\phi,\tau} = \sqrt{\tau}\e^{\ri \phi}$ so that $\ket{\psi(\theta,\phi,\tau)} = a_{\theta,\phi,\tau} \ket{0} + b_{\theta,\phi,\tau}\ket{1}$. When the context is clear, we omit the dependence on the parameters $\theta,\phi,\tau$ so that we write $\ket{\psi(\theta,\phi,\tau)} = a \ket{0} + b\ket{1}$. Let $d$ be the dimension of the algorithm. It is well known that the amplitudes of the final state $\ket{\psi_f}$ may be represented as bounded, degree $t$ multivariate polynomials $p_i, i\in [d]$ in the amplitudes of $\ket{\psi(\theta,\phi,\tau)}$, 
\begin{align}
    \ket{\psi_f} \coloneqq \sum_{i\in[d]} p_i(a,b)\ket{i} \quad \quad\text{s.t.} ~\sum_i |p_i(a,b)|^2 = 1.
\end{align}
\PostBQP~algorithms may then post-select on a measurement $M$, 
\begin{align}
    \ket{\psi_{post}} \coloneqq \frac{M \ket{\psi_f} }{\sqrt{\bra{\psi_f}M^\dag M \ket{\psi_f}}} \quad \quad \text{w.p.}  \bra{\psi_f}M^\dag M \ket{\psi_f}.
\end{align}
Without loss of generality, let $M$ be the measurement $\ket{0}$ on the first qubit, and let $Q \subset [d]$ be the subset of post-selected basis states (containing $\ket{0}$ on the first qubit). Then
\begin{align}
    \ket{\psi_{post}} = \frac{1}{\sqrt{\bra{\psi_f}M^\dag M \ket{\psi_f}}} \sum_{i\in Q} p_i(a,b)\ket{i}.
\end{align}
After post-selection, the acceptance probability may be given by the probability of measuring $\ket{0}$ on the second qubit, again without loss of generality. We denote the subset of basis states that contribute to the acceptance probability by $\CP \subset \CQ$, and the final acceptance probability of the algorithm is
\begin{align}
    \Pr[\CA^{\CO_\psi}(1^n)~accepts]= \frac{\sum_{i \in \CP} |p_i(a,b)|^2 }{\sum_{i \in \CQ} |p_i(a,b)|^2}.
\end{align} 
Note that $ 0 \leq \sum_{i \in \CP} |p_i(a,b)|^2 \leq \sum_{i \in \CQ} |p_i(a,b)|^2 \leq 1$. We show that any rational function of this form cannot at the same time be $> 2/3 $ on all input $a,b$ corresponding to $\CS_{\frac{1}{2}+\epsilon}$ and $<1/3 $ on all input $a,b$ corresponding to $\CS_{\frac{1}{2}}$.
    
Our proof strategy is to bound the derivative of the rational function representing the acceptance probability, which will then limit its sensitivity to small changes in $\tau$.
First we show that this worst case statement reduces to an average case statement. For notational simplicity, let $P(a,b) \coloneqq \sum_{i \in \CP} |p_i(a,b)|^2$ and $Q(a,b) \coloneqq \sum_{i \in \CQ} |p_i(a,b)|^2$.
By definition, if $\CA$ solves the promise problem $(\CS_{\frac{1}{2}},\CS_{\frac{1}{2} + \epsilon})$,
\begin{enumerate} 
    \item $P(a,b) < \frac{1}{3} Q(a,b)$ for $\tau=\frac{1}{2}$ and for all $\theta, \phi$, and
    \item $P(a,b) > \frac{2}{3} Q(a,b)$ for $\tau=\frac{1}{2}+\epsilon$ and for all $\theta, \phi$. 
\end{enumerate}
We will abuse the notation $\BE_{\CS}$ to mean an expectation over the Haar distribution of the set $\CS$ such that $\theta, \phi$ are chosen independently and uniformly on $[0,2\pi)$. It follows that in expectation 
\begin{align}
    \BE_{\CS_\frac{1}{2}} P(a,b) < \frac{1}{3} \BE_{\CS_\frac{1}{2}} Q(a,b) \qquad \text{and} \qquad
    \BE_{\CS_{\frac{1}{2}+\epsilon} } P(a,b) > \frac{2}{3} \BE_{\CS_{\frac{1}{2}+\epsilon}} Q(a,b),
\end{align}
or written more suggestively, 
\begin{align}
    \frac{\BE_{\CS_{\tau}} P(a,b)}{\BE_{\CS_{\tau}} Q(a,b)} < \frac{1}{3} \text{ for } \tau = \frac{1}{2} \qquad  \text{and} \qquad 
    \frac{\BE_{\CS_{\tau} } P(a,b)}{\BE_{\CS_{\tau}} Q(a,b)} > \frac{2}{3}   \text{ for } \tau = \frac{1}{2}+\epsilon.\label{eq:ratio}
\end{align}
We will find that by symmetrizing over the phases, $\BE_{\CS_{\tau}} P(a,b)$ and $\BE_{\CS_{\tau}} Q(a,b)$  simplify to degree $t$ polynomials in $\tau$ with notably \emph{positive} coefficients. In turn, the positivity of these coefficients constrains the expressiveness of these polynomials \eqref{eq:ratio} to such an extent that their ratio is no longer sensitive to small changes in $\tau$. 
To emphasize dependence on $\tau$, we define 
\begin{align*}
    p_{sym}(\tau) \coloneqq \BE_{\CS_{\tau}} P(a,b) \qquad \text{and} \qquad
    q_{sym}(\tau)  \coloneqq \BE_{\CS_{\tau}} Q(a,b),
\end{align*} 
and their ratio 
\begin{align}
    r(\tau) \coloneqq \frac{p_{sym} (\tau)}{q_{sym}(\tau)} = \frac{\BE_{\CS_{\tau}} P(a,b)}{\BE_{\CS_{\tau}} Q(a,b)} = \frac{\sum_{i \in \CP}\BE_{\CS_{\tau}} |p_i(a,b)|^2 }{\sum_{i \in \CQ}\BE_{\CS_{\tau}} |p_i(a,b)|^2}.
\end{align}
Without loss of generality, let us analyze the contribution of a single amplitude $\BE_{\CS_{\tau}} |p_i(a,b)|^2$. Recall that $p_i(a,b)$ is a degree $t$ polynomial in the amplitudes $a$ and $b$ of $\ket{\psi(\theta, \phi, \tau)}$. Thus, it may be defined by coefficients $ c_{i,s,r} \in \BC$ as
\begin{align}
    p_i(a,b) \coloneqq \sum_{s,r=0}^t c_{i,s,r}a^s b^r.
\end{align}
In fact, $p_i$ should actually be \emph{homogeneous} polynomial of degree $t$ up to a shift by a constant, but we do not need to use this fact for the proof. 
\begin{align}
    \BE_{\CS_{\tau}} |p_i(a,b)|^2 &=  \BE_{\CS_{\tau}} \L(\sum_{s,r=0}^t \bar{c}_{i,s,r}\bar{a}^s \bar{b}^r \R)\L(\sum_{s,r=0}^t c_{i,s,r}a^s b^r\R) \label{eq:j1} \\
    &= \sum_{s,r,s',r'=0}^t \BE_{\CS_{\tau}} \bar{c}_{i,s',r'}\bar{a}^{s'} \bar{b}^{r'} c_{i,s,r}a^s b^r.
\end{align}
Recall, $a_{\theta, \phi, \tau}= \sqrt{1-\tau} \e^{\ri \theta}, b_{\theta, \phi, \tau} = \sqrt{\tau}\e^{\ri \phi}$. By symmetrizing over the phase $\theta$ and $\phi$, all of the cross terms from the square cannot contribute to the expectation, 
\begin{align}
    \BE_{\CS_{\tau}} \bar{c}_{i,s',r'}\bar{a}^{s'} \bar{b}^{r'} c_{i,s,r}a^s b^r &= \BE_{\CS_{\tau}} \bar{c}_{i,s',r'} c_{i,s,r} (\sqrt{1-\tau})^{s+s'} \e^{\ri \theta(s-s')}  (\sqrt{\tau})^{r+r'} \e^{\ri \phi (r-r')}\\
    &= \delta_{s,s'} \delta_{r,r'}|c_{i,s,r}|^2 (1-\tau)^s \tau^r.
\end{align}
Substituting this into \eqref{eq:j1}, we have 
\begin{align}
    \BE_{\CS_{\tau}} |p_i(a,b)|^2 = \sum_{s,r=0}^t |c_{i,s,r}|^2 (1-\tau)^s \tau^r.
\end{align}
Generalizing this calculation, 
\begin{align}
    p_{sym}(\tau) &= \sum_{i \in \CP}\BE_{\CS_{\tau}} |p_i(a,b)|^2 = \sum_{i \in \CP} \sum_{s,r=0}^t |c_{i,s,r}|^2 (1-\tau)^s \tau^r, \\ 
    q_{sym}(\tau)  &= \sum_{i \in \CQ} \BE_{\CS_{\tau}}|p_i(a,b)|^2 = \sum_{i \in \CQ} \sum_{s,r=0}^t |c_{i,s,r}|^2 (1-\tau)^s \tau^r, \\ 
\end{align}
and putting these together,
\begin{align}
    r(\tau) = \frac{p_{sym}(\tau)}{q_{sym}(\tau) } = \frac{\sum_{i \in \CP} \sum_{s,r=0}^t |c_{i,s,r}|^2 (1-\tau)^s \tau^r}{\sum_{i \in \CQ} \sum_{s,r=0}^t |c_{i,s,r}|^2 (1-\tau)^s \tau^r}.
\end{align}
Next, we bound the derivative $r'(\tau)$.  Since
$r(\frac12+\epsilon)>2/3$, the polynomial $p_{sym}$ is not identically zero.
All its coefficients are nonnegative and its monomials are strictly positive
for $\tau\in(0,1)$, so $p_{sym}(\tau),q_{sym}(\tau)>0$ throughout the relevant
interval.  Thus $r(\tau)>0$, and it suffices to bound the log-derivative.
Because $r(\tau)\leq1$,
\begin{align}
    |r'(\tau)|
    =r(\tau)\left|\frac{\diff}{\diff\tau}\log r(\tau)\right|
    \leq \left|\frac{\diff}{\diff\tau}\log r(\tau)\right|.
\end{align}
We have
\begin{align}
    \frac{\diff}{\diff \tau} \log r(\tau)
    =\frac{\diff}{\diff \tau}\left(\log p_{sym}(\tau)-\log q_{sym}(\tau)\right)
    =\frac{p_{sym}'(\tau)}{p_{sym}(\tau)}
     -\frac{q_{sym}'(\tau)}{q_{sym}(\tau)}.
\end{align}
Isolating a single summand of $p_{sym}(\tau)$,
\begin{align}
     \frac{\diff}{\diff \tau} \L(\tau^r (1-\tau)^s \R)  = \L(\frac{r}{\tau} - \frac{s}{1-\tau}\R) \tau^r(1-\tau)^s,
\end{align}
and thus
\begin{align}
    \frac{p_{sym}'(\tau)}{p_{sym}(\tau)} = \frac{\sum_{i \in \CP} \sum_{s,r=0}^t |c_{i,s,r}|^2 (1-\tau)^s \tau^r \L(\frac{r}{\tau} - \frac{s}{1-\tau}\R) }{\sum_{i \in \CP} \sum_{s,r=0}^t |c_{i,s,r}|^2 (1-\tau)^s \tau^r}
\end{align}
On the domain $\tau \in [0,1]$, observe that $(1-\tau)^s \tau^r \geq 0$. It follows that we may interpret $\frac{p_{sym}'(\tau)}{p_{sym}(\tau)}$ as a weighted sum of the terms $\L(\frac{r}{\tau} - \frac{s}{1-\tau}\R)$ with \emph{positive} weights $|c_{i,s,r}|^2 (1-\tau)^s \tau^r$. Since $r$ and $s$ are bounded by $t$, $-\frac{t}{1-\tau} \leq \frac{r}{\tau} - \frac{s}{1-\tau} \leq  \frac{t}{\tau}$ and 
\begin{align}
    \L|\frac{p_{sym}'(\tau)}{p_{sym}(\tau)}\R| &= \L|\frac{\sum_{i \in \CP} \sum_{s,r=0}^t |c_{i,s,r}|^2 (1-\tau)^s \tau^r \L(\frac{r}{\tau} - \frac{s}{1-\tau}\R) }{\sum_{i \in \CP} \sum_{s,r=0}^t |c_{i,s,r}|^2 (1-\tau)^s \tau^r}\R| \\
    &\leq \frac{\sum_{i \in \CP} \sum_{s,r=0}^t |c_{i,s,r}|^2 (1-\tau)^s \tau^r \L|\frac{r}{\tau} - \frac{s}{1-\tau}\R| }{\sum_{i \in \CP} \sum_{s,r=0}^t |c_{i,s,r}|^2 (1-\tau)^s \tau^r} \\
    &\leq \max\L(\frac{t}{1-\tau},\frac{t}{\tau}\R).
\end{align}
A similar argument gives the bound $\L|\frac{q_{sym}'(\tau)}{q_{sym}(\tau)}\R| \leq \max\L(\frac{t}{1-\tau},\frac{t}{\tau}\R)$.
Finally,
\begin{align}
    |r'(\tau)|
    \leq \L|\frac{\diff}{\diff \tau} \log r(\tau) \R|
    = \L|\frac{p_{sym}'(\tau)}{p_{sym}(\tau)} - \frac{q_{sym}'(\tau)}{q_{sym}(\tau)}\R|
    \leq \L|\frac{p_{sym}'(\tau)}{p_{sym}(\tau)}\R| +\L| \frac{q_{sym}'(\tau)}{q_{sym}(\tau)}\R|
    \leq  2\max\L(\frac{t}{1-\tau},\frac{t}{\tau}\R).
\end{align}
For the relevant range of $\tau\in [\frac{1}{2}, \frac{1}{2} + \epsilon]$, \begin{align}
    \sup_{\tau\in [\frac{1}{2}, \frac{1}{2} + \epsilon]} |r'(\tau)|
    \leq \sup_{\tau\in [\frac{1}{2}, \frac{1}{2} + \epsilon]}
    2\max\L(\frac{t}{1-\tau},\frac{t}{\tau}\R)
    \leq \frac{4t}{1 - 2\epsilon}.
\end{align}
By the mean value theorem, the rational function $r(\tau)$ cannot be too sensitive to small changes in $\tau$ around $\tau=\frac{1}{2}$,
\begin{align}
    \L|r\L(\frac{1}{2} + \epsilon\R) - r\L(\frac{1}{2}\R)\R|
    \leq \epsilon\sup_{\tau\in [\frac{1}{2}, \frac{1}{2} + \epsilon]}|r'(\tau)|
    \leq  \frac{4t\epsilon}{1 - 2\epsilon}.
\end{align}
Thus, the query complexity is lower bounded by $t=\Omega(2^n)$, in order for $\L|r\L(\frac{1}{2} + \epsilon\R) - r\L(\frac{1}{2}\R)\R| = \Omega(1)$
    
\end{proof}

\subsection{Proof of state oracle separation}
In this section, we construct an oracle $\CO_\psi^*$ from the sets $\CS_{\yes}$ and $\CS_{\no}$, such that $\PreciseBQP^{\CO_\psi^*} \not\subseteq \PostBQP^{\CO_\psi^*}$ via standard diagonalization techniques. 
\begin{proof}[Proof of \autoref{thm:q_state_separation}]\label{proof:proof_state_sep}
The oracle $\CO_\psi^*$ will encode the truth-tables of functions of different input lengths. By \autoref{lem:promise_problem_separation}, for every \PostBQP~query algorithm, there exists some $n_0 \in \BN$ such that for all $n>n_0$, the algorithm cannot meet both the completeness requirement (acceptance probability at least $2/3$ in the YES case) and the soundness requirement (acceptance probability at most $1/3$ in the NO case) on all $\CO_\psi \in \CS_{\yes} \cup \CS_{\no}$. 

It suffices to construct a quantum state oracle $\CO_\psi^*$, and a unary language $L^{\CO_\psi^*}$ such that any $\PostBQP^{\CO_\psi}$ algorithm fails on at least one input length. Standard arguments lift this statement so that any algorithm fails on infinitely many input lengths. Let 
\begin{align}
    L^{\CO_\psi^*} = {1^n : \CO_\psi^* |_n \in \CS_{\yes}}
\end{align}
where $\CO_\psi^* |_n$ represents the restriction to $\CO_\psi^*$ to inputs of length $n$. 
Let $\{M_i\}_{i \in \mathbb{N}}$ be an enumeration of \PostBQP~query algorithms. We build $\CO_\psi^*$ iteratively. For each $M_i$, let $n_i$ be the smallest input length such that $\CO_\psi^*$ has not been defined for strings of length $n_i$ and the run time of $M_i$ is less than the query lower bound. By \autoref{lem:promise_problem_separation}, there exists an $\CO_\psi \in \CS_{\yes} \cup \CS_{\no}$ on which $M_i$ violates the corresponding completeness or soundness requirement. Set $\CO_\psi^*$ at length $n_i$ to be $\CO_\psi$. By construction, every algorithm fails on at least one input length.

We note that $\PostBQP^{\CO_\psi}  \subseteq \PreciseBQP^{\CO_\psi} $ trivially. By definition, the \PostBQP~algorithm without post selection must have at least an inverse exponential completeness soundness gap. This proves the strict containment.

\end{proof}

\section{Superpoly dimensional unitary oracle separation}
Let $\T=S^1$, equipped with normalized Haar measure, and define
\[
  \langle \vw,\vv\rangle\coloneqq\sum_{i=1}^d\overline{w_i}v_i,
  \qquad
  \Phi(\vw,\vv)\coloneqq \operatorname{Re}\langle \vw,\vv\rangle.
\]
For $a\in(-d,d)\setminus\{d-2r:1\leq r\leq d-1\}$, let $D_a$ denote the
normalized Haar measure over $\T^d\times\T^d$ conditioned on
$\Phi(\vw,\vv)=a$. Here and throughout, conditioning on $\Phi=a$ is
understood in the sense of the regular conditional distribution, equivalently
the disintegration of product Haar measure with respect to $\Phi$. We consider
the distributional oracle problem of distinguishing between $D_a$ and
$D_{-a}$ for $a = O(n)$.

\subsection{\texorpdfstring{\PreciseBQP}{PreciseBQP} algorithm}

The following one-query Forrelation-like interference test suffices. Note that in the following we don't have an average case distinguisher, but rather a perfect distinguisher. That is, when viewed as a promise problem, there is a \PreciseBQP~algorithm that distinguishes $D_a$ from $D_{-a}$.

\begin{prop}\label{prop:unitary-promise-in-precisebqp}
For every $d$ and every $a>0$ for which $D_a$ and $D_{-a}$ are defined, the
promise problem of distinguishing $D_a$ from $D_{-a}$ admits a one-query
quantum algorithm with completeness--soundness gap $a/d$.  In particular,
taking $d=2^n$ and $a=1$ at input length $n$ gives a \PreciseBQP~algorithm.
\end{prop}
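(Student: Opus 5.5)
The plan is to analyze the circuit of \autoref{fig:unitary-preciseBQP-circuit} directly. The oracle is a $2d$-dimensional diagonal unitary
\[
U=\operatorname{diag}(w_1,\dots,w_d,v_1,\dots,v_d),
\]
acting on a block qubit and a $d$-dimensional index register. Index the block qubit so that block $0$ carries the $w_i$ and block $1$ carries the $v_i$.

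The algorithm proceeds as follows.
\begin{enumerate}
\item Prepare $\frac{1}{\sqrt{2d}}\sum_{b\in\{0,1\}}\sum_{i=1}^d\ket{b}\ket{i}$ (Hadamards on the block qubit, and a uniform superposition on the index register).
\item Apply $U$ once, giving $\frac{1}{\sqrt{2d}}\sum_i\bigl(w_i\ket{0}+v_i\ket{1}\bigr)\ket{i}$.
\item Apply $H$ to the block qubit and measure it, accepting on outcome $0$.
\end{enumerate}

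The amplitude on $\ket{0}\ket{i}$ is $\frac{w_i+v_i}{2\sqrt d}$. Hence the acceptance probability is
\[
\sum_{i=1}^d\frac{|w_i+v_i|^2}{4d}=\frac{1}{4d}\sum_{i=1}^d\bigl(2+2\operatorname{Re}(\overline{w_i}v_i)\bigr)=\frac12+\frac{\Phi(\vw,\vv)}{2d}.
\]

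For every $U$ in the support of $D_a$ we have $\Phi=a$ exactly, and on the support of $D_{-a}$ we have $\Phi=-a$. The algorithm therefore accepts with probability exactly $\frac12+\frac{a}{2d}$ on yes instances and exactly $\frac12-\frac{a}{2d}$ on no instances. This gives the gap $a/d$ in the worst case over the promise, not merely on average.

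The one point needing care is that ``the support of $D_a$'' means the level set $\{\Phi=a\}$. Conditioning in the sense of regular conditional distributions concentrates the measure on that level set. So one should define the promise sets as the level sets themselves, and then the identity above holds pointwise.

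For the \PreciseBQP\ statement, take $d=2^n$, so the index register is $n$ qubits and $H^{\otimes n}$ prepares the uniform superposition. Take $a=1$, and set
\[
c(n)=\tfrac12+2^{-n-1},\qquad s(n)=\tfrac12-2^{-n-1},
\]
which are polynomial-time computable and satisfy $c-s=2^{-n}$. The value $a=1$ is admissible: for $d\ge2$ it lies in $(-d,d)$, and it is not of the form $d-2r$ with $1\le r\le d-1$, since $d-2r$ is even when $d$ is a power of two.

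The circuit uses $n+1$ Hadamards and one query, so it is uniform and of polynomial size. Its gates are exact in the finite gate set, so no compilation error from \autoref{lem:finite_gate_postselection} arises.

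I expect no real obstacle here. The only subtlety is the measure-theoretic one about pointwise versus almost-sure validity, and it is resolved as described above.
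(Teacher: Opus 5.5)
Your proposal is correct and matches the paper's proof essentially step for step. You use the same one-query interference circuit, the same identity $p_{\mathrm{acc}}=\tfrac12+\Phi(\vw,\vv)/(2d)$, the same thresholds $c,s=\tfrac12\pm 2^{-(n+1)}$, and the same parity argument that $a=1$ avoids the even critical values $d-2r$ when $d=2^n$. Your remark that the promise sets should be read as the level sets $\{\Phi=\pm a\}$, so that the identity holds pointwise, is a harmless clarification consistent with how the paper uses $\operatorname{supp}(D_{\pm a})$.
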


\begin{proof}
View the oracle as acting on a block qubit and an index register according to
\[
  U\ket{0,i}=w_i\ket{0,i},
  \qquad
  U\ket{1,i}=v_i\ket{1,i},
  \qquad i\in[d].
\]
Let $\ket{u_d}=d^{-1/2}\sum_{i=1}^d\ket{i}$.  The algorithm prepares
$\ket{+}\ket{u_d}$, applies $U$, applies a Hadamard gate to the block qubit,
and accepts if that qubit is measured to be $0$.

The state after the oracle query is
\[
  \frac{1}{\sqrt{2d}}\sum_{i=1}^d
  \left(w_i\ket{0,i}+v_i\ket{1,i}\right).
\]
After the Hadamard gate, it becomes
\[
  \frac{1}{2\sqrt d}\sum_{i=1}^d
  \left((w_i+v_i)\ket{0,i}+(w_i-v_i)\ket{1,i}\right).
\]
Consequently, its acceptance probability is
\begin{align*}
  p_{\mathrm{acc}}(U)
    &=\frac{1}{4d}\sum_{i=1}^d|w_i+v_i|^2 \\
    &=\frac12+\frac{1}{2d}
      \operatorname{Re}\sum_{i=1}^d\overline{w_i}v_i \\
    &=\frac12+\frac{\Phi(\vw,\vv)}{2d}.
\end{align*}
Thus
\[
  p_{\mathrm{acc}}(U)=
  \begin{cases}
    \dfrac12+\dfrac{a}{2d}, & U\in\operatorname{supp}(D_a),\\[4pt]
    \dfrac12-\dfrac{a}{2d}, & U\in\operatorname{supp}(D_{-a}).
  \end{cases}
\]
We may therefore take
\[
  c=\frac12+\frac{a}{2d},
  \qquad
  s=\frac12-\frac{a}{2d},
  \qquad
  c-s=\frac{a}{d}.
\]

For the oracle separation, at input length $n\geq1$ take $d=2^n$ and
$a=1$.  Since $d$ is even and all critical values $d-2r$ are even, $D_1$ and $D_{-1}$ are well-defined.
The oracle acts on $n+1$ qubits, and $\ket{u_d}$ is prepared exactly by
applying a Hadamard gate to each of the $n$ index qubits.  The resulting
thresholds $c=1/2+2^{-(n+1)}$ and
$s=1/2-2^{-(n+1)}$ are polynomial-time computable, and
$c-s=1/d=2^{-n}$, as required for \PreciseBQP. 
\end{proof}

\subsection{\texorpdfstring{\PostBQP}{PostBQP} query lower bound}\label{subsec:unitary_lower_bound}
The key of this argument is that low-query algorithm sees only a small part of the oracle. After the coordinates it ``sees'' are fixed, many coordinates remain. Their real parts add up to a broad distribution that is close to
a Gaussian near its center.  Conditioning this sum to equal $a$ rather than $-a$ therefore has very little effect on anything that depends on at most $t$ coordinates.  

To make this idea precise, recall that the algorithm has no access to
$U^\dagger$ or $\overline U$.  Each joint postselection-and-decision
probability is therefore a sum of squared moduli of holomorphic polynomials,
meaning polynomials that use the entries of $U$ but not their complex
conjugates.  Their degree is at most the number $t$ of oracle queries.  We
prove that the expectation of any such sum of squares under $D_a$ and
$D_{-a}$ differs by at most a multiplicative factor of
$\exp(O(|a|t/d))$.  When $d=2^n$ and $t=\operatorname{poly}(n)$, this factor
is $1+o(1)$.  A correct \PostBQP~distinguisher, however, would force the ratio
of its joint acceptance and rejection probabilities to change by a factor of
at least four.  These two conclusions are incompatible for large $n$.

The lower bound proceeds as follows. Note that the third step is unique/non-standard, and the entire argument relies heavily on the multiplicative error shown.

\begin{enumerate}
    \item Symmetrization: We first average over the phases shared by $\vw$ and $\vv$.  This
    breaks the acceptance polynomial into nonnegative terms, each of which
    depends on at most $t$ variables $z_i=\overline{w_i}v_i$.
    \item Once these variables are fixed, the condition
    $\sum_i\operatorname{Re}z_i=\pm a$ is controlled by the $\leq t$ coordinates that
    remain.  Their contribution has the same density as a sum of independent
    cosines.
    \item Comparing the marginals: Near the center of that density, replacing the condition $a$ by
    $-a$ changes the marginal distribution on any subset of $\leq t$ coordinates by only a $1+o(1)$ multiplicative factor pointwise.
    \autoref{fig:relative-error-intuition} illustrates this comparison.
    \item Once we have that the marginal distribution over any subset of $\leq t$ coordinates is pointwise $1+o(1)$ multiplicatively close, this implies that the expectation of our non-negative and bounded polynomials are $1+o(1)$ multiplicatively close as well. Now by the symmetrization, this implies the expectation of the overall polynomials are $1+o(1)$ multiplicatively close as well, which is incompatible with any $\PostBQP$ algorithm.    
\end{enumerate}

Writing $\overline{w_i}v_i=e^{i\theta_i}$ gives
\[
  \Phi(\vw,\vv)=\sum_{i=1}^d\cos\theta_i.
\]
The differential of $\Phi$ vanishes precisely when
$\theta_i\in\{0,\pi\}$ for every $i$.  Hence the critical values of $\Phi$ are
$d-2r$, for $r=0,\ldots,d$, and $a$ is an interior regular value precisely when
\begin{equation}\label{eq:regular-values}
  a\in(-d,d)\setminus\{d-2r:1\leq r\leq d-1\}.
\end{equation}
This condition ensures that conditioning on $\Phi=a$ gives a well-behaved
surface measure rather than concentrating on a critical level set.

Let
\[
  P(\vw,\vv)=\sum_j|p_j(\vw,\vv)|^2,
\]
where the sum is finite and each $p_j$ is a polynomial of total degree at most
$t$ in $w_1,\ldots,w_d,v_1,\ldots,v_d$, with no conjugated variables.  Assume
throughout that $0\leq t<d$.

Let $\theta_1,\theta_2,\ldots$ be independent and uniform on $[0,2\pi)$, and
denote by $f_m$ the even density of
\[
  S_m=\sum_{i=1}^m\cos\theta_i.
\]
The next lemma is the precise comparison we need.  Its exponent is small when
the polynomial examines far fewer than $d$ coordinates.
\begin{lem}\label{thm:comparison}
For every $A>0$, there exist an integer $m_A\geq5$ and a constant
$K_A<\infty$ with the following property.  Let $d\geq1$ and $0\leq t<d$ be
integers, let $a$ be an interior regular value of $\Phi$, and suppose that
\begin{equation}\label{eq:central-window-hypotheses}
  d-t\geq m_A,
  \qquad
  |a|+t\leq A\sqrt{d-t}.
\end{equation}
Then every sum of squares $P$ defined above satisfies
\begin{equation}\label{eq:comparison}
  e^{-\delta_{d,t,a}}\BE_{D_{-a}}P
  \leq \BE_{D_a}P
  \leq e^{\delta_{d,t,a}}\BE_{D_{-a}}P,
\end{equation}
where
\begin{equation}\label{eq:comparison-delta}
  \delta_{d,t,a}
  =\frac{4|a|t}{d-t}
   +K_A\frac{t}{(d-t)^{3/2}}.
\end{equation}
In particular, $\delta_{d,t,a}=o(1)$ when $a=O(1)$,
$t=\operatorname{poly}(n)$, and $d=2^n$.
\end{lem}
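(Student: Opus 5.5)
The plan follows the outline given just before the statement: symmetrize, reduce to functions of at most $t$ coordinates, and compare marginal densities pointwise.

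\emph{Symmetrization.} Both $D_a$ and $D_{-a}$ are invariant under the torus action $(w_i,v_i)\mapsto(\lambda_i w_i,\lambda_i v_i)$, with $\lambda\in\T^d$ Haar distributed, because $\Phi$ depends only on $z_i=\overline{w_i}v_i$. So we may replace $P$ by its average over $\lambda$.

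Expand each $p_j$ into holomorphic monomials $\prod_i w_i^{\alpha_i}v_i^{\beta_i}$. Averaging $|p_j|^2$ over $\lambda$ kills every cross term between monomials whose ``column degrees'' $\alpha_i+\beta_i$ differ in some $i$. Hence $\BE_\lambda|p_j|^2$ is a sum of nonnegative terms $|q_{j,\gamma}|^2$, where $q_{j,\gamma}$ collects the monomials of $p_j$ with column-degree vector $\gamma$.

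On $\T$ we have $w_i^{\alpha}v_i^{\beta}=w_i^{\alpha+\beta}z_i^{\beta}$. Therefore $|q_{j,\gamma}|^2=|\tilde q_{j,\gamma}(z)|^2$, where $\tilde q_{j,\gamma}$ is a polynomial in the $z_i$ with $\gamma_i>0$. Since $|\gamma|\le t$, it involves at most $t$ coordinates.

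By linearity and nonnegativity, it therefore suffices to prove the inequality with $P$ replaced by any single nonnegative measurable $H(z_S)$ with $|S|=k\le t$. Under the conditioning the $z_i$ are i.i.d.\ uniform on $\T$, constrained by $\sum_i\operatorname{Re}z_i=a$. I would not need boundedness of $H$ beyond integrability.

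\emph{Marginal comparison.} By the coarea/disintegration formula, the conditional law of $z_S$ under $D_a$ has density, relative to Haar measure on $\T^k$,
\[
 \frac{f_{d-k}(a-s)}{f_d(a)},\qquad s=\sum_{i\in S}\cos\theta_i .
\]
This requires $f_m$ to be continuous for $m\ge m_A\ge5$, which holds because the characteristic function $J_0(\xi)^m$ is integrable once $m\ge 3$. The regular-value hypothesis makes the disintegration legitimate.

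Since $f_d$ is even, $f_d(a)=f_d(-a)$. Since $f_{d-k}$ is even, the $D_{-a}$ density is $f_{d-k}(a+s)/f_d(a)$. So it suffices to show the pointwise bound
\[
\left|\log\frac{f_m(a-s)}{f_m(a+s)}\right|\le \frac{4|a||s|}{m}+\frac{K_A |s|}{m^{3/2}}
\]
for $m=d-k\ge d-t$, uniformly in $|s|\le k\le t$. Combined with $|s|\le t$ and $m\ge d-t$, this gives $\delta_{d,t,a}$. Multiplying $H$ by this ratio and integrating then yields \eqref{eq:comparison} for each term, and hence for $P$.

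\emph{Main obstacle: a local limit theorem with relative error.} We need $f_m(x)=\sqrt{1/(\pi m)}\,e^{-x^2/m}(1+\varepsilon_m(x))$, where $\varepsilon_m$ has controlled \emph{derivative} on the window $|x|\le A\sqrt m$, since $|a\pm s|\le |a|+t\le A\sqrt{d-t}$. Concretely, I would bound
\[
\left|\frac{d}{dx}\log f_m(x)+\frac{2x}{m}\right|\le \frac{K_A}{2m^{3/2}}
\]
on that window. Integrating this from $a-s$ to $a+s$ gives the Gaussian part $4as/m$ plus the error $K_A|s|/m^{3/2}$.

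To prove it, write $f_m$ and $f_m'$ by Fourier inversion of $J_0(\xi)^m$. Split into the region $|\xi|\le m^{-1/2}\log m$ and its complement. On the outer region, $|J_0(\xi)|\le e^{-c\xi^2}$ for $|\xi|\le\pi$ and $|J_0(\xi)|\le c'<1$ beyond, together with the decay $|J_0(\xi)|=O(|\xi|^{-1/2})$, give super-polynomially small contributions. On the inner region, use the cumulant expansion $\log J_0(\xi)=-\xi^2/4-\xi^4/64+O(\xi^6)$ to obtain an Edgeworth expansion. Because cosine is symmetric, the odd cumulants vanish, so the first correction is $O(1/m)$ and even in $x$. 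The corrected density ratio then has log-derivative $-2x/m+O(|x|/m^2+|x|^3/m^3)$. On the window $|x|\le A\sqrt m$ this is $-2x/m+O_A(m^{-3/2})$, with $f_m$ bounded below by $c_A/\sqrt m$ there. Choosing $m_A$ large absorbs the remaining constants. This uniform Edgeworth estimate, with lower bounds on $f_m$ so that logs make sense, is where I expect the real work.
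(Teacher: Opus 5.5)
Your proof is correct, and its overall structure is the paper's: the same torus averaging that splits $P$ into nonnegative terms $|F_{j,\gamma}(z)|^2$ on at most $t$ coordinates, the same conditional density $f_{d-k}(a\mp s)/f_d(a)$, and the same integration of a score bound $|f_m'/f_m+2x/m|\le B_A m^{-3/2}$ over $[a-s,a+s]$.

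The real difference is how that score bound is proved. You plan an Edgeworth expansion, controlling $f_m$ and $f_m'$ separately and taking the lower bound on $f_m$ from the local limit theorem. The paper never expands $f_m$. It writes the residual of the Gaussian ODE, $R_m=f_m'+\frac{2x}{m}f_m$, as a single Fourier integral of $J_0(\xi)^{m-1}\bigl(\xi J_0(\xi)+2J_0'(\xi)\bigr)$. It then observes that the bracket is $-\xi^3/8+O(\xi^5)$, which gives $|R_m(x)|\le C/m^2$ uniformly in $x$. The lower bound $f_m\ge e^{-A^2}/(8\sqrt m)$ on the window comes from two facts. Chebyshev gives some $|x_0|\le\sqrt m$ with $f_m(x_0)\ge 1/(4\sqrt m)$. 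The exact identity $e^{x^2/m}f_m(x)=e^{x_0^2/m}f_m(x_0)+\int_{x_0}^x e^{u^2/m}R_m(u)\,du$ then transports this bound across the window. Both proofs rest on the same symmetry: your vanishing odd cumulants are the paper's oddness of $\xi J_0+2J_0'$. The paper's route avoids Hermite bookkeeping and estimates on derivatives of remainders. Yours yields a full asymptotic for $f_m$, at the cost of more work.

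You also need less than you plan, because the $H_4$ correction is unnecessary. On $|\xi|\le m^{-1/2}\log m$ one has $J_0(\xi)^m-e^{-m\xi^2/4}=e^{-m\xi^2/4}\,O(m\xi^4)$. So with $g_m$ the $N(0,m/2)$ density, $|f_m-g_m|=O(m^{-3/2})$ and $|f_m'-g_m'|=O(m^{-2})$ uniformly. Since $g_m'/g_m=-2x/m$ exactly and $|g_m'/g_m|\le 2A/\sqrt m$ on the window, use the identity $\frac{f_m'}{f_m}-\frac{g_m'}{g_m}=\frac{f_m'-g_m'}{f_m}-\frac{g_m'}{g_m}\cdot\frac{f_m-g_m}{f_m}$. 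Together with $f_m\ge c_A m^{-1/2}$, which the same estimate supplies, this already gives the $O_A(m^{-3/2})$ bound.
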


\paragraph{Symmetrization.}
We begin by using the phase symmetry of the distribution.  Averaging over the
shared phases removes cross terms and leaves pieces that each involve only a
small set of coordinates.

\begin{claim}\label{lem:coordinate-reduction}
Let $\mu_a$ be normalized Haar measure on $\T^d$ conditioned on
$\sum_{i=1}^d\operatorname{Re}z_i=a$.  There exist polynomials
$F_{j,\gamma}$, each depending on at most $t$ coordinates of $z$, such that
\begin{equation}\label{eq:decomposition}
  \BE_{D_a}P
  =\sum_{j,\gamma}\BE_{\mu_a}|F_{j,\gamma}(\vz)|^2.
\end{equation}
\end{claim}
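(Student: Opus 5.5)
The plan is to exploit the invariance of $D_a$ under the phase action $(w_i,v_i)\mapsto(e^{\ri\alpha_i}w_i,e^{\ri\alpha_i}v_i)$ for independent $\alpha_1,\ldots,\alpha_d$.

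\textbf{Step 1: averaging over shared phases.} Since $\overline{w_i}v_i$ is unchanged by this action, $\Phi$ is unchanged, so $D_a$ is invariant and
\[
\BE_{D_a}P=\BE_{D_a}\BE_{\alpha}P(e^{\ri\alpha}\vw,e^{\ri\alpha}\vv).
\]
Substitute $v_i=w_iz_i$ with $z_i=\overline{w_i}v_i$. Under $D_a$, the coordinates $(\vw,\vz)$ are distributed as independent Haar $\vw$ times $\vz\sim\mu_a$. This holds because the change of variables $(\vw,\vv)\mapsto(\vw,\vz)$ preserves product Haar measure, and the conditioning involves only $\vz$. The disintegration is justified by the regular-value hypothesis \eqref{eq:regular-values}.

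\textbf{Step 2: expanding each $p_j$.} Each $p_j(\vw,\vv)=p_j(\vw,\vw\vz)$ is a polynomial in $w$ and $z$ with no conjugates. Expand it as $\sum_{\gamma} \vw^{\gamma}\, G_{j,\gamma}(\vz)$, where $\gamma\in\BN^d$ ranges over exponent vectors of $\vw$.

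A monomial $w^{\beta}v^{\beta'}$ becomes $w^{\beta+\beta'}z^{\beta'}$. So $G_{j,\gamma}$ collects terms $c_{\beta,\beta'}z^{\beta'}$ with $\beta+\beta'=\gamma$. Every $z$-variable appearing in $G_{j,\gamma}$ has index in $\operatorname{supp}(\gamma)$, and $|\operatorname{supp}\gamma|\le|\gamma|\le t$.

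\textbf{Step 3: removing cross terms.} Take the expectation over the independent Haar $\vw$, which is where the absence of conjugate variables is used. We have $\BE_{\vw}\overline{\vw^{\gamma'}}\vw^{\gamma}=\delta_{\gamma\gamma'}$, since each $\gamma$ has nonnegative exponents only. Hence
\[
\BE_{\vw}|p_j|^2=\sum_\gamma|G_{j,\gamma}(\vz)|^2 .
\]
Setting $F_{j,\gamma}=G_{j,\gamma}$ and summing over $j$, then taking $\BE_{\mu_a}$ by Fubini (the sums are finite), gives \eqref{eq:decomposition}.

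The main obstacle is Step 1: rigorously identifying the law of $(\vw,\vz)$ under the regular conditional measure $D_a$ as Haar $\otimes\,\mu_a$. I would handle it by checking that for bounded test functions $g(\vw)h(\vz)$ and $\psi(\Phi)$, the joint integral factors under product Haar measure, because $\vw$ and $\vz$ are independent there and $\Phi$ depends only on $\vz$. Uniqueness of disintegration, holding almost everywhere and then everywhere by continuity at regular values, then gives the claim.
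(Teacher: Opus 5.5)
Your proposal is correct and follows the paper's proof essentially step for step. It uses the measure-preserving change of variables $v_i=w_iz_i$, under which $w$ is Haar and independent of $z\sim\mu_a$; then the expansion $p_j(w,wz)=\sum_\gamma w^\gamma F_{j,\gamma}(z)$, whose $z$-dependence is confined to the support of $\gamma$; then orthogonality of the characters $w^\gamma$. Your Step 1 phase-averaging is redundant given that factorization, and your disintegration sketch is more careful than the paper's one-line assertion.

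One small correction: $\mathbb{E}_w\overline{w^{\gamma'}}w^{\gamma}=\delta_{\gamma\gamma'}$ holds for all integer exponent vectors, so conjugate-freeness is really used in Step 2. There it ensures $G_{j,\gamma}$ involves only coordinates in the support of $\gamma$. With conjugates, a term like $\overline{w_i}v_i=z_i$ would land in $\gamma=0$ and destroy this locality.
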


\begin{proof}
Set $z_i=\overline{w_i}v_i$, so that $v_i=w_i z_i$ and
\[
  \operatorname{Re}\langle \vw,\vv\rangle
  =\sum_{i=1}^d\operatorname{Re}z_i.
\]
The map $(\vw,\vv)\mapsto(\vw,\vz)$ preserves normalized Haar measure.
Conditional on $\sum_i\operatorname{Re}z_i=a$, the variable $\vw$ is therefore
independent of $z$ and Haar distributed on $\T^d$.  Here Haar measure is simply
the uniform, rotation-invariant measure on the torus.

For
\[
  p_j(\vw,\vv)=\sum_{\alpha,\beta}c_{j,\alpha,\beta}\vw^\alpha \vv^\beta,
\]
substitution of $\vv=\vw z$ and collection of terms with
$\gamma=\alpha+\beta$ give
\begin{align*}
  p_j(\vw,\vw \vz)
    &=\sum_{|\gamma|\leq t}\vw^\gamma F_{j,\gamma}(\vz),\\
  F_{j,\gamma}(\vz)
    &=\sum_{\beta\leq\gamma}
      c_{j,\gamma-\beta,\beta}\vz^\beta.
\end{align*}
The polynomial $F_{j,\gamma}$ depends only on the coordinates indexed by
$\Supp(\gamma)$, and
\[
  |\Supp(\gamma)|\leq|\gamma|\leq t.
\]
Orthogonality of the characters $\vw^\gamma$ yields
\[
  \BE_{\vw}|p_j(\vw,\vw z)|^2=\sum_\gamma|F_{j,\gamma}(\vz)|^2.
\]
In other words, averaging over the common phases makes all terms with
different values of $\gamma$ cancel.  Integration with respect to $\mu_a$ and
summation over $j$ prove \eqref{eq:decomposition}.
\end{proof}

\paragraph{Conditioning on the remaining coordinates.}
Each surviving term now depends on at most $t$ coordinates.  The next claim
describes the distribution of those coordinates after conditioning on the
total sum.
\begin{claim}\label{lem:conditional-marginal}
Let $0\leq k<d$, and let $H\geq0$ be a measurable function of the first $k$
coordinates of $z$.  With
\[
  s=\sum_{i=1}^k\operatorname{Re}z_i,
\]
one has
\begin{align}
  \BE_{\mu_a}H
    &=\frac{1}{f_d(a)}
      \int_{\T^k}H(\vz)f_{d-k}(a-s)\,d\lambda^k(\vz),
      \label{eq:marginal-a}\\
  \BE_{\mu_{-a}}H
    &=\frac{1}{f_d(a)}
      \int_{\T^k}H(\vz)f_{d-k}(a+s)\,d\lambda^k(\vz),
      \label{eq:marginal-minus-a}
\end{align}
where $\lambda$ is normalized Haar measure on $\T$.
\end{claim}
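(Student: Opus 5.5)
The plan is to write $\mu_a$ explicitly as a disintegration of the product Haar measure $\lambda^d$ along the map $\Psi(\vz)=\sum_{i=1}^d\operatorname{Re}z_i$, and then to integrate out the last $d-k$ coordinates. Writing $z_i=e^{\ri\theta_i}$, the pushforward of $\lambda^d$ under $\Psi$ is the law of $S_d$, with density $f_d$. Since $a$ is an interior regular value, $f_d$ is continuous and positive near $a$.

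The regular conditional distribution is then characterized by a single identity. For bounded measurable $G$ and test functions $g$ on $\BR$,
\[
  \int_{\T^d}G(\vz)\,g(\Psi(\vz))\,d\lambda^d(\vz)=\int_{\BR}g(x)\,f_d(x)\,\BE_{\mu_x}G\,dx .
\]
To get \eqref{eq:marginal-a}, I take $G=H$ depending only on $\vz_{\le k}$ and split $\Psi=s+S'$, where $S'=\sum_{i>k}\operatorname{Re}z_i$. The variable $S'$ is independent of $\vz_{\le k}$ and has density $f_{d-k}$. This density exists because $d-k\geq1$; when $d-k=1$ it is the arcsine density, which is integrable. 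Fubini then gives
\[
  \int H(\vz)\,g(s+S')\,d\lambda^d=\int_{\BR}g(x)\int_{\T^k}H(\vz)\,f_{d-k}(x-s)\,d\lambda^k(\vz)\,dx .
\]
Comparing the two expressions, we get for a.e.\ $x$
\[
  f_d(x)\,\BE_{\mu_x}H=\int_{\T^k}H\,f_{d-k}(x-s)\,d\lambda^k .
\]

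The main obstacle is upgrading ``a.e.\ $x$'' to the specific value $x=a$. I would handle this by showing that both sides are continuous in $x$ at regular values. The right-hand side is continuous in $x$ when $d-k\ge 3$, because $f_{d-k}$ is then bounded and continuous, so dominated convergence applies. For $d-k\in\{1,2\}$ I would instead argue via the smooth-coarea formula. It identifies $\mu_x$ as the normalized measure $d\sigma/|\nabla\Psi|$ on the regular level set, and this depends continuously on $x$ near a regular value. One then defines $\mu_a$ as this canonical continuous version. Alternatively, since the claim is only used with $k\le t$ and $d-t\ge m_A\ge5$, it suffices to note that case and treat $d-k\ge 5$, where $f_{d-k}$ is $C^1$. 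Dividing by $f_d(a)>0$ then yields \eqref{eq:marginal-a}.

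For \eqref{eq:marginal-minus-a}, I apply the same formula at $x=-a$ and use evenness: $f_d(-a)=f_d(a)$ and $f_{d-k}(-a-s)=f_{d-k}(a+s)$. Evenness holds because $\theta\mapsto\theta+\pi$ maps $\cos\theta$ to $-\cos\theta$ and preserves the uniform law. Nonnegativity of $H$ lets all the Fubini and monotone-convergence steps go through without integrability assumptions, since everything is valued in $[0,\infty]$.
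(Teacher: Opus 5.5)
Your proposal is correct and takes the same route as the paper. Both arguments disintegrate product Haar measure along $\sum_i\operatorname{Re}z_i$, use that the last $d-k$ coordinates are independent of the first $k$ with sum density $f_{d-k}$, and invoke evenness of $f_d$ and $f_{d-k}$ for $-a$. The paper gives this only as a two-line sketch, so your explicit handling of the a.e.-to-pointwise issue adds detail the paper leaves implicit; you do this either via a continuous (coarea) version of $\mu_x$ at regular values, or by restricting to $d-k\geq5$ as in the application.
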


\begin{proof}
After the first $k$ coordinates contribute $s$, the remaining $d-k$
coordinates must contribute $a-s$.  Their sum has density $f_{d-k}$, which
accounts for the factor $f_{d-k}(a-s)$ in \eqref{eq:marginal-a}.
Disintegration with respect to $\sum_{i=1}^d\operatorname{Re}z_i$ makes this
calculation formal.  Applying the same identity at $-a$ and using the evenness
of $f_d$ and $f_{d-k}$ gives \eqref{eq:marginal-minus-a}.
\end{proof}

\paragraph{Comparing the marginals.}
For large $m$, the center of $f_m$ behaves like a Gaussian density with
variance $m/2$.  What matters here is not only the height of the density but
how quickly it changes.  The relevant quantity is its logarithmic derivative
$f_m'/f_m$, which measures relative rather than absolute change.  The
following proposition says that this logarithmic derivative is close to
$-2x/m$ in the central window, exactly as it would be for a Gaussian with
variance $m/2$.

\begin{prop}\label{prop:direct-score}
For every $A>0$, there exist a constant $B_A<\infty$ and an integer
$m_A\geq5$ such that, for every $m\geq m_A$ and $|x|\leq A\sqrt m$,
\begin{equation}\label{eq:direct-score}
  \left|
    \frac{f_m'(x)}{f_m(x)}+\frac{2x}{m}
  \right|
  \leq \frac{B_A}{m^{3/2}}.
\end{equation}
\end{prop}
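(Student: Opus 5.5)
We prove \autoref{prop:direct-score} by Fourier inversion combined with a local expansion of the characteristic function; the target is a local central limit theorem for the score $f_m'/f_m$. Let $\varphi(u)=\BE e^{iu\cos\theta}=J_0(u)$. The density and its derivative are then
\[
  f_m(x)=\frac{1}{2\pi}\int_{\mathbb R}J_0(u)^m\cos(ux)\,du,
  \qquad
  f_m'(x)=-\frac{1}{2\pi}\int_{\mathbb R}u\,J_0(u)^m\sin(ux)\,du .
\]
Since $|J_0(u)|\le C|u|^{-1/2}$, both integrals converge absolutely once $m\ge 5$, and this is where $m_A\ge5$ comes from. In addition, $|J_0(u)|\le 1-c\min(u^2,1)$ away from the origin, so $|J_0(u)|\le e^{-c'}<1$ for $|u|\ge \eta$. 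Hence the region $|u|\ge\eta$ contributes at most $O(\rho^m)$ with some $\rho<1$ to both integrals. This is exponentially small, and in particular $O(m^{-3/2})\cdot f_m(x)$ after the lower bound below.

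On $|u|\le\eta$ we write $J_0(u)=\exp(-u^2/4-u^4/64+O(u^6))$. Substituting $u=v\sqrt{2/m}$ gives
\[
  J_0(u)^m=e^{-v^2/2}\bigl(1-v^4/(16m)+O((v^4+v^8)/m^2)\bigr)
\]
in the region $|v|\le m^{1/4}$; for $m^{1/4}\le|v|\le\eta\sqrt{m/2}$ we simply bound by $e^{-cv^2}$. With $y=x\sqrt{2/m}$, so $|y|\le \sqrt2A$, this yields an Edgeworth-type expansion
\[
  f_m(x)=\sqrt{\tfrac{1}{\pi m}}\,e^{-x^2/m}\bigl(1+h_1(y)/m+O_A(m^{-2})\bigr),
\]
where $h_1$ is a polynomial coming from the Fourier transform of $v^4e^{-v^2/2}$. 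Carrying out the same expansion on the derivative integral gives
\[
  f_m'(x)=\sqrt{\tfrac{1}{\pi m}}\,e^{-x^2/m}\Bigl(-\tfrac{2x}{m}\bigl(1+h_1(y)/m\bigr)+\sqrt{\tfrac{2}{m}}\,h_1'(y)/m+O_A(m^{-5/2})\Bigr).
\]
The odd-order cumulants of $\cos\theta$ vanish, so no $m^{-1/2}$ correction appears. Because $|y|\le\sqrt2A$, the Gaussian factor $e^{-x^2/m}\ge e^{-A^2}$ is bounded below. We can therefore divide the two expansions. The common factor $(1+h_1/m)$ cancels, and we are left with
\[
  \frac{f_m'}{f_m}=-\frac{2x}{m}+O_A(m^{-3/2}).
\]
Here $h_1'$ is bounded on the window, so the term $\sqrt{2/m}\,h_1'/m$ is $O_A(m^{-3/2})$. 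The uniform lower bound on $f_m$ from the previous step is also what makes the exponentially small tail errors negligible relative to $f_m(x)$.

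The main obstacle is the remainder control in the derivative integral: the error must be $O(m^{-5/2})$ uniformly for $|x|\le A\sqrt m$, not merely relative to the peak. We handle this with explicit Taylor bounds on $\log J_0$ near $0$, namely $|\log J_0(u)+u^2/4+u^4/64|\le Cu^6$ for $|u|\le\eta$, together with the inequality $|e^{z}-1-z|\le |z|^2e^{|z|}$ applied to the quartic correction. The factor $\sin(ux)$ is simply bounded by $1$ in absolute value in the error terms, which is why uniformity in $x$ is automatic and only the lower bound on $f_m$ depends on $A$. We then choose $m_A$ large enough that the exponential tails and the $O(m^{-2})$ terms are dominated, and $B_A$ absorbs all constants.
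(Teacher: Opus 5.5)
Your proof is correct, but it takes a different route from the paper.

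\textbf{Your route.} You expand $f_m$ and $f_m'$ separately by a local Edgeworth expansion around the Gaussian of variance $m/2$. This uses the rescaling $u=v\sqrt{2/m}$, the split at $|v|=m^{1/4}$, and the sixth-order Taylor control of $\log J_0$. You then divide the two expansions.

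\textbf{The paper's route.} The paper never expands $f_m$ at all. It works with the residual $R_m(x)=f_m'(x)+\frac{2x}{m}f_m(x)$ directly. Its Fourier integrand is $e^{-i\xi x}J_0(\xi)^{m-1}\bigl(\xi J_0(\xi)+2J_0'(\xi)\bigr)$. The bracket equals $-\xi^3/8+O(\xi^5)$, because $J_0$ matches $e^{-\xi^2/4}$ to second order. That single cancellation, together with a three-region frequency split, gives the absolute bound $|R_m(x)|\le C/m^2$ uniformly over all $x\in\mathbb R$. The lower bound $f_m\ge e^{-A^2}/(8\sqrt m)$ on the window is obtained softly: Chebyshev produces a point $x_0$ with $f_m(x_0)\ge 1/(4\sqrt m)$, and the integrating-factor identity for $e^{x^2/m}f_m$ carries this bound across $|x|\le A\sqrt m$.

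\textbf{What each buys.} The paper's argument avoids Gaussian integrals, Hermite polynomials, and any local CLT. Your argument yields more. It gives two-sided asymptotics for $f_m$ itself. It also gives the explicit leading correction to the score, $\sqrt{2/m}\,h_1'(y)/m=\frac{\sqrt2\,(3y-y^3)}{4m^{3/2}}$ with $y=x\sqrt{2/m}$, which shows that the $m^{-3/2}$ rate in the proposition is sharp.

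\textbf{A simplification.} You carry one more term than needed. Since $2x/m=O_A(m^{-1/2})$, a zeroth-order local CLT already suffices: relative error $O(m^{-1})$ for $f_m$ and $O(m^{-3/2})$ for $f_m'$ give the claim. So the cancellation of the common factor $(1+h_1/m)$ is not essential.
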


The proof is given in \autoref{app:direct-score} and follows by a direct calculation using an explicit formula for the density $f_m$ in terms of Bessel functions. The estimate says that,
within the central window, $f_m$ changes at nearly the same relative rate as a Gaussian density with variance $m/2$.

We can now compare the density at the two points that appear in the conditional marginals.  Integrating the logarithmic derivative turns the local estimate above into a multiplicative comparison.

\begin{claim}\label{lem:central-density-ratio}
For every $A>0$, let $B_A$ and $m_A$ be as in
\autoref{prop:direct-score}.  If $m\geq m_A$ and
$|a|+|s|\leq A\sqrt m$, then
\begin{equation}\label{eq:direct-density-ratio}
  \left|
    \log\frac{f_m(a-s)}{f_m(a+s)}-\frac{4as}{m}
  \right|
  \leq 2B_A\frac{|s|}{m^{3/2}}.
\end{equation}
\end{claim}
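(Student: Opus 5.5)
The plan is to integrate the logarithmic derivative of $f_m$ along the segment from $a+s$ to $a-s$ and then apply \autoref{prop:direct-score} pointwise. Both endpoints satisfy $|a\pm s|\leq |a|+|s|\leq A\sqrt m$, so the whole segment lies in the central window $|x|\leq A\sqrt m$ where that proposition applies.

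First I need $f_m>0$ on this segment, so that $\log f_m$ is defined and differentiable there. For $m\geq m_A\geq 5$, the density $f_m$ is a convolution of $m$ copies of the arcsine density and is continuous (indeed $C^1$ for $m$ large enough). Its support is $[-m,m]$ and it is strictly positive on the open interval. The window $A\sqrt m$ lies inside $(-m,m)$ once $m$ is large, and we may enlarge $m_A$ if needed to guarantee this. This positivity is also implicit in the statement of \autoref{prop:direct-score}, since $f_m'/f_m$ appears there.

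With positivity in hand, the fundamental theorem of calculus gives
\[
  \log\frac{f_m(a-s)}{f_m(a+s)}=\int_{a+s}^{a-s}\frac{f_m'(x)}{f_m(x)}\,dx .
\]
Write $f_m'/f_m=-2x/m+E(x)$, where $|E(x)|\leq B_A m^{-3/2}$ on the window. The main term is
\[
  \int_{a+s}^{a-s}\left(-\frac{2x}{m}\right)dx=-\frac{(a-s)^2-(a+s)^2}{m}=\frac{4as}{m}.
\]
The error term is bounded by the length of the segment times the sup bound, that is $|2s|\cdot B_A m^{-3/2}$, which gives exactly \eqref{eq:direct-density-ratio}.

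No step here is substantive, since all the analytic content sits in \autoref{prop:direct-score}. The only point needing care is justifying positivity and differentiability of $f_m$ on the window, which is where the assumption $m\geq m_A\geq 5$ is used. If the segment is reversed ($s<0$), the same computation holds with the orientation of the integral, and the absolute value of the error term is unchanged.
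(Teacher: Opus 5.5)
Your proof is correct and follows the paper's argument: integrate $f_m'/f_m$ over the segment between $a-s$ and $a+s$, which lies in the window $|x|\leq A\sqrt m$; the Gaussian term $-2x/m$ gives $4as/m$, and the remainder is at most the segment length $2|s|$ times $B_A m^{-3/2}$. You do not need to enlarge $m_A$ to get positivity of $f_m$ (doing so would change the claim, which fixes $m_A$ from \autoref{prop:direct-score}), because the proof of that proposition already gives $f_m(x)\geq e^{-A^2}/(8\sqrt m)>0$ throughout the window for every $m\geq m_A$.
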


\begin{proof}
The interval with endpoints $a-s$ and $a+s$ is contained in
$[-A\sqrt m,A\sqrt m]$. \autoref{prop:direct-score} gives
\begin{align*}
  \log\frac{f_m(a-s)}{f_m(a+s)}
    &=-\int_{a-s}^{a+s}\frac{f_m'(u)}{f_m(u)}\,du \\
    &=\frac{2}{m}\int_{a-s}^{a+s}u\,du+E_m(a,s) \\
    &=\frac{4as}{m}+E_m(a,s),
\end{align*}
where
\[
  |E_m(a,s)|\leq 2B_A\frac{|s|}{m^{3/2}}.
\]
This proves \eqref{eq:direct-density-ratio}.
\end{proof}

\paragraph{Implication for acceptance polynomials.}
We now have all the pieces needed for \autoref{thm:comparison}.  A term that
depends on $k\leq t$ coordinates leaves at least $d-t$ coordinates random, so
the central density estimate applies to the remaining sum.

\begin{proof}[Proof of \autoref{thm:comparison}.]
Let $H\geq0$ depend on $k\leq t$ coordinates.  Permutation invariance permits these coordinates to be relabeled as the first $k$.  Take $m_A$ and $B_A$
from \autoref{prop:direct-score}, and set $K_A=2B_A$.  For
$0\leq k\leq t$ and $|s|\leq k$, assumptions
\eqref{eq:central-window-hypotheses} imply
\[
  d-k\geq d-t\geq m_A,
  \qquad
  |a|+|s|
  \leq A\sqrt{d-t}
  \leq A\sqrt{d-k}.
\]
\autoref{lem:central-density-ratio}, applied with $m=d-k$, therefore gives
\begin{equation}\label{eq:pointwise-density-comparison}
  e^{-\delta_{d,t,a}}
  \leq
  \frac{f_{d-k}(a-s)}{f_{d-k}(a+s)}
  \leq
  e^{\delta_{d,t,a}}.
\end{equation}
Indeed, the absolute value of the logarithm of this ratio is at most
\[
  \frac{4|a||s|}{d-k}
  +2B_A\frac{|s|}{(d-k)^{3/2}}
  \leq
  \frac{4|a|t}{d-t}
  +K_A\frac{t}{(d-t)^{3/2}}.
\]

Equations \eqref{eq:marginal-a} and \eqref{eq:marginal-minus-a}, together
with the nonnegativity of $H$, now imply
\[
  e^{-\delta_{d,t,a}}\BE_{\mu_{-a}}H
  \leq \BE_{\mu_a}H
  \leq e^{\delta_{d,t,a}}\BE_{\mu_{-a}}H.
\]
Applying this inequality to every nonnegative term $|F_{j,\gamma}|^2$ in
\autoref{lem:coordinate-reduction} and summing over $(j,\gamma)$ proves
\eqref{eq:comparison}.  The final assertion follows immediately from
\eqref{eq:comparison-delta}.
\end{proof}

It remains to compare this near-indistinguishability with what a correct
postselected computation requires.  We apply the comparison theorem
separately to the joint probabilities of accepting and rejecting after
postselection.

\begin{lem}\label{lem:unitary-promise-lower-bound}
For $d=2^n$, every polynomial-query \PostBQP~algorithm fails on some oracle in
$\operatorname{supp}(D_1)\cup\operatorname{supp}(D_{-1})$ at every
sufficiently large input length $n$.
\end{lem}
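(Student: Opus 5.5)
We argue by contradiction, applying \autoref{thm:comparison} separately to the joint accept-and-postselect and reject-and-postselect probabilities. Fix a \PostBQP{} algorithm making $t=t(n)\le\poly(n)$ queries to the diagonal oracle $U$ of dimension $2d$ with $d=2^n$. Suppose that at some length $n$ it succeeds on every oracle in $\operatorname{supp}(D_1)\cup\operatorname{supp}(D_{-1})$.

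First we write the acceptance probabilities as sums of squares. Since only forward queries to $U$ are allowed, each amplitude of the final state is a polynomial of degree at most $t$ in the diagonal entries $w_1,\ldots,w_d,v_1,\ldots,v_d$, with no conjugated variables. This holds because every query multiplies a basis component by one entry, and fixed gates act linearly. Hence the joint probabilities
\[
P_{\mathrm{acc}}=\Pr[O=1,\,P=1],\qquad P_{\mathrm{rej}}=\Pr[O=0,\,P=1]
\]
are both sums $\sum_j|p_j|^2$ of exactly the form treated in \autoref{thm:comparison}.

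Next we record what correctness forces pointwise. On $\operatorname{supp}(D_1)$ conditional acceptance is at least $2/3$, so $P_{\mathrm{acc}}\ge 2P_{\mathrm{rej}}$. On $\operatorname{supp}(D_{-1})$ it is at most $1/3$, so $P_{\mathrm{acc}}\le \tfrac12 P_{\mathrm{rej}}$. Taking expectations preserves both inequalities:
\[
\BE_{D_1}P_{\mathrm{acc}}\ge 2\,\BE_{D_1}P_{\mathrm{rej}},\qquad \BE_{D_{-1}}P_{\mathrm{acc}}\le \tfrac12\,\BE_{D_{-1}}P_{\mathrm{rej}}.
\]
The postselection probability is at least $2^{-p(n)}>0$ everywhere, so $\BE_{D_{\pm1}}P_{\mathrm{rej}}$ can be zero only if the corresponding $P_{\mathrm{acc}}$ expectation is positive. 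Those degenerate cases are handled by the same multiplicative comparison: if one expectation vanishes, so does the other.

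Now we apply the comparison. We use $a=1$, which is an interior regular value because $d$ is even, as noted in \autoref{prop:unitary-promise-in-precisebqp}. Fix $A=2$. For large $n$ we have $d-t\ge m_A$ and $1+t\le 2\sqrt{d-t}$, since $t=\poly(n)$ while $d=2^n$. So \autoref{thm:comparison} applies to both $P_{\mathrm{acc}}$ and $P_{\mathrm{rej}}$ with
\[
\delta=\frac{4t}{d-t}+K_2\frac{t}{(d-t)^{3/2}}=o(1).
\]
Combining this with the inequalities above gives
\[
2\,\BE_{D_1}P_{\mathrm{rej}}\le \BE_{D_1}P_{\mathrm{acc}}\le e^{\delta}\BE_{D_{-1}}P_{\mathrm{acc}}\le \tfrac12 e^{\delta}\BE_{D_{-1}}P_{\mathrm{rej}}\le \tfrac12 e^{2\delta}\BE_{D_1}P_{\mathrm{rej}}.
\]
Thus $4\le e^{2\delta}$ whenever $\BE_{D_1}P_{\mathrm{rej}}>0$. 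If instead $\BE_{D_1}P_{\mathrm{rej}}=0$, the same chain shows $\BE_{D_{-1}}P_{\mathrm{rej}}=0$, and then $\BE_{D_{-1}}P_{\mathrm{acc}}=0$ as well. That makes the expected postselection probability zero under $D_{-1}$, contradicting the $2^{-p(n)}$ lower bound. Since $\delta\to0$, the inequality $4\le e^{2\delta}$ fails for all sufficiently large $n$, which proves the lemma.

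The only step needing care is checking that the hypotheses of \autoref{thm:comparison} hold uniformly in $n$ for polynomial $t$ and exponential $d$. The exponential gap between $t$ and $d$ makes this check routine.
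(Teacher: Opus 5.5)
Your proposal is correct and follows the paper's proof: you write the joint accept/reject-and-postselect probabilities as sums of squared moduli of conjugate-free polynomials of degree at most $t$, apply \autoref{thm:comparison} to each, and chain the correctness inequalities to force $e^{2\delta}\geq 4$ while $\delta=o(1)$. The differences are cosmetic: you take $A=2$ where the paper takes $A=1$, and you treat the zero-expectation case explicitly, whereas the paper just notes that $\BE_{D_{-1}}Q>0$ by the postselection bound and transfers this to $\BE_{D_1}Q>0$.
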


\begin{proof}
Let an algorithm make $t$ queries, and let $P(U)$ and $Q(U)$ be the joint
probabilities of postselection and acceptance, and of postselection and
rejection, respectively.  Since neither $U^\dagger$ nor $\overline U$ is
available, $P$ and $Q$ are sums of squared moduli of holomorphic polynomials of
degree at most $t$.  Bounded-error correctness gives $P\geq2Q$ on
$\operatorname{supp}(D_1)$ and $Q\geq2P$ on
$\operatorname{supp}(D_{-1})$.  Thus correctness requires the balance between
acceptance and rejection to reverse across the two distributions.

For $d=2^n$ and $t=\operatorname{poly}(n)$, the hypotheses of
\autoref{thm:comparison} hold with $a=A=1$ for all sufficiently large $n$.
Fix such an $n$ and write $\delta=\delta_{d,t,1}$.  Applying the theorem once
to $P$ and once to $Q$ gives
\[
  2\BE_{D_1}Q
  \leq \BE_{D_1}P
  \leq e^\delta\BE_{D_{-1}}P
  \leq \frac{e^\delta}{2}\BE_{D_{-1}}Q
  \leq \frac{e^{2\delta}}{2}\BE_{D_1}Q.
\]
The postselection requirement gives $\BE_{D_{-1}}Q>0$, so the comparison
theorem gives $\BE_{D_1}Q>0$.  Hence correctness requires
$e^{2\delta}\geq4$.

However,
\[
  \delta_{d,t,1}
  =\frac{4t}{d-t}
   +K_1\frac{t}{(d-t)^{3/2}}
  =o(1),
\]
so the comparison theorem says that the relevant expectations become
arbitrarily close.  This contradicts $e^{2\delta}\geq4$, the fixed separation
that correctness demands.
\end{proof}
\subsection{Proof of unitary oracle separation}
The proof is essentially the same as \autoref{proof:proof_state_sep}
\begin{thm}\label{thm:unitary-oracle-separation}
There exists a complex diagonal unitary oracle $\CU$ of superpolynomial
dimension, with neither inverse nor conjugate queries, such that
\[
  \PostBQP^{\CU}\subsetneq\PreciseBQP^{\CU}.
\]
\end{thm}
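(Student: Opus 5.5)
The plan is to follow the diagonalization template from the proof of \autoref{thm:q_state_separation}, with \autoref{prop:unitary-promise-in-precisebqp} and \autoref{lem:unitary-promise-lower-bound} in place of \autoref{claim:preciseBQP_algo} and \autoref{lem:query_lower_bound}.

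\textbf{The oracle and the language.} The oracle $\CU=\{U_n\}$ is diagonal. At length $n$, $U_n$ acts on $n+1$ qubits, with $d=2^n$, and has diagonal entries $(w_1,\dots,w_d,v_1,\dots,v_d)$. Define the unary language
\[
L^{\CU}=\{1^n:\Phi(\vw^{(n)},\vv^{(n)})=1\},
\]
and promise that each $U_n$ lies in $\operatorname{supp}(D_1)\cup\operatorname{supp}(D_{-1})$. Both values $\pm1$ are interior regular values, since $d$ is even and the critical values $d-2r$ are even.

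\textbf{Membership in $\PreciseBQP^{\CU}$.} \autoref{prop:unitary-promise-in-precisebqp} gives a one-query circuit. It uses only forward queries and exactly implementable gates (Hadamards). Its acceptance probability is exactly $\tfrac12\pm 2^{-(n+1)}$, with polynomial-time computable thresholds. Hence $L^{\CU}\in\PreciseBQP^{\CU}$ for every such $\CU$.

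\textbf{Diagonalization.} Enumerate the $\PostBQP$ machines $M_1,M_2,\dots$ with polynomial running times $p_i$. We build $\CU$ in stages. At stage $i$, pick $n_i$ larger than every length already fixed, and also large enough that \autoref{lem:unitary-promise-lower-bound} applies to $t=p_i(n_i)$. Lengths not chosen at any stage are set to an arbitrary element of $\operatorname{supp}(D_{-1})$.

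The fixed lengths other than $n_i$ must not give $M_i$ extra power. To ensure this, I would reapply the argument of \autoref{lem:unitary-promise-lower-bound} with those other oracles held fixed. The functions $P$ and $Q$ remain sums of squared moduli of polynomials in the length-$n_i$ entries, with degree at most the number of queries to $U_{n_i}$. The coefficients now depend on the fixed oracles, but the symmetrization in \autoref{lem:coordinate-reduction} and \autoref{thm:comparison} only use the sum-of-squares holomorphic structure, so they still go through. A small subtlety: queries to other lengths are allowed, and each is just a fixed unitary.

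The comparison then forces some $U\in\operatorname{supp}(D_1)\cup\operatorname{supp}(D_{-1})$ on which $M_i$ either violates the bounded-error conditions or postselects with probability below $2^{-\mathrm{poly}}$. Fix $U_{n_i}$ to be that $U$. Then $M_i$ fails to decide $L^{\CU}$ at $1^{n_i}$. Every $\PostBQP$ machine fails somewhere, so $L^{\CU}\notin\PostBQP^{\CU}$.

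\textbf{The reverse containment.} To show $\PostBQP^{\CU}\subseteq\PreciseBQP^{\CU}$, output a fair coin whenever postselection fails. The acceptance probability becomes $\tfrac12+a(b-\tfrac12)$, where $a$ is the postselection probability and $b$ the conditional acceptance probability. Since $a\geq 2^{-\mathrm{poly}}$, this gives an inverse-exponential gap, as in \autoref{thm:equal_with_inverse}. The construction uses no inverse or conjugate queries. Together with the diagonalization, this gives strict containment.

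\textbf{Expected obstacle.} The main obstacle is the existence step in the diagonalization. \autoref{lem:unitary-promise-lower-bound} is stated for an algorithm that queries only $U_n$, not for one that also queries the other, already fixed oracle lengths. The fix is to observe that those queries are fixed unitaries absorbed into the circuit, so the holomorphic-polynomial structure of $P$ and $Q$ in the variables at length $n_i$ is preserved. A secondary check is that supports have measure zero but are nonempty sets. This is enough, because the lemma produces an actual failing oracle and not merely an average-case failure.
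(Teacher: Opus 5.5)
Your proposal is correct and is essentially the paper's own proof. It uses the same language $\{1^n : U_n\in\operatorname{supp}(D_1)\}$ with $d=2^n$, the one-query test of \autoref{prop:unitary-promise-in-precisebqp} for membership in $\PreciseBQP^{\CU}$, diagonalization over fresh, sufficiently large lengths via \autoref{lem:unitary-promise-lower-bound} (leftover lengths taken from $\operatorname{supp}(D_{-1})$), and the fair-coin argument for $\PostBQP^{\CU}\subseteq\PreciseBQP^{\CU}$. The cross-length obstacle you flag does not arise, because the paper's oracle model lets a machine on inputs of length $n$ query only $U_n$; had you allowed cross-length queries, fixing earlier lengths would not suffice, and you would also need to fix every length reachable on $1^{n_i}$ before choosing $U_{n_i}$ and take $n_{i+1}$ beyond all of them, so later stages cannot undo stage $i$.
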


\begin{proof}[Proof of \autoref{thm:unitary-oracle-separation}]
The oracle $\CU=\{U_n\}_{n\geq1}$ will encode one bit at each input length.
With $d=2^n$, let
\[
  L^{\CU}=\{1^n:U_n\in\operatorname{supp}(D_1)\}.
\]
Each $U_n$ acts on a block qubit and a $d$-dimensional index register, so its
dimension is $2^{n+1}$.

Let $M_1,M_2,\ldots$ be an enumeration of all clocked polynomial-time
\PostBQP~oracle machines in the model without inverse or conjugate queries.
We build $\CU$ iteratively.  At stage $i$, choose a fresh sufficiently large
length $n_i$.  By \autoref{lem:unitary-promise-lower-bound}, there is a
$U_{n_i}\in\operatorname{supp}(D_1)\cup\operatorname{supp}(D_{-1})$ on which
$M_i$ fails; set the length-$n_i$ component of $\CU$ to this oracle.  At every
unused length, choose any element of $\operatorname{supp}(D_{-1})$.  Hence
every \PostBQP~machine fails on at least one input, and
$L^{\CU}\notin\PostBQP^{\CU}$.

On the other hand, \autoref{prop:unitary-promise-in-precisebqp} decides
$L^{\CU}$ with one query, using thresholds
\[
  c(n)=\frac12+2^{-(n+1)},
  \qquad
  s(n)=\frac12-2^{-(n+1)},
\]
whose gap is $2^{-n}$.  Thus $L^{\CU}\in\PreciseBQP^{\CU}$.  Finally,
$\PostBQP^{\CU}\subseteq\PreciseBQP^{\CU}$ follows by outputting an unbiased
bit when postselection fails; the inverse-exponential postselection
probability gives an inverse-exponential acceptance gap.  This proves the
strict containment.
\end{proof}

\section*{AI Disclosure}
The proofs of equivalence of $\PostBQP$ and $\PreciseBQP$ with respect to various pantheons were human-generated.
GPT 5.1 and 5.6 provided assistance for the separation of $\PostBQP$ vs. $\PreciseBQP$, helping with key steps of the proof which the authors subsequently distilled and simplified. 
The authors verified the correctness and originality of all content including references.

\section*{Acknowledgments}
We thank Lijie Chen, William Kretschmer, Tony Metger, Patrick Rall, Umesh Vazirani, John Bostanci, and Mark Zhandry for helpful discussions. This material is based upon work partially supported by the National Science Foundation under award No. 2016245 and 2440805  by the Air Force Office of Scientific Research under grant agreement FA9550-21-1-0392, FA9550-24-1-0089, and FA9550-23-1-0363, by the DOE Office of Science under grant agreement DE-SC0025934, NSF awards CCF-2530159, CCF-2144219, and CCF-2329939, the Sloan Foundation, and by the Shoucheng Zhang Graduate Fellowship.

\bibliographystyle{amsalpha}

\bibliography{refs}

\appendix

\section{Proof of the central density estimate}
\label{app:direct-score}

This appendix proves \autoref{prop:direct-score}.  Recall that
\(S_m=\sum_{i=1}^m\cos\theta_i\), where the \(\theta_i\) are independent and
uniform on \([0,2\pi)\), and that \(f_m\) denotes the density of \(S_m\).

Let $J_0$ denote the Bessel function of the first kind of order zero, defined by
\[
  J_0(\xi)
  =\frac{1}{2\pi}\int_0^{2\pi}e^{i\xi\cos\theta}\,d\theta.
\]
Thus $J_0$ is the characteristic function of $\cos\theta$.  By independence,
the characteristic function of $S_m$ is $J_0(\xi)^m$, and Fourier inversion
gives
\[
  f_m(x)=\frac{1}{2\pi}\int_{\mathbb R}
  e^{-i\xi x}J_0(\xi)^m\,d\xi.
\]

We first record the regularity needed to differentiate this representation.

\begin{claim}\label{lem:fourier-regularity}
For every integer $m\geq5$, one has
$\xi J_0(\xi)^m\in L^1(\mathbb R)$ and $f_m\in C^1(\mathbb R)$.
\end{claim}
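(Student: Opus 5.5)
The plan is to show that $J_0(\xi)$ decays like $|\xi|^{-1/2}$ at infinity. Then $|\xi J_0(\xi)^m|\lesssim |\xi|^{1-m/2}$, which is integrable for $m\geq5$ because $1-m/2\leq -3/2<-1$. Differentiation under the integral sign then gives $f_m\in C^1$.

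\textbf{Step 1 (decay of $J_0$).} We will show $|J_0(\xi)|\leq C|\xi|^{-1/2}$ for $|\xi|\geq1$. The quickest self-contained route is stationary phase on $\int_0^{2\pi}e^{i\xi\cos\theta}\,d\theta$. Split the circle into small arcs around the two critical points $\theta=0,\pi$ and the complementary region. On the complement, $|\sin\theta|$ is bounded below, so integrating by parts once (writing $e^{i\xi\cos\theta}=\frac{-1}{i\xi\sin\theta}\frac{d}{d\theta}e^{i\xi\cos\theta}$) gives a contribution of order $O(1/|\xi|)$ once the arc width is fixed. Near each critical point we use the van der Corput lemma: the phase satisfies $|(\cos\theta)''|=|\cos\theta|\geq 1/2$ there, so the contribution is at most $C|\xi|^{-1/2}$. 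Alternatively we may cite the standard asymptotic $J_0(\xi)=\sqrt{2/(\pi\xi)}\cos(\xi-\pi/4)+O(\xi^{-3/2})$. Together with the trivial bound $|J_0|\leq1$, this yields $|J_0(\xi)|\leq \min(1,C|\xi|^{-1/2})$ for all real $\xi$; note that $J_0$ is even.

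\textbf{Step 2 (integrability).} For $|\xi|\leq1$ the integrand $|\xi J_0^m|$ is bounded by $1$. For $|\xi|\geq1$ it is bounded by $C^m|\xi|^{1-m/2}$, which is integrable at infinity exactly when $m/2-1>1$, i.e. $m>4$. Hence $\xi J_0(\xi)^m\in L^1$ for $m\geq5$, and a fortiori $J_0^m\in L^1$. This also justifies the Fourier inversion formula for $f_m$ as a genuine continuous density, since the characteristic function is integrable.

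\textbf{Step 3 ($C^1$).} Define $f_m(x)=\frac1{2\pi}\int e^{-i\xi x}J_0(\xi)^m\,d\xi$. The $x$-derivative of the integrand is $-i\xi e^{-i\xi x}J_0^m$, which is dominated by $|\xi J_0^m|\in L^1$ uniformly in $x$. Dominated convergence therefore gives differentiability, with
\[
f_m'(x)=\frac{-i}{2\pi}\int\xi e^{-i\xi x}J_0(\xi)^m\,d\xi.
\]
Continuity of $f_m'$ then follows from another application of dominated convergence, or from Riemann--Lebesgue-type continuity of Fourier transforms of $L^1$ functions.

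The main obstacle is Step 1, the uniform $|\xi|^{-1/2}$ decay. If a self-contained proof is wanted, the van der Corput estimate near $\theta\in\{0,\pi\}$ is the only nontrivial ingredient.
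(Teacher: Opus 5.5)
Your proposal is correct and follows essentially the same route as the paper. Both arguments get the $|\xi|^{-1/2}$ decay of $J_0$ from its large-argument asymptotic together with evenness, bound the integrand near the origin by $|J_0|\leq 1$, use that $|\xi|^{1-m/2}$ is integrable at infinity for $m\geq5$, and then differentiate under the Fourier integral; your optional van der Corput derivation of the decay and your remark that integrability of $J_0^m$ justifies the inversion formula add self-containment but do not change the argument.
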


\begin{proof}
As $\xi\to+\infty$, the asymptotic expansion
\[
  J_0(\xi)=\sqrt{\frac{2}{\pi \xi}}
  \left[\cos\!\left(\xi-\frac{\pi}{4}\right)+O(\xi^{-1})\right]
\]
and the evenness of $J_0$ imply $|J_0(\xi)|\leq C|\xi|^{-1/2}$ for
$|\xi|\geq1$.  Hence
\[
  |\xi J_0(\xi)^m|\leq C^m|\xi|^{1-m/2}.
\]
For $m\geq5$, the right-hand side is integrable at infinity; local
integrability at the origin follows from the boundedness of $J_0$.  Fourier
inversion and differentiation under the integral therefore yield
$f_m\in C^1(\mathbb R)$.
\end{proof}

\begin{proof}[Proof of \autoref{prop:direct-score}.]
Put
\[
  R_m(x)=f_m'(x)+\frac{2x}{m}f_m(x).
\]
The function $R_m$ measures the error in the Gaussian differential equation.
We bound that error through the Fourier representation.
Set $X_j=\cos\theta_j$.  The characteristic function of $X_j$ is
\[
  \BE e^{i\xi X_j}
  =\frac{1}{2\pi}\int_0^{2\pi}e^{i\xi \cos\theta}\,d\theta
  =J_0(\xi).
\]
Consequently, independence gives
\[
  \BE e^{i\xi S_m}
  =\prod_{j=1}^m\BE e^{i\xi X_j}
  =J_0(\xi)^m.
\]
Fourier inversion therefore yields
\begin{equation}\label{eq:direct-bessel-inversion}
  f_m(x)=\frac{1}{2\pi}\int_{\mathbb R}
    e^{-i\xi x}J_0(\xi)^m\,d\xi.
\end{equation}

Differentiating \eqref{eq:direct-bessel-inversion} with respect to $x$ gives
\begin{equation}\label{eq:fm-prime-fourier}
  f_m'(x)
  =-\frac{i}{2\pi}\int_{\mathbb R}
    e^{-i\xi x}\xi J_0(\xi)^m\,d\xi.
\end{equation}
On the other hand, the identity
$xe^{-i\xi x}=i\,\partial_\xi(e^{-i\xi x})$ and integration by parts give
\begin{align}
  \frac{2x}{m}f_m(x)
    &=\frac{i}{\pi m}\int_{\mathbb R}
      \partial_\xi(e^{-i\xi x})J_0(\xi)^m\,d\xi \notag\\
    &=-\frac{i}{\pi m}\int_{\mathbb R}
      e^{-i\xi x}\frac{d}{d\xi}\bigl(J_0(\xi)^m\bigr)\,d\xi \notag\\
    &=-\frac{i}{\pi}\int_{\mathbb R}
      e^{-i\xi x}J_0(\xi)^{m-1}J_0'(\xi)\,d\xi.
      \label{eq:x-fm-fourier}
\end{align}
Combining \eqref{eq:fm-prime-fourier} and \eqref{eq:x-fm-fourier} gives
\begin{equation}\label{eq:score-fourier}
  R_m(x)
  =-\frac{i}{2\pi}\int_{\mathbb R}e^{-i\xi x}J_0(\xi)^{m-1}
    \bigl(\xi J_0(\xi)+2J_0'(\xi)\bigr)\,d\xi.
\end{equation}
\autoref{lem:fourier-regularity} justifies differentiation under the integral for
$m\geq5$; the bound $J_0(\xi)=O(|\xi|^{-1/2})$ also makes the boundary term in the
integration by parts vanish.  The power series at the origin are
\begin{align*}
  \xi J_0(\xi)&=\xi-\frac{\xi^3}{4}+\frac{\xi^5}{64}+O(\xi^7),\\
  2J_0'(\xi)&=-\xi+\frac{\xi^3}{8}-\frac{\xi^5}{192}+O(\xi^7).
\end{align*}
Therefore the linear terms cancel, and
\begin{equation}\label{eq:bessel-score-cancellation}
  \xi J_0(\xi)+2J_0'(\xi)
  =-\frac{\xi^3}{8}+\frac{\xi^5}{96}+O(\xi^7)
  =-\frac{\xi^3}{8}+O(\xi^5).
\end{equation}
This cancellation is what makes the final error smaller than the main
Gaussian term.  To turn it into a uniform bound, write
\[
  B(\xi)=\xi J_0(\xi)+2J_0'(\xi).
\]
The expansion $J_0(\xi)=1-\xi^2/4+O(\xi^4)$ and
\eqref{eq:bessel-score-cancellation} imply the existence of absolute constants
$\delta\in(0,1)$, $c_0>0$, and $C_0<\infty$ such that, for $|\xi|\leq\delta$,
\[
  |J_0(\xi)|\leq e^{-c_0\xi^2},
  \qquad
  |B(\xi)|\leq C_0|\xi|^3.
\]
Consequently,
\begin{align}
  \int_{|\xi|\leq\delta}|J_0(\xi)|^{m-1}|B(\xi)|\,d\xi
  &\leq C_0\int_{\mathbb R}|\xi|^3e^{-c_0(m-1)\xi^2}\,d\xi \notag\\
  &=\frac{C_0}{c_0^2(m-1)^2}
  \leq \frac{C_1}{m^2}.
  \label{eq:score-small-frequency}
\end{align}

Choose $K$ so that, for $|\xi|\geq1$,
\[
  |J_0(\xi)|\leq K|\xi|^{-1/2},
  \qquad
  |J_0'(\xi)|\leq K|\xi|^{-1/2},
\]
and fix
\[
  R\geq\max\{1,4K^2\}.
\]
The characteristic-function representation of $J_0$ implies
$|J_0(\xi)|<1$ for $\xi\neq0$: equality would force $e^{i\xi \cos\theta}$ to be
constant almost surely.  Hence continuity gives
\[
  \rho:=\sup_{\delta\leq|\xi|\leq R}|J_0(\xi)|<1.
\]
If $M_R=\sup_{|\xi|\leq R}|B(\xi)|$, the intermediate-frequency contribution is
bounded by
\begin{equation}\label{eq:score-middle-frequency}
  \int_{\delta\leq|\xi|\leq R}|J_0(\xi)|^{m-1}|B(\xi)|\,d\xi
  \leq 2R M_R\rho^{m-1}
  \leq \frac{C_2}{m^2},
\end{equation}
where the last inequality follows from
$\sup_{m\geq5}m^2\rho^{m-1}<\infty$.

For $|\xi|\geq R$, the preceding Bessel bounds imply
\[
  |B(\xi)|\leq 3K|\xi|^{1/2}.
\]
Thus, for $m\geq5$,
\begin{align}
  \int_{|\xi|\geq R}|J_0(\xi)|^{m-1}|B(\xi)|\,d\xi
  &\leq 6K^m\int_R^\infty \xi^{-(m-2)/2}\,d\xi \notag\\
  &=\frac{12K^m}{m-4}R^{-(m-4)/2} \notag\\
  &=\frac{12R^2}{m-4}
    \left(\frac{K}{\sqrt R}\right)^m
  \leq \frac{C_3}{m^2},
  \label{eq:score-large-frequency}
\end{align}
because $K/\sqrt R\leq1/2$ and
$\sup_{m\geq5}m^2 2^{-m}/(m-4)<\infty$.
It follows from \eqref{eq:score-small-frequency}--
\eqref{eq:score-large-frequency} and \eqref{eq:score-fourier} that, uniformly
in $x$ and for every $m\geq5$,
\begin{equation}\label{eq:R-bound}
  |R_m(x)|\leq \frac{C}{m^2}.
\end{equation}

We also need a lower bound on $f_m$ before dividing by it.  The variance gives
such a bound somewhere near the center, and the estimate on $R_m$ then carries
that bound across the whole central window.
Since $\BE S_m^2=m/2$, Chebyshev's inequality gives
\[
  \int_{-\sqrt m}^{\sqrt m}f_m(x)\,dx
  =\Pr\{|S_m|\leq\sqrt m\}\geq\frac12.
\]
Hence there exists $x_0\in[-\sqrt m,\sqrt m]$ such that
$f_m(x_0)\geq1/(4\sqrt m)$.  The definition of $R_m$ gives the exact
integrating-factor identity
\begin{equation}\label{eq:integrating-factor}
  e^{x^2/m}f_m(x)
  =e^{x_0^2/m}f_m(x_0)
   +\int_{x_0}^{x}e^{u^2/m}R_m(u)\,du.
\end{equation}
For $|x|\leq A\sqrt m$, \eqref{eq:R-bound} bounds the absolute value of the
integral by
\[
  \frac{C(A+1)e^{\max\{A^2,1\}}}{m^{3/2}}.
\]
Choose $m_A\geq5$ sufficiently large that this quantity is at most
$1/(8\sqrt m)$ for every $m\geq m_A$.  Then
\eqref{eq:integrating-factor} implies
\begin{equation}\label{eq:central-height}
  f_m(x)\geq\frac{e^{-A^2}}{8\sqrt m},
  \qquad |x|\leq A\sqrt m.
\end{equation}
Dividing \eqref{eq:R-bound} by \eqref{eq:central-height} proves
\eqref{eq:direct-score}, with $B_A=8Ce^{A^2}$.
\end{proof}
\end{document}